\long\def\comment#1{}
\newtheorem{theorem}{Theorem}
\newtheorem{corollary}{Corollary}
\newtheorem{lemma}{Lemma}
\theoremstyle{definition}
\def\LASSO{\mbox{square-root lasso}}
\def\be{\begin{eqnarray}}
\def\ee{\end{eqnarray}}
\def\ba{\begin{array}}
\def\ea{\end{array}}
\def\bs{\begin{align}\begin{split}\nonumber}
\def\bsnumber{\begin{align}\begin{split}}
\def\es{\end{split}\end{align}}
\def\({\left(}
\def\){\right)}
\def\[{\left[}
\def\]{\right]}
\def\hat{\widehat}
\def\leq{\leqslant}
\def\geq{\geqslant}
\def\cc{\bar{c}}
\def\Pr{{\rm pr}}
\def\En{E_n}
\def\Ep{E}
\def\RR{{\rm I\kern-0.18em R}}
\def\supp{{\rm supp}}
\def\bsqrt#1{\{ #1 \}^{1/2}}
\def\bbsqrt#1{\left\{ #1 \right\}^{1/2}}
\begin{document}

\title[Square-root lasso]{Square-root lasso: pivotal recovery of sparse signals via conic programming}

\author{Alexandre Belloni, Victor Chernozhukov and Lie Wang}

\maketitle

\begin{abstract}
We propose a pivotal method for estimating high-dimensional sparse
linear regression models, where the overall number of regressors
$p$ is large, possibly much larger than $n$, but only $s$
regressors are significant. The method is a modification of the lasso,
called the square-root lasso.  The method is pivotal in that it neither relies on the
knowledge of the standard deviation $\sigma$ or nor does it need to pre-estimate $\sigma$.
Moreover, the method does not rely on normality or sub-Gaussianity of noise.  It achieves  near-oracle
performance, attaining the convergence rate $\sigma
\{(s/n)\log p\}^{1/2}$ in the prediction norm, and thus matching  the performance of the
lasso with known $\sigma$.  These performance results are valid for both Gaussian and non-Gaussian errors, under
some mild moment restrictions. We formulate the square-root lasso as a solution to a
convex conic programming problem, which allows us to implement the estimator using
efficient algorithmic methods, such as
interior-point and first-order methods.
\end{abstract}


\section{Introduction}

We consider the linear regression model for outcome $y_i$ given fixed $p$-dimensional regressors $x_i$:
\begin{equation}\label{Def:Reg}
y_i = x_i'\beta_0 + \sigma \epsilon_i \ \ (i=1,...,n)
\end{equation}
with independent and identically distributed noise $\epsilon_i$ $(i =1,...,n)$ having law $F_0$ such that
\begin{equation}\label{Def:Error}
\Ep_{F_0} (\epsilon_i) = 0 \ , \ \Ep_{F_0}(\epsilon_i^2) =1.
\end{equation}
 The vector $\beta_0 \in \Bbb{R}^p$ is the unknown true parameter value, and $\sigma>0$ is the unknown standard deviation. The regressors $x_i$ are $p$-dimensional,
$
x_i = (x_{ij}, j=1,...,p)',
$
where the dimension $p$ is possibly much larger than the sample
size $n$. Accordingly, the true parameter value $\beta_0$ lies in
a very high-dimensional space $\Bbb{R}^p$.  However, the key
assumption that makes the estimation possible is the sparsity of $\beta_0$:
\begin{equation}\label{Def:T}
T = \supp(\beta_0) \text{ has } s<n \text{ elements}.
\end{equation}
The identity $T$ of the significant regressors is unknown.
Throughout, without loss of generality, we normalize  \begin{equation}\label{Def:Normalize}
 \frac{1}{n} \sum_{i=1}^n
x_{ij}^2 = 1 \ \ \ (j=1,\ldots,p).\end{equation} In making asymptotic
statements below we allow for $s\to \infty$ and  $p \to \infty$ as
$n \to \infty$.

The ordinary least squares estimator is not consistent
for estimating $\beta_0$ in the setting with $p > n$. The lasso estimator \cite{T1996} can restore consistency under mild conditions by penalizing through the sum of absolute parameter
values:\begin{equation}\label{Def:LASSOmain} \bar \beta \in \arg
\min_{\beta \in \Bbb{R}^p} \widehat Q (\beta) +  \frac{\lambda}{n}
 \| \beta \|_{1},
\end{equation}
where $ \widehat Q (\beta) = n^{-1} \sum_{i=1}^n (y_i - x_i'\beta)^2$ and $\|\beta\|_{1} = \sum_{j=1}^p | \beta_j|$.  The lasso estimator is computationally attractive because it minimizes a structured convex function.
 Moreover, when errors are normal, $F_0= N(0,1)$, and suitable design conditions hold, if one uses the penalty level
\begin{equation}\label{Def:LASSOLambda}
\lambda = \sigma  c  2 n^{1/2}  \Phi^{-1}(1-\alpha/2p)
\end{equation}
for some constant $c>1$, this estimator achieves the near-oracle performance, namely
 \begin{equation}\label{bound one}
\| \bar \beta - \beta_0 \|_2 \lesssim \sigma \left\{ s \log (2p/\alpha)/n\right\}^{1/2},
 \end{equation}
with probability at least $1-\alpha$.  Remarkably, in  (\ref{bound one}) the overall
number of regressors $p$  shows up
only through a logarithmic factor, so that if $p$ is polynomial in
$n$, the oracle rate is achieved up to a factor of $\log^{1/2} n$.
Recall that the oracle knows the identity $T$ of significant
regressors, and so it can achieve the rate $\sigma (s/n)^{1/2}$.
Result (\ref{bound one}) was demonstrated by \cite{BickelRitovTsybakov2009},
and closely related results were given in \cite{MY2007}, and
\cite{ZhangHuang2006}.
\cite{CandesTao2007}, \cite{vdGeer}, \cite{Koltchinskii2009}, \cite{BuneaTsybakovWegkamp2007}, \cite{ZhaoYu2006},
\cite{HHS2008}, \cite{Wainright2006}, and \cite{Zhang2009} contain other fundamental results obtained for
related problems; see \cite{BickelRitovTsybakov2009} for
further references.

Despite these attractive features, the lasso construction
(\ref{Def:LASSOmain}) -- (\ref{Def:LASSOLambda})  relies on
knowing the standard deviation $\sigma$ of the noise. Estimation
of $\sigma$ is non-trivial when $p$ is large, particularly when $p \gg n$, and
remains an outstanding practical and theoretical problem. The
estimator we propose in this paper, the square-root lasso, eliminates the need to
know or to pre-estimate $\sigma$. In addition, by  using moderate
deviation theory, we can dispense with the normality assumption $F_0 = \Phi$ under certain conditions.

The $\LASSO$ estimator of $\beta_0$ is defined as the solution to the optimization problem
\begin{equation}\label{Def:LASSOmod}
\hat \beta \in  \arg\min_{\beta \in \RR^p} \{\hat Q(\beta)\}^{1/2} +
\frac{\lambda}{n}  \|\beta\|_1,\end{equation}
with the penalty level
\begin{equation}\label{our penalty}
\lambda = c  n^{1/2}  \Phi^{-1}(1-\alpha/2p),
 \end{equation}
for some constant $c>1$. The
penalty level in (\ref{our penalty}) is independent of $\sigma$, in contrast to (\ref{Def:LASSOLambda}), and hence is pivotal with respect
to this parameter.  Furthermore, under reasonable conditions, the proposed penalty level (\ref{our penalty}) will also be valid  asymptotically without imposing normality $F_0 = \Phi$, by virtue of  moderate deviation theory.

We will show that the $\LASSO$ estimator achieves the near-oracle rates of
convergence under suitable design conditions and suitable
conditions on $F_0$ that extend significantly beyond normality:
 \begin{equation}
\| \widehat \beta - \beta \|_2 \lesssim \sigma \left\{s \log (2p/\alpha)/n\right\}^{1/2} ,
 \end{equation}
with probability approaching $1-\alpha$. Thus, this estimator matches the near-oracle
performance of lasso, even though the noise level $\sigma$ is unknown.  This is the main result
of this paper.  It is important to emphasize here that this result is not a direct consequence
of the analogous result for the lasso. Indeed, for a given value of the penalty level, the statistical structure
of the $\LASSO$   is different from that of the lasso, and so our proofs are also different.

Importantly, despite taking the square-root of the least squares criterion function, the problem (\ref{Def:LASSOmod}) retains global convexity, making
the estimator computationally attractive. The second main result of this paper
is to formulate the $\LASSO$ as a solution to a conic programming  problem.
Conic programming can be seen as linear programming  with conic constraints,
so it generalizes canonical linear programming with non-negative orthant constraints, and inherits a rich set of theoretical properties and algorithmic methods from linear programming. In our case, the constraints
take the form of a second-order cone, leading to a particular,  highly tractable, form of conic programming. In turn, this allows us to implement the estimator using
efficient algorithmic methods, such as
interior-point methods, which provide polynomial-time
bounds on computational time \cite{NesterovNemirovski,RenegarBook}, and modern first-order methods
\cite{Nesterov2005,Nesterov2007,LanLuMonteiro,BeckTeboulle2009}.

 In what follows, all true parameter values, such as $\beta_0$, $\sigma$, $F_0$, are
implicitly indexed by the sample size $n$, but we omit the index in our notation whenever
this does not cause confusion. The regressors $x_i$ $(i =1,\ldots,n)$
are taken to be fixed throughout. This includes random design as a
special case, where we condition on the realized values of the
regressors. In making asymptotic statements, we assume that $n \to
\infty$ and $p=p_n  \to \infty$, and we also allow for $s=s_n \to
\infty$. The notation $o(\cdot)$ is defined with respect to $n\to \infty$. We use the notation $(a)_+ = \max(a,0)$, $a \vee b =
\max(a, b)$ and $a \wedge b = \min(a,b)$. The
$\ell_2$-norm is denoted by $\|\|_2$, and $\ell_\infty$ norm by $\|\|_{\infty}$.
Given a vector $\delta \in \RR^p$ and a set of indices $T
\subset \{1,\ldots,p\}$, we denote by $\delta_T$ the vector in
which $\delta_{Tj} = \delta_j$ if $j\in T$, $\delta_{Tj}=0$ if
$j\notin T$. We also use $\En(f)= \En\{f(z)\} = \sum_{i=1}^n
f(z_i)/n$. We use $a \lesssim b$ to denote $a \leqslant c b$ for
some constant $c>0$ that does not depend on $n$.

\section{The choice of penalty level}

\subsection{The general principle and heuristics}

The key quantity determining the choice of the penalty level for $\LASSO$ is the score, the gradient of $\hat Q^{1/2}$
evaluated at the true parameter value $\beta = \beta_0$:
$$ \tilde S= \nabla \widehat Q^{1/2}(\beta_0) =  \frac{ \nabla{\widehat Q(\beta_0)}}{ 2 \bsqrt{\widehat Q(\beta_0)}}=\frac{\En(x \sigma \epsilon)}{\bsqrt{\En(\sigma^2 \epsilon^2)}}=  \frac{\En(x \epsilon)}{\bsqrt{\En(\epsilon^2)}}. $$
The score $\tilde S$ does not depend on the unknown standard deviation $\sigma$ or the unknown true parameter value $\beta_0$,
and therefore is pivotal with respect to $(\beta_0,\sigma)$.  Under the classical normality assumption, namely
$F_0 = \Phi$, the score is in fact completely pivotal, conditional on $X$.  This means that in principle we know the distribution of $\tilde S$ in this case, or at least we can compute it by simulation.

The score $\tilde S$ summarizes the estimation noise in our problem, and we may set the penalty level  $\lambda/n$
to overcome it. For reasons of efficiency,
we set $\lambda/n$ at the smallest level that dominates the estimation noise, namely we choose the smallest $\lambda$ such that
\begin{equation}\label{BRT principle}
\lambda \geq c\Lambda,  \ \ \ \Lambda = n\| \tilde
S\|_{\infty}, \ \ \
\end{equation}
with a high probability, say $1- \alpha$, where $\Lambda$ is the
maximal score scaled by $n$, and $c>1$ is a theoretical constant of
\cite{BickelRitovTsybakov2009} to be stated later. The principle
of setting $\lambda$ to dominate the score of the criterion
function is motivated by \cite{BickelRitovTsybakov2009}'s choice
of penalty level for the lasso. This general principle carries over to other convex problems, including ours, and
that leads to the optimal, near-oracle, performance of other $\ell_1$-penalized estimators.

 In the case of the $\LASSO$ the maximal score is pivotal, so the
penalty level in (\ref{BRT principle}) must also be pivotal.
We used the square-root transformation 
in the $\LASSO$ formulation (\ref{Def:LASSOmod}) precisely to
guarantee this pivotality.  In contrast, for lasso,
the score $S = \nabla \hat Q(\beta_0) =
2\sigma  \En (x\epsilon) $ is obviously non-pivotal, since it
depends on $\sigma$. Thus, the penalty level for lasso
must be non-pivotal.  These theoretical differences translate
into  obvious practical differences.  In the lasso,
we need to guess conservative upper bounds $\bar \sigma$ on $\sigma$, or we need to
 use preliminary estimation of $\sigma$ using a pilot lasso, which uses
 a conservative upper bound $\bar \sigma$ on $\sigma$.  In the $\LASSO$, none of these is needed.   Finally, the use of pivotality principle for
constructing the penalty level is also fruitful in other problems with pivotal scores, for example, median regression \cite{BC-SparseQR}.

The rule (\ref{BRT principle}) is not practical, since we do
not observe $\Lambda$ directly. However, we can proceed as follows:

1.   When we know the distribution of errors exactly,
e.g., $F_0 = \Phi$, we propose to set $\lambda$ as $c$ times
the $(1-\alpha)$ quantile of $\Lambda$ given $X$.
This choice of the penalty level precisely
implements (\ref{BRT principle}), and is easy to compute by
simulation.

2.  When we do not know $F_0$ exactly, but instead
know that $F_0$ is an element of some family $\mathcal{F}$, we can
rely on either finite-sample or asymptotic upper bounds on
quantiles of $\Lambda$ given $X$. For example, as mentioned
in the introduction, under some mild conditions on $\mathcal{F}$,  $\lambda = c  n^{1/2}
\Phi^{-1}(1-\alpha/2p)$ is a valid asymptotic choice.

What follows below elaborates these approaches. Before describing the details, it is useful to mention some heuristics for the principle (\ref{BRT principle}). These arise from considering the simplest case, where none of the regressors are significant, so that $\beta_0 =0$.
We want our estimator to perform at a near-oracle level in all cases, including this one. Here the oracle estimator is $\tilde \beta= \beta_0 =0$.  We also want $\widehat \beta = \beta_0 = 0$ in this case, at least with a high probability $1-\alpha$.  From the subgradient optimality conditions of (\ref{Def:LASSOmod}), in order for this to be true we must have
$
- \tilde S_j + \lambda/n \geq 0 \text{ and } \tilde S_j + \lambda/n \geq 0$ $(j=1,\ldots,p)$.
We can only guarantee this by setting the penalty level $\lambda/n$ such that
$
\lambda \geq n \max_{1 \leq j\leq p} |\tilde S_j| = n \|\tilde S\|_{\infty}$ with probability at least $1-\alpha$.
This is precisely the rule (\ref{BRT principle}), and,  as it turns out, this delivers near-oracle performance more generally, when $\beta_0 \neq 0$.

\subsection{The formal choice of penalty level and its properties} In order to describe our choice of $\lambda$ formally, define
 for $0 < \alpha < 1$
\begin{eqnarray}
&&\Lambda_F(1-\alpha\mid X) =  (1-\alpha) \text{--quantile of } \Lambda_F \mid X,  \label{LambdaDefF} \\
&& \Lambda(1-\alpha) =  n^{1/2} \Phi^{-1}(1-\alpha/2p) \leq \bsqrt{2 n \log(2p/\alpha)},\label{LambdaDefAsmp}
\end{eqnarray}
where $\Lambda_F = n\|\En(x\xi)\|_\infty/\{\En(\xi^2)\}^{1/2}$, with independent and identically distributed $\xi_i$ $(i=1,\ldots,n)$ having law $F$. We can compute (\ref{LambdaDefF}) by simulation.

In the normal case, $F_0 = \Phi$,  $\lambda$ can be either of
\begin{equation}\label{normal}
\begin{array}{cl}
& \lambda = c \Lambda_\Phi(1-\alpha\mid X), \\
& \lambda = c \Lambda(1-\alpha) = c   n^{1/2} \Phi^{-1}(1-\alpha/2p),
\end{array}
\end{equation}
 which we call here the exact and asymptotic options, respectively. The parameter $1-\alpha$ is a confidence level which guarantees near-oracle performance with probability $1-\alpha$; we recommend
$1-\alpha = 0.95$.
The constant  $c>1$ is a theoretical constant of \cite{BickelRitovTsybakov2009}, which is needed to guarantee
a regularization event introduced in the next section; we recommend $c=1.1$. The options in (\ref{normal}) are valid either in finite or large samples under the conditions stated below. They are also supported by the finite-sample experiments reported in Section \ref{Sec:Empirical}. We recommend using the exact option over the asymptotic option, because by construction the former is better tailored to the given sample size $n$ and design matrix $X$. Nonetheless, the asymptotic option is easier to compute. Our theoretical results in section 3 show that the options in (\ref{normal}) lead to near-oracle rates of convergence.

For the asymptotic results, we shall impose the following condition: \\

\textsc{Condition G.} \textit{We have that $\log^2(p/\alpha)\log(1/\alpha) = o(n)$ and $p/\alpha \to \infty$ as $n\to\infty$.} \\

The following lemma shows that the exact and asymptotic options in (\ref{normal}) implement the regularization event $\lambda \geq c \Lambda$
in the Gaussian case with the exact or asymptotic probability $1-\alpha$ respectively.  The lemma also bounds the magnitude of the penalty level
for the exact option, which will be useful for stating bounds on the estimation error.  We assume throughout the paper that $0< \alpha <1$ is bounded
away from 1, but we allow $\alpha$ to approach $0$ as $n$ grows.

\begin{lemma} \label{Lemma: normal} Suppose that $F_0=\Phi$. (i) The exact option in (\ref{normal}) implements $\lambda \geq c \Lambda$ with probability at least $1-\alpha$. (ii) Assume that $p/\alpha>8$. For any $1<\ell<\bsqrt{n/\log(1/\alpha)}$, the asymptotic option in (\ref{normal}) implements $\lambda
\geq c \Lambda$ with probability at least
$$1-\alpha  \tau, \ \  \ \tau = \left\{1+\frac{1}{\log(p/\alpha)}\right\}\frac{\exp[2\log(2p/\alpha)\ell\bsqrt{\log(1/\alpha)/n}]}{1-\ell\bsqrt{\log(1/\alpha)/n}}-\alpha^{\ell^2/4-1},$$
where, under Condition G, we have $\tau = 1 + o(1)$ by setting $\ell\to\infty$ such that  $\ell = o[n^{1/2}/\{\log(p/\alpha)\log^{1/2}(1/\alpha)\}]$ as $n \to \infty$. (iii) Assume that $p/\alpha>8$ and $n>4\log(2/\alpha)$. Then  $$
\Lambda_\Phi(1-\alpha\mid X) \leq \nu \Lambda(1-\alpha) \leq \nu\bsqrt{2 n
\log(2p/\alpha)}, \  \ \nu= \frac{\bsqrt{1+2/\log(2p/\alpha)}}{1-2\bsqrt{\log(2/\alpha)/n}},$$
where under Condition G, $\nu=1+o(1)$ as $n \to \infty$.
\end{lemma}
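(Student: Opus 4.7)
The plan is to exploit the pivotal representation $\Lambda = n^{1/2}\|W\|_\infty/V$, where $W_j = n^{-1/2}\sum_{i=1}^n x_{ij}\epsilon_i$ and $V = \{\En(\epsilon^2)\}^{1/2}$; under $F_0=\Phi$ together with the column normalization $\En(x_{j}^2)=1$, each $W_j$ is marginally $N(0,1)$ and $nV^2\sim\chi^2_n$ is independent of the direction $\epsilon/\|\epsilon\|$. Part (i) is then immediate from the definition of a quantile: because $\lambda=c\Lambda_\Phi(1-\alpha\mid X)\geq c\Lambda$ precisely on $\{\Lambda\leq\Lambda_\Phi(1-\alpha\mid X)\}$, this event has probability at least $1-\alpha$.

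For part (ii), I would set $t=\Phi^{-1}(1-\alpha/2p)$ and $\eta=\ell\{\log(1/\alpha)/n\}^{1/2}$, and use the union bound
\[
\Pr(\Lambda>\Lambda(1-\alpha)) \leq \Pr(\|W\|_\infty > t(1-\eta)) + \Pr(V<1-\eta).
\]
The first summand I would attack by a coordinatewise union bound combined with a two-sided Mills-ratio comparison of $1-\Phi(t(1-\eta))$ to $1-\Phi(t)=\alpha/(2p)$: applying $1-\Phi(s)\leq\phi(s)/s$ at $s=t(1-\eta)$ and $1-\Phi(s)\geq\phi(s)s/(s^2+1)$ at $s=t$, the ratio is bounded by $(1+1/t^2)(1-\eta)^{-1}\exp\{(2\eta-\eta^2)t^2/2\}$. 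Inserting the bound $t^2\leq 2\log(2p/\alpha)$ from (\ref{LambdaDefAsmp}) into the exponential factor, and $t^2\geq\log(p/\alpha)$ (which follows from the lower Mills-ratio bound together with $p/\alpha>8$) into the polynomial factor, reproduces the first term of $\tau$. The second summand is handled by the Laurent--Massart lower-tail bound $\Pr(\chi^2_n\leq n-2(nx)^{1/2})\leq e^{-x}$ applied at $x=n\eta^2/4$, giving $\Pr(V<1-\eta)\leq\alpha^{\ell^2/4}$ and hence the $\alpha\cdot\alpha^{\ell^2/4-1}$ contribution. Condition G and the stated range of $\ell$ then force $\tau=1+o(1)$.

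For part (iii), I would prove the stronger statement $\Pr(\Lambda>\nu\Lambda(1-\alpha))\leq\alpha$, from which the quantile bound follows by definition. Using the same decomposition but with $\eta=2\{\log(2/\alpha)/n\}^{1/2}$ tuned so that Laurent--Massart yields $\Pr(V<1-\eta)\leq\alpha/2$, the explicit form of $\nu$ reduces the remaining event to $\|W\|_\infty>t\{1+2/\log(2p/\alpha)\}^{1/2}$. A Mills-ratio comparison then shows this probability is at most $\alpha/2$, because the shift in threshold contributes a multiplicative saving $\exp\{-t^2/\log(2p/\alpha)\}$ that, once the $(1+1/t^2)$ polynomial correction is absorbed, is swallowed by the factor $1/2$ under $p/\alpha>8$. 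The bound $\Lambda(1-\alpha)\leq\{2n\log(2p/\alpha)\}^{1/2}$ is already recorded in (\ref{LambdaDefAsmp}), so the chain $\Lambda_\Phi(1-\alpha\mid X)\leq\nu\Lambda(1-\alpha)\leq\nu\{2n\log(2p/\alpha)\}^{1/2}$ follows, and Condition G gives $\nu=1+o(1)$.

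The principal difficulty is the Mills-ratio bookkeeping in (ii) and (iii): one must match the specific constants $1+1/\log(p/\alpha)$, $\{1+2/\log(2p/\alpha)\}^{1/2}$ and $2\{\log(2/\alpha)/n\}^{1/2}$ by choosing $\eta$ and $\nu$ so that the Gaussian tail shift and the $\chi^2$ lower-deviation are exactly balanced against $\alpha$ and $\alpha/2$. A naive upper Mills-ratio bound on both sides is insufficient; the polynomial $(1+1/t^2)$ factor only appears if one uses the lower Mills-ratio bound at $t$ and the upper at $t(1-\eta)$, which is the key quantitative step of the argument.
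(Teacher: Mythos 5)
Your proposal follows essentially the same route as the paper's own proof: part (i) by the definition of the conditional quantile, parts (ii) and (iii) by the decomposition $\Pr(\Lambda>\cdot)\leq \Pr\{\|W\|_\infty>t(1-\eta)\}+\Pr(V<1-\eta)$ with the Laurent--Massart $\chi^2(n)$ lower-tail bound at $x=n\eta^2/4$ and the asymmetric Mills-ratio comparison (upper bound at the shifted threshold, lower bound at $t$) producing exactly the factors $\{1+1/\log(p/\alpha)\}$, $\exp\{2\log(2p/\alpha)\eta\}/(1-\eta)$ and the $\alpha^{\ell^2/4}$ term, and part (iii) by showing the exceedance probability at $\nu\Lambda(1-\alpha)$ is at most $\alpha/2+\alpha/2$. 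The only remark worth making is that your $\alpha/2$ bookkeeping for the Gaussian term in (iii) is carried out at the same level of precision as the paper's (which also relies on the crude bounds $\log(p/\alpha)<t_n^2<2\log(2p/\alpha)$ and the residual $1/v'$ factor), so no substantive difference arises.
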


In the non-normal case, $\lambda$ can be any of
\begin{equation}\label{non-normal}
\begin{array}{cl}
& \lambda = c \Lambda_{F}(1-\alpha\mid X),\\
& \lambda = c  \max_{F \in \mathcal{F}}\Lambda_F(1-\alpha\mid X),\\
&
\lambda = c \Lambda(1-\alpha) = c   n^{1/2} \Phi^{-1}(1-\alpha/2p),
\end{array}
\end{equation}
which we call the exact, semi-exact, and asymptotic options, respectively. We set the confidence level $1-\alpha$ and the constant $c>1$  as
before.  The exact option is applicable when $F_0=F$, as for example in the previous normal case. The semi-exact
option is applicable when $F_0$ is a member of some
family $\mathcal{F}$,  or whenever the family $\mathcal{F}$ gives
a more conservative penalty level. We also assume that
$\mathcal{F}$ in (\ref{non-normal}) is either finite or, more
generally, that the maximum in (\ref{non-normal}) is well defined.
For example, in applications, where the regression errors
$\epsilon_i$ are thought of having  a potentially wide range of
tail behavior, it is useful to set $\mathcal{F} = \{t(4), t(8),
t(\infty)\}$ where $t(k)$ denotes the Student distribution with $k$ degrees of freedom. As stated previously, we can compute the quantiles
$\Lambda_F(1-\alpha\mid X)$ by  simulation. Therefore, we can
implement the exact option easily, and if $\mathcal{F}$ is not too
large, we can  also implement the semi-exact option  easily.
Finally, the asymptotic option is applicable when $F_0$ and design
$X$ satisfy Condition M and has the advantage of being trivial to compute.




 For the asymptotic results in the non-normal case, we impose the following moment conditions. \\

\textsc{Condition M.} \textit{There exist a finite constant $q> 2$ such that the law $F_0$ is an element of the family $\mathcal{F}$
 such that $ \sup_{n\geq 1}\sup_{F \in \mathcal{F}}  \Ep_F (|\epsilon|^{q}) < \infty;$ the design $X$ obeys $ \sup_{n \geq 1, 1 \leq j \leq p} \En  (|x_j|^q ) < \infty.$} \\

We also have to restrict the growth of $p$ relative to $n$, and we also assume that $\alpha$ is either bounded away from zero or approaches zero not too rapidly. See also the Supplementary Material for an alternative condition. \\

\textsc{Condition R.} \textit{ As $n\to\infty$,  $p\leq \alpha n^{\eta(q-2)/2}/2$  for some constant $0< \eta < 1$, and $\alpha^{-1} = o[ n^{ \{(q/2-1) \wedge (q/4)\} \vee (q/2-2)}/(\log n)^{q/2}]$,  where $q>2$ is defined in Condition M.} \\

The following lemma shows that the options (\ref{non-normal}) implement the regularization event $\lambda \geq c \Lambda$
in the non-Gaussian case with exact or asymptotic probability $1-\alpha$. In particular, Conditions R and M, through relations (\ref{Eq:PRalpha1}) and (\ref{Eq:PRalpha2}), imply that for any fixed $v>0$, \begin{equation}\label{non-gauss:alpha}\Pr \{ |\En(\epsilon^2)- 1| > v\} = o(\alpha), \ \ \  \ n\to\infty.\end{equation}
 The lemma also bounds the magnitude of the penalty level $\lambda$
for the exact and semi-exact options, which is useful for stating bounds on the estimation error in section 3.

\begin{lemma}\label{Lemma: non-normal} (i) The exact option in (\ref{non-normal}) implements $\lambda \geq c \Lambda$ with probability at least $1-\alpha$, if $F_0 = F$.  (ii) The semi-exact option in (\ref{non-normal}) implements $\lambda \geq c \Lambda$ with probability at least $1-\alpha$, if either $F_0 \in \mathcal{F}$ or $\Lambda_F(1-\alpha\mid X)\geq \Lambda_{F_0}(1-\alpha\mid X)$ for some $F \in \mathcal{F}$.  Suppose further that Conditions M and R hold. Then, (iii) the asymptotic option in (\ref{non-normal}) implements $\lambda \geq c \Lambda$ with probability at least $1-\alpha - o(\alpha)$, and
(iv) the magnitude of the penalty level of the exact and
semi-exact options in (\ref{non-normal}) satisfies the inequality $$ \max_{F \in \mathcal{F}}\Lambda_F(1-\alpha\mid X) \leq \Lambda(1-\alpha)\{1+o(1)\} \leq  \bsqrt{2 n \log(2p/\alpha)}\{1+o(1)\}, n \to \infty.$$
\end{lemma}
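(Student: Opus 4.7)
Parts (i) and (ii) I would dispatch directly from the definition of the quantile. For (i), since $F_0 = F$, the variable $\Lambda$ has the same law as $\Lambda_F \mid X$ (recall that $x_i$ are fixed), so $\Pr(\lambda \geq c\Lambda) = \Pr(c\Lambda_F(1-\alpha\mid X) \geq c\Lambda_F) \geq 1-\alpha$ by definition of the $(1-\alpha)$-quantile. For (ii), in either sub-case we have $\lambda \geq c\Lambda_{F_0}(1-\alpha\mid X)$---either because the max over $\mathcal{F}$ includes the term at $F_0$, or by the direct domination hypothesis---so the same bound follows.

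For (iii), my plan is to decompose
\[
\Lambda = \sqrt{n}\max_{1\leq j\leq p}|T_{n,j}|\sqrt{V_j},\qquad T_{n,j} = \frac{\sum_{i=1}^n x_{ij}\epsilon_i}{\sqrt{\sum_{i=1}^n x_{ij}^2\epsilon_i^2}},\qquad V_j = \frac{\En(x_j^2\epsilon^2)}{\En(\epsilon^2)},
\]
so that each $T_{n,j}$ is a standard self-normalized sum. I would then control the two factors separately. First, by (\ref{non-gauss:alpha}) together with an analogous concentration for $\En(x_j^2\epsilon^2)$ obtained from Condition M by truncation and Chebyshev-- or Bernstein-type inequalities, $\max_j V_j \leq 1 + v$ except on an event of probability $o(\alpha)$. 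Second, I would invoke a Jing--Shao--Wang self-normalized moderate deviation inequality,
\[
\Pr(|T_{n,j}| > x) \leq 2\{1-\Phi(x)\}\{1+o(1)\},
\]
valid uniformly in $j$ and uniformly over $0 \leq x \leq K_n$, where $K_n$ is a power of $n$ determined by $q$. Union-bounding over $j$ at $x = \Phi^{-1}(1-\alpha/(2p))$ gives $\Pr(\max_j|T_{n,j}| > x) \leq \alpha\{1+o(1)\}$; combining with the bound on $V_j$ and the definition $\Lambda(1-\alpha)=\sqrt{n}\,\Phi^{-1}(1-\alpha/(2p))$ yields (iii) after absorbing the remaining slack into $o(\alpha)$.

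Part (iv) I would handle by inverting the same tail bound. Running the moderate-deviation chain on $\Lambda_F$ (with $\xi_i \sim F$ in place of $\epsilon_i$) for each $F \in \mathcal{F}$, with an $o(1)$ that is uniform over $\mathcal{F}$ by the uniform moment bound in Condition M, I obtain $\Pr(\Lambda_F > \Lambda(1-\alpha)\{1+o(1)\}) \leq \alpha$; hence $\Lambda_F(1-\alpha\mid X) \leq \Lambda(1-\alpha)\{1+o(1)\}$, and the uniformity allows passing to $\max_{F \in \mathcal{F}}$. The trailing inequality $\Lambda(1-\alpha) \leq \sqrt{2n\log(2p/\alpha)}\{1+o(1)\}$ then comes from the deterministic bound already recorded in (\ref{LambdaDefAsmp}).

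The hard part, I expect, will be calibrating the moderate-deviation \emph{relative} error to be $o(1)$ at the target $x \asymp \sqrt{2\log(p/\alpha)}$, so that the factor $\{1+o(1)\}$ multiplied by the $\alpha/(2p)$ per-coordinate tail and summed over $p$ terms in the union bound produces total mass $\alpha + o(\alpha)$ rather than merely $O(\alpha)$. Condition R's polynomial constraint $p \leq \alpha n^{\eta(q-2)/2}/2$ is engineered precisely to keep $\sqrt{\log(p/\alpha)}$ safely below the moderate-deviation range $K_n$, and its lower bound on $\alpha^{-1}$ ensures that the concentration event behind (\ref{non-gauss:alpha}) carries probability $1-o(\alpha)$ even when $\alpha$ itself vanishes with $n$. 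Making both error sources simultaneously $o(\alpha)$---rather than just $o(1)$---is the delicate bookkeeping that will occupy most of the formal proof.
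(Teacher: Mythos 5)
Your parts (i) and (ii) are exactly the paper's argument (both are ``by definition''), but your route for (iii)--(iv) has a genuine gap: it is not implementable under the hypotheses of the lemma, namely Conditions M and R. The self-normalized decomposition $\Lambda=\sqrt n\max_j|T_{n,j}|\sqrt{V_j}$ forces you to control $\max_{1\leq j\leq p}V_j=\max_j\En(x_j^2\epsilon^2)/\En(\epsilon^2)$ \emph{uniformly over all $p$ coordinates}, with tolerance $u_n=o(1/\log(p/\alpha))$ and failure probability $o(\alpha)$. A fixed $v$, as in your appeal to (\ref{non-gauss:alpha}), is not enough: a constant multiplicative slack in the threshold changes the Gaussian tail by a factor $\exp\{t_n^2O(v)\}$ with $t_n^2\asymp\log(p/\alpha)$, i.e.\ by a power of $p/\alpha$, destroying the $\alpha\{1+o(1)\}$ bound. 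And obtaining the needed shrinking-tolerance, union-over-$p$ concentration of $\En(x_j^2\epsilon^2)$ is out of reach under Condition M, which only gives $\sup_j\En(|x_j|^q)<\infty$ and $\Ep|\epsilon|^q<\infty$ for some $q>2$ (possibly $q<4$, so $x_j^2\epsilon^2$ has barely more than one moment); the paper's own version of this step (Lemma~\ref{Lemma:ProcessSecond} in the Supplementary Material) needs $\max_i\|x_i\|_\infty$ bounded and $q>4$, i.e.\ Condition SN. Likewise, the Jing--Shao--Wang moderate deviation bound requires finite third moments of $x_{ij}\epsilon_i$ to define $M_n$, which Condition M (with $q$ possibly in $(2,3)$) does not supply. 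In short, your proposal proves the Supplementary Material's Lemma~\ref{Lemma: non-normalMprimeRprime} under Condition SN, not Lemma~\ref{Lemma: non-normal} under Conditions M and R.

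The paper's proof avoids the problematic uniform variance control altogether by \emph{not} self-normalizing: conditional on the fixed design, the ordinary sums $n^{1/2}\En(x_j\epsilon)$ have variance $\En(x_j^2)=1$ exactly, by the normalization (\ref{Def:Normalize}). Slastnikov's moderate deviation theorem for ordinary sums then gives $\Pr\{n^{1/2}|\En(x_j\epsilon)|>t\mid X\}\sim2\bar\Phi(t)$ uniformly in $j$ for $0\leq t\leq k\log^{1/2}n$ with $k^2<q-2$, under precisely the moment condition $\sup_{n,j}\En(|x_j|^q)\,\Ep_{F_0}(|\epsilon|^q)<\infty$ of Condition M; the constraint $p\leq\alpha n^{\eta(q-2)/2}/2$ in Condition R is what places $t_n=\Phi^{-1}(1-\alpha/2p)$ inside this range. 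The only random denominator left is the single scalar $\En(\epsilon^2)$, whose deviation $r_n$ (chosen to shrink so that $t_n^2r_n=o(1)$) is handled by Rosenthal or von Bahr--Esseen inequalities, with the lower bound on $\alpha$ in Condition R ensuring the failure probability is $o(\alpha)$; statement (iv) repeats the same chain for each $F\in\mathcal F$ using the uniformity in Condition M. If you want to pursue your self-normalized route, state and assume Condition SN (bounded regressors, $q>4$) instead --- that trade-off, weaker growth of $p$ in exchange for stronger design moments, is exactly how the paper frames the two alternatives.
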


Thus all of the asymptotic conclusions reached in Lemma \ref{Lemma: normal} about the penalty
level in the Gaussian case continue to hold in the non-Gaussian
case, albeit under more restricted conditions on the growth of $p$
relative to $n$. The growth condition depends on the number
of bounded moments $q$ of regressors and the error terms: the
higher $q$ is, the more rapidly $p$ can grow with $n$. We emphasize
that Conditions M and R are only one possible set of
sufficient conditions that guarantees the Gaussian-like
conclusions of Lemma \ref{Lemma: non-normal}. We derived them using the moderate
deviation theory of \cite{slastnikov}. For example, in the Supplementary Material, we  provide an alternative condition, based on the use of the self-normalized moderate deviation theory of \cite{JingShaoWang2003}, which results in much weaker growth condition
on $p$ in relation to $n$, but requires much stronger conditions on the moments
of regressors.

\section{Finite-sample and asymptotic bounds on the estimation error}

\subsection{Conditions on the Gram matrix}  We shall state convergence rates for $ \hat \delta = \hat \beta - \beta_0$
in the  Euclidean norm $\|\delta\|_2 = (\delta'\delta)^{1/2}$ and also in the prediction norm $$
\|\delta\|_{2,n} = [ \En\{(x'\delta)^2\} ]^{1/2} = \bsqrt{\delta '\En(xx') \delta }.$$
The latter norm directly depends on the Gram matrix $\En (xx')$.   The choice of penalty level described in Section 2 ensures
 the regularization event $\lambda \geq c \Lambda$, with probability $1-\alpha$ or with probability approaching $1-\alpha$.
This event will in turn imply another regularization event, namely that $\hat \delta$ belongs to the restricted set $\Delta_{\cc}$, where
$$\Delta_{\cc} = \{\delta \in \Bbb{R}^p: \|\delta_{T^c}\|_{1} \leq \cc  \| \delta_{T}\|_{1}, \delta \neq 0\},  \  \ \cc = \frac{c+1}{c-1}.$$

Accordingly, we will state the bounds on  estimation errors $\|\hat \delta\|_{2,n}$  and $\|\hat \delta\|_2$
in terms of the following restricted eigenvalues of the Gram matrix $\En (xx')$:
 \begin{eqnarray}\label{RE}
    \kappa_{\cc}  = \min_{\delta \in \Delta_{\cc}} \frac{s^{1/2}\|\delta\|_{2,n}}{\|\delta_T\|_{1} },  \ \  \ \  \widetilde \kappa_{\cc}  = \min_{ \delta \in \Delta_{\cc}
 } \frac{\|\delta\|_{2,n}}{\|\delta\|_2}.
\end{eqnarray}
These restricted eigenvalues can depend on $n$ and $T$, but we suppress the dependence in our notation.

In making simplified asymptotic statements, such as those appearing in Section 1, we invoke the following condition on the restricted eigenvalues:  \\

\textsc{Condition RE.} \textit{ There exist finite constants $n_0>0$ and $\kappa> 0$, such that the restricted eigenvalues obey $\kappa_{\cc} \geq \kappa \text{ and } \tilde \kappa_{\cc} \geq \kappa$ for all  $n>n_0$.} \\

The restricted eigenvalues (\ref{RE}) are simply variants of the  restricted eigenvalues introduced in \cite{BickelRitovTsybakov2009}.  Even though the minimal eigenvalue of the Gram matrix
$\En(xx')$ is zero whenever $p \geq n$, \cite{BickelRitovTsybakov2009} show that its restricted eigenvalues can be
bounded away from zero, and they and others provide sufficient primitive conditions that cover many fixed and random designs of interest, which allow for reasonably general, though not arbitrary, forms of correlation between regressors. This makes conditions on restricted eigenvalues useful for many applications. Consequently, we take the restricted eigenvalues as primitive quantities and Condition RE as primitive. The restricted eigenvalues are tightly tailored to the $\ell_1$-penalized estimation problem. Indeed, $\kappa_{\cc}$ is the modulus of continuity between the estimation norm and the penalty-related term computed over the restricted set,  containing the deviation of the estimator from the true value; and $\tilde \kappa_{ \cc}$ is the modulus of continuity
between the estimation norm and the Euclidean norm over this set.

It is useful to recall at least one simple sufficient condition
for bounded restricted eigenvalues. If for $m = s\log n$, the
$m$--sparse eigenvalues of the Gram matrix $\En(xx')$ are bounded
away from zero and from above for all $n > n'$, i.e.,
\begin{equation}\label{SE}
0 < k \leq \min_{\|\delta_{T^c}\|_{0} \leq m, \delta \neq 0
 } \frac{ \|\delta\|_{2,n}^2}{\|\delta\|^2_2} \leq  \max_{\|\delta_{T^c}\|_{0} \leq m, \delta \neq 0
 } \frac{ \|\delta\|^2_{2,n}}{\|\delta\|^2_2} \leq k' < \infty,
\end{equation}
for some positive finite constants $k$, $k'$, and $n'$, then Condition RE holds
once $n$ is sufficiently large. In words, (\ref{SE}) only requires
the eigenvalues of certain small $m \times m$ submatrices of the large $p\times p$ Gram matrix to be bounded from above and below.
The sufficiency of (\ref{SE}) for Condition RE follows from \cite{BickelRitovTsybakov2009}, and  many sufficient conditions for (\ref{SE}) are provided by
 \cite{BickelRitovTsybakov2009}, \cite{ZhangHuang2006}, \cite{MY2007}, and \cite{RudelsonVershynin2008}.


\subsection{Finite-sample  and asymptotic bounds on estimation error}


We now present the main result of this paper. Recall that we do not assume that the noise is sub-Gaussian or that $\sigma$ is known.

\begin{theorem}\label{Thm:RateSquareRootLASSONonparametric}
Consider the model described in (\ref{Def:Reg})--(\ref{Def:Normalize}). Let $c>1$, $\cc=(c+1)/(c-1)$, and suppose that $\lambda$ obeys the growth restriction  $\lambda s^{1/2} \leq n\kappa_{\cc} \rho$, for some $\rho<1$.
If $\lambda \geq c \Lambda$, then
$$
\|\hat \beta - \beta_0\|_{2,n}  \leq A_n \sigma \bsqrt{\En(\epsilon^2)} \frac{\lambda s^{1/2}}{n},  \ \ A_n =\frac{2( 1 + 1/c )}{\kappa_{\cc}(1-\rho^2)}.
$$
In particular, if $\lambda \geq c \Lambda$ with probability at least $1-\alpha$,
and $\En (\epsilon^2) \leq  \omega^2$ with probability at least $1- \gamma$, then with probability at least $1-\alpha - \gamma$,
$$
\tilde \kappa_{\cc} \|\hat \beta - \beta_0\|_2 \leq \| \hat \beta - \beta_0 \|_{2,n}  \leq A_n \sigma \omega \frac{\lambda s^{1/2}}{n}.
$$
\end{theorem}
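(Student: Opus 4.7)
The plan is to derive a quadratic inequality in the prediction norm $u := \|\hat\beta-\beta_0\|_{2,n}$ by combining the first-order optimality of $\hat\beta$ with the exact algebraic expansion of the criterion $\hat Q$. Set $\hat\delta := \hat\beta-\beta_0$, $S_0 := \{\hat Q(\beta_0)\}^{1/2} = \sigma\{\En(\epsilon^2)\}^{1/2}$, and $\hat S := \{\hat Q(\hat\beta)\}^{1/2}$. Writing $y_i - x_i'\hat\beta = \sigma\epsilon_i - x_i'\hat\delta$ and expanding the squares yields the identity $\hat S^2 - S_0^2 = u^2 - G$ with $G := 2\sigma\En(\epsilon\, x'\hat\delta)$. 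By H\"older's inequality together with the definition of $\Lambda$, $|G| \leq 2\sigma\|\En(x\epsilon)\|_\infty\|\hat\delta\|_1 = 2S_0 (\Lambda/n)\|\hat\delta\|_1$.

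The first step is to show the restricted-cone membership $\hat\delta\in\Delta_\cc$. Optimality of $\hat\beta$ and the triangle-type inequality $\|\hat\beta\|_1 \geq \|\beta_0\|_1 - \|\hat\delta_T\|_1 + \|\hat\delta_{T^c}\|_1$ yield $\hat S - S_0 \leq (\lambda/n)(\|\hat\delta_T\|_1 - \|\hat\delta_{T^c}\|_1)$. Combined with the convexity bound $\hat S \geq S_0 + \tilde S'\hat\delta \geq S_0 - (\Lambda/n)\|\hat\delta\|_1$, this delivers $(\lambda-\Lambda)\|\hat\delta_{T^c}\|_1 \leq (\lambda+\Lambda)\|\hat\delta_T\|_1$, which under $\lambda\geq c\Lambda$ is exactly $\hat\delta\in\Delta_\cc$.

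The main technical step is to square the optimality inequality (valid since $\hat S\geq 0$ forces the right-hand side to be nonnegative), combine it with the identity $\hat S^2 = S_0^2 + u^2 - G$, and substitute the H\"older bound on $G$. The coefficient of $\|\hat\delta_{T^c}\|_1$ in the resulting linear part is $2S_0(\Lambda-\lambda)/n\leq 0$ and may be dropped for an upper bound, while the remaining linear coefficient simplifies to $2S_0(\Lambda+\lambda)/n\leq 2S_0(1+1/c)(\lambda/n)$. After careful handling of the quadratic cross-term and invoking the restricted-eigenvalue inequality $\|\hat\delta_T\|_1 \leq s^{1/2}u/\kappa_\cc$, one obtains the quadratic inequality $u^2 \leq 2(1+1/c)S_0\rho\, u + \rho^2 u^2$ with $\rho = \lambda s^{1/2}/(n\kappa_\cc)$. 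Solving under $\rho<1$ and substituting $S_0 = \sigma\{\En(\epsilon^2)\}^{1/2}$ yields the first bound with $A_n = 2(1+1/c)/[\kappa_\cc(1-\rho^2)]$.

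The second statement follows immediately: on the intersection of $\{\lambda\geq c\Lambda\}$ and $\{\En(\epsilon^2)\leq \omega^2\}$, of probability at least $1-\alpha-\gamma$ by a union bound, the first statement applies with $\omega$ in place of $\{\En(\epsilon^2)\}^{1/2}$; the lower bound $\tilde\kappa_\cc\|\hat\delta\|_2\leq \|\hat\delta\|_{2,n}$ is read off directly from the definition of $\tilde\kappa_\cc$ applied to $\hat\delta\in\Delta_\cc$. The principal obstacle is the squared-inequality step: the quadratic cross-term must be bounded so as to produce the sharp constant $1+1/c$ rather than the looser $\cc=(c+1)/(c-1)$ that would emerge from the naive inequality $\|\hat\delta\|_1 \leq (1+\cc)\|\hat\delta_T\|_1$, while ensuring that the argument succeeds under only the mild growth condition $\rho<1$. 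This is achieved by exploiting the nonpositive sign of the $\|\hat\delta_{T^c}\|_1$ coefficient, a consequence of $\lambda\geq c\Lambda$, to absorb excess contributions from the quadratic term.
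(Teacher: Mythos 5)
Your proposal is correct and takes essentially the same route as the paper: your Step 1 (cone membership $\hat\delta\in\Delta_{\cc}$ from the optimality inequality plus the convexity bound $\{\hat Q(\hat\beta)\}^{1/2}-\{\hat Q(\beta_0)\}^{1/2}\ge \tilde S'\hat\delta$) is identical, and your squaring of $\{\hat Q(\hat\beta)\}^{1/2}\le \{\hat Q(\beta_0)\}^{1/2}+(\lambda/n)(\|\hat\delta_T\|_{1}-\|\hat\delta_{T^c}\|_{1})$ together with the identity $\hat Q(\hat\beta)-\hat Q(\beta_0)=\|\hat\delta\|_{2,n}^2-2\En(\sigma\epsilon x'\hat\delta)$ is just the paper's factorization of $\hat Q(\hat\beta)-\hat Q(\beta_0)$ into sum times difference, combined with its bound $\{\hat Q(\hat\beta)\}^{1/2}\le\{\hat Q(\beta_0)\}^{1/2}+\lambda s^{1/2}\|\hat\delta\|_{2,n}/(n\kappa_{\cc})$. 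Both arguments arrive at the same quadratic inequality $(1-\rho^2)\|\hat\delta\|_{2,n}^2\le 2(1+1/c)\,\sigma\{\En(\epsilon^2)\}^{1/2}(\lambda s^{1/2}/(n\kappa_{\cc}))\|\hat\delta\|_{2,n}$ with the same constants, and both finish the second claim identically via the union bound and the definition of $\tilde\kappa_{\cc}$; even the cross-term issue you flag is dispatched in the paper by the same device you describe, namely the nonpositive coefficient of $\|\hat\delta_{T^c}\|_{1}$ under $\lambda\ge c\Lambda$.
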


This result provides a finite-sample bound for $\hat \delta$ that is similar to that for the lasso
estimator with known $\sigma$, and this result leads to the same rates of convergence as in the case of lasso.
It is important to note some differences. First,
for a given value of the penalty level $\lambda$, the statistical structure of $\LASSO$   is  different from that of lasso,
and so our proof of Theorem \ref{Thm:RateSquareRootLASSONonparametric} is also different. Second, in the proof
we have to invoke the additional growth restriction,
$\lambda s^{1/2} < n\kappa_{\cc}$, which is not present in the lasso analysis that treats $\sigma$ as known.
We may think of this restriction as the price of not knowing $\sigma$ in our framework.
However, this additional condition is very mild and holds asymptotically under typical conditions if  $ (s/n) \log (p/\alpha) \to 0$,
as the corollaries below indicate, and it is independent of $\sigma$.
In comparison, for the lasso estimator, if we treat $\sigma$ as unknown and attempt to estimate it consistently using a pilot lasso, which uses an upper bound $\bar \sigma \geq \sigma$ instead
of $\sigma$, a similar growth condition $(\bar \sigma/\sigma) (s/n) \log (p/\alpha) \to 0$ would have to be imposed, but this condition depends on $\sigma$ and is more restrictive than our growth condition when $\bar \sigma/\sigma$ is large.

Theorem \ref{Thm:RateSquareRootLASSONonparametric}  implies the following bounds when combined with Lemma \ref{Lemma: normal}, Lemma \ref{Lemma: non-normal}, and the concentration property (\ref{non-gauss:alpha}).

\begin{corollary} Consider the model described in (\ref{Def:Reg})-(\ref{Def:Normalize}). Suppose further that $F_0=\Phi$, $\lambda$ is chosen according to the exact option in (\ref{normal}), $p/\alpha>8$, and $n>4\log(2/\alpha)$. Let $c>1$, $\cc=(c+1)/(c-1)$, $\nu= \bsqrt{1+2/\log(2p/\alpha)}/[1-2\bsqrt{\log(2/\alpha)/n}]$, and for any $\ell$ such that $1 < \ell < \bsqrt{n/\log(1/\alpha)}$, set
$\omega^2 =1 + \ell\{\log(1/\alpha)/n\}^{1/2}+\ell^2\log(1/\alpha)/(2n)$ and $\gamma = \alpha^{\ell^2/4}$. If $s \log p$ is relatively small
as compared to $n$, namely  $c\nu\bsqrt{2s\log(2p/\alpha)} \leq n^{1/2}\kappa_{\cc} \rho$ for some $\rho<1$, then with probability at least $1-\alpha - \gamma$,
$$
 \tilde \kappa_{\cc} \|\hat \beta-\beta_0 \|_2 \leq \|\hat \beta-\beta_0\|_{2,n}  \leq B_n \sigma \bbsqrt{\frac{2s\log(2p/\alpha)}{ n}},  \ \ B_n =\frac{ 2( 1 + c ) \nu\omega}{\kappa_{\cc}(1-\rho^2)}.$$
\end{corollary}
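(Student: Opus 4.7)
The plan is to obtain the corollary as a direct specialization of Theorem 1 in the Gaussian case, combining Lemma 1 to control the magnitude of $\lambda$ with a chi-squared tail bound to control $\En(\epsilon^2)$. Since Theorem 1 already absorbs the hard work of analyzing the square-root lasso on the regularization event $\{\lambda \geq c\Lambda\}$, the corollary reduces to verifying three hypotheses and then substituting.

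First I would handle the penalty level. Since $\lambda = c\Lambda_\Phi(1-\alpha\mid X)$ under the exact option, Lemma 1(i) delivers $\lambda \geq c\Lambda$ with probability at least $1-\alpha$, while Lemma 1(iii) provides the deterministic bound $\lambda \leq c\nu\bsqrt{2n\log(2p/\alpha)}$ valid under the assumptions $p/\alpha > 8$ and $n > 4\log(2/\alpha)$. The growth restriction $\lambda s^{1/2} \leq n\kappa_{\cc}\rho$ required by Theorem 1 then follows directly from this upper bound and the corollary's hypothesis $c\nu\bsqrt{2s\log(2p/\alpha)} \leq n^{1/2}\kappa_{\cc}\rho$.

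Next I would dispose of the control $\En(\epsilon^2) \leq \omega^2$. Under $F_0 = \Phi$, the quantity $n\En(\epsilon^2)$ is a $\chi^2_n$ random variable, so the Laurent--Massart tail bound $\Pr(\chi^2_n \geq n + 2\bsqrt{nx} + 2x) \leq e^{-x}$ applied with $x = \ell^2 \log(1/\alpha)/4$ gives $e^{-x} = \alpha^{\ell^2/4} = \gamma$, and a direct arithmetic check shows $n + 2\bsqrt{nx} + 2x = n\omega^2$ for the stated $\omega$. This yields $\Pr\{\En(\epsilon^2) > \omega^2\} \leq \gamma$.

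Combining, on the intersection of these two events, which has probability at least $1-\alpha-\gamma$ by the union bound, Theorem 1 gives $\|\hat\beta - \beta_0\|_{2,n} \leq A_n \sigma\omega\,\lambda s^{1/2}/n$ with $A_n = 2(1+1/c)/[\kappa_{\cc}(1-\rho^2)]$; substituting the upper bound on $\lambda$ produces $B_n\sigma\bsqrt{2s\log(2p/\alpha)/n}$ after noting $2(1+1/c)\cdot c = 2(1+c)$, and the $\ell_2$ bound follows from $\|\hat\delta\|_{2,n} \geq \tilde\kappa_{\cc}\|\hat\delta\|_2$, which is built into Theorem 1. No step is genuinely difficult: the only honest computation is the chi-squared bound, and the statement has been engineered so that the choice $x = \ell^2\log(1/\alpha)/4$ matches the given $\omega$ and $\gamma$ exactly; the remainder is substitution and bookkeeping of constants.
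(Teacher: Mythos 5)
Your proposal is correct and follows exactly the route the paper intends: Lemma 1(i) gives the regularization event $\lambda \geq c\Lambda$ with probability $1-\alpha$, Lemma 1(iii) gives the deterministic bound $\lambda \leq c\nu\bsqrt{2n\log(2p/\alpha)}$ (which also verifies the growth restriction of Theorem 1), and the Laurent--Massart upper-tail bound for $\chi^2_n$ with $x=\ell^2\log(1/\alpha)/4$ yields $\En(\epsilon^2)\leq \omega^2$ with probability $1-\gamma$, after which Theorem 1 and the identity $(1+1/c)c = 1+c$ give $B_n$. Your arithmetic matching $n+2\bsqrt{nx}+2x=n\omega^2$ and $e^{-x}=\gamma$ is the only nontrivial check, and it is done correctly, so nothing is missing.
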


\begin{corollary}Consider the model described in (\ref{Def:Reg})-(\ref{Def:Normalize}) and suppose that $F_0=\Phi$, Conditions RE and G hold, and $ (s/n) \log (p/\alpha) \to 0$, as $n \to \infty$. Let $\lambda$ be specified according to either
 the exact or asymptotic option in (\ref{normal}).  There is an $o(1)$ term such that
 with probability at least $1-\alpha-o(\alpha)$,
$$
\kappa \|\hat \beta - \beta_0\|_2  \leq  \|\hat \beta - \beta_0\|_{2,n} \leq  C_n \sigma \bbsqrt{\frac{2s\log (2p/\alpha)}{n}},  \ \ C_n = \frac{2( 1+c )}{\kappa \{1- o(1)\}}.
$$
\end{corollary}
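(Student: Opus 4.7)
The plan is to obtain the corollary as a direct consequence of Theorem \ref{Thm:RateSquareRootLASSONonparametric} by verifying each of its hypotheses on an event of probability at least $1-\alpha-o(1)$, using Lemma \ref{Lemma: normal} together with a concentration bound for $\En(\epsilon^2)$.

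Four conditions must be checked. First, the regularization event $\lambda \geq c\Lambda$: for the exact option, Lemma \ref{Lemma: normal}(i) delivers this at level $1-\alpha$; for the asymptotic option, I would apply Lemma \ref{Lemma: normal}(ii) with $\ell = \ell_n \to \infty$ chosen to satisfy $\ell = o[n^{1/2}/\{\log(p/\alpha)\log^{1/2}(1/\alpha)\}]$, which is feasible under Condition G and gives $\tau = 1 + o(1)$, hence probability at least $1-\alpha-o(\alpha)$. Second, the magnitude of $\lambda$: for the asymptotic option $\lambda \leq c\bsqrt{2n\log(2p/\alpha)}$ by the standard Gaussian tail bound, while for the exact option Lemma \ref{Lemma: normal}(iii) gives $\lambda \leq c\nu\bsqrt{2n\log(2p/\alpha)}$ with $\nu = 1+o(1)$ under Condition G. Third, since $n\En(\epsilon^2)$ is $\chi_n^2$-distributed, Chebyshev's inequality (or the Laurent--Massart tail bound) yields $\En(\epsilon^2) \leq \omega_n^2$ with probability $1-o(1)$ for some $\omega_n^2 = 1 + o(1)$. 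Fourth, the growth restriction $\lambda s^{1/2} \leq n\kappa_{\cc}\rho$ for some $\rho<1$: using the upper bound on $\lambda$ and $\kappa_{\cc} \geq \kappa$ from Condition RE, this reduces to $c\bsqrt{2(s/n)\log(2p/\alpha)}\{1+o(1)\} \leq \kappa\rho$, which permits taking $\rho = \rho_n = o(1)$ because $(s/n)\log(p/\alpha)\to 0$.

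With all four conditions holding on an event of probability at least $1-\alpha-o(1)$ (by a union bound), Theorem \ref{Thm:RateSquareRootLASSONonparametric} yields $\tilde\kappa_{\cc}\|\hat\beta-\beta_0\|_2 \leq \|\hat\beta-\beta_0\|_{2,n} \leq A_n \sigma \omega_n \lambda s^{1/2}/n$. Substituting $A_n = 2(1+1/c)/[\kappa_{\cc}(1-\rho_n^2)]$, using $\tilde\kappa_{\cc},\kappa_{\cc} \geq \kappa$, the upper bound on $\lambda$, and absorbing the $\{1+o(1)\}$ factors from $\omega_n$, $\nu$, and $(1-\rho_n^2)^{-1}$ into a single $1-o(1)$ factor in the denominator produces the stated bound with $C_n = 2(1+c)/[\kappa\{1-o(1)\}]$; the constant arithmetic is simply $c\cdot 2(1+1/c) = 2(c+1)$. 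There is no substantive obstacle: everything is bookkeeping of $\{1+o(1)\}$ terms, handled uniformly for the exact and asymptotic options.
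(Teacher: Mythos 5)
Your proposal is correct and follows essentially the same route as the paper, which obtains this corollary exactly by combining Theorem \ref{Thm:RateSquareRootLASSONonparametric} with Lemma \ref{Lemma: normal} (parts (i)--(iii) under Condition G) and the $\chi^2$-concentration of $\En(\epsilon^2)$, then absorbing the $\nu$, $\omega_n$, and $(1-\rho_n^2)^{-1}$ factors into the $\{1-o(1)\}$ denominator via $c\cdot 2(1+1/c)=2(1+c)$. The verification of the side conditions ($p/\alpha>8$, $n>4\log(2/\alpha)$, feasibility of $\ell\to\infty$, and the growth restriction from $(s/n)\log(p/\alpha)\to 0$ with $\kappa_{\cc}\geq\kappa$) is exactly the bookkeeping the paper intends.
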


\begin{corollary}
Consider the model described in (\ref{Def:Reg})-(\ref{Def:Normalize}). Suppose that Conditions RE, M, and R hold, and $ (s/n) \log (p/\alpha) \to 0$ as $n \to \infty$. Let $\lambda$ be specified according to the asymptotic, exact,
or semi-exact option in (\ref{non-normal}).  There is an $o(1)$ term such that
 with probability at least $1-\alpha-o(\alpha)$
$$
\kappa \|\hat \beta - \beta_0\|_2  \leq  \|\hat \beta - \beta_0\|_{2,n} \leq  C_n \sigma \bbsqrt{\frac{2s\log (2p/\alpha)}{n}},  \ \ C_n = \frac{2( 1+c )}{\kappa \{1- o(1)\}}.
$$
\end{corollary}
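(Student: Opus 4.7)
The plan is to specialize Theorem~\ref{Thm:RateSquareRootLASSONonparametric} to the non-Gaussian setting, mirroring the template that produces Corollary~2 in the Gaussian case. Three ingredients need to be collected: (a) the regularization event $\lambda \geq c\Lambda$ with asymptotic probability $1-\alpha$; (b) a deterministic upper bound on $\lambda$ of order $c\bsqrt{2n\log(2p/\alpha)}$; and (c) a high-probability bound $\En(\epsilon^2) \leq \omega^2$ with $\omega = 1 + o(1)$.

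For (a), parts (i) and (ii) of Lemma~\ref{Lemma: non-normal} cover the exact and semi-exact options with probability at least $1-\alpha$, while part (iii), which uses Conditions M and R, covers the asymptotic option with probability at least $1-\alpha - o(\alpha)$. For (b), the asymptotic option satisfies $\lambda = c\Lambda(1-\alpha) \leq c\bsqrt{2n\log(2p/\alpha)}$ by (\ref{LambdaDefAsmp}), and the exact and semi-exact options obey the same bound up to a $\{1+o(1)\}$ factor by part (iv) of Lemma~\ref{Lemma: non-normal}. For (c), I would invoke (\ref{non-gauss:alpha}): it holds for each fixed $v>0$ with failure probability $o(\alpha)$, so a standard diagonal extraction yields a sequence $v_n \to 0$ for which the failure probability is still $o(\alpha)$, whence $\omega = \bsqrt{1+v_n} = 1 + o(1)$.

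Condition RE gives $\kappa_{\cc} \wedge \widetilde\kappa_{\cc} \geq \kappa > 0$ for all sufficiently large $n$, and combined with $(s/n)\log(p/\alpha) \to 0$ and (b) this forces $\lambda s^{1/2}/(n\kappa_{\cc}) = o(1)$. I can therefore choose $\rho = \rho_n \to 0$ to satisfy the growth restriction in Theorem~\ref{Thm:RateSquareRootLASSONonparametric}, so that $1-\rho^2 = 1 - o(1)$. Applying the theorem, with probability at least $1-\alpha - o(\alpha) - o(\alpha) \geq 1 - \alpha - o(1)$, I get
$$\|\hat\beta - \beta_0\|_{2,n} \leq A_n \sigma \omega \lambda s^{1/2}/n \leq \frac{2(1+c)\{1+o(1)\}}{\kappa\{1-o(1)\}}\, \sigma\, \bbsqrt{\frac{2s\log(2p/\alpha)}{n}},$$
which matches the advertised $C_n$ after absorbing the numerator $\{1+o(1)\}$ into $1/\{1-o(1)\}$; the $\ell_2$ bound follows from $\widetilde\kappa_{\cc} \geq \kappa$.

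The only genuine obstacle is extracting $\omega = 1 + o(1)$ from (\ref{non-gauss:alpha}), which is phrased only for fixed $v>0$; this is handled by the diagonal argument above. All remaining steps are routine bookkeeping of probabilities and constants feeding into the conclusion of Theorem~\ref{Thm:RateSquareRootLASSONonparametric}, exactly as in the proof of Corollary~2.
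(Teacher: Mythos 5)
Your proposal is correct and follows essentially the route the paper intends: the corollary is obtained by feeding Lemma~\ref{Lemma: non-normal} (parts (i)--(iv)) and the concentration property (\ref{non-gauss:alpha}) into Theorem~\ref{Thm:RateSquareRootLASSONonparametric}, with $\rho=\rho_n\to 0$ permitted by $(s/n)\log(p/\alpha)\to 0$ and Condition RE, and all $\{1+o(1)\}$ factors absorbed into $C_n$. Your diagonal extraction of $v_n\to 0$ from the fixed-$v$ statement (\ref{non-gauss:alpha}) is a legitimate way to get $\omega=1+o(1)$; the paper's own bounds (\ref{Eq:PRalpha1})--(\ref{Eq:PRalpha2}) already deliver the same with an explicit vanishing sequence $r_n$, so this is only a presentational difference.
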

As in Lemma \ref{Lemma: non-normal}, in order to achieve Gaussian-like asymptotic conclusions in the non-Gaussian case, we impose stronger restrictions on the growth of $p$ relative to $n$.

\section{Computational properties of $\LASSO$}\label{Sec:Comp}

The second main result of this paper is to formulate the $\LASSO$
as a conic programming problem, with constraints given by a
second-order cone, also informally known as the ice-cream cone.
This allows us to implement the estimator using
efficient algorithmic methods, such as
interior-point methods, which provide polynomial-time
bounds on computational time \cite{NesterovNemirovski,RenegarBook}, and modern first-order methods that have
been recently extended to handle very large conic programming problems
\cite{Nesterov2005,Nesterov2007,LanLuMonteiro,BeckTeboulle2009}.   Before describing the details,
it is useful to recall that a conic programming problem takes the form $\min_u c'u$
subject to $Au =b$ and $u \in C$, where $C$ is a cone. Conic programming  has a tractable dual form
$\max_w b'w$ subject to $w'A + s=c$ and $s \in C^*$, where $C^*=\{ s: s'u \geq 0, \text{ for all } u \in C\}$ is
the dual cone of $C$. A particularly important, highly tractable
class of problems arises when $C$ is the ice-cream cone, $C =
Q^{n+1} = \{ (v,t) \in \Bbb{R}^{n} \times \Bbb{R}: t \geq
\|v\|\}$, which is self-dual, $C=C^*$.

The $\LASSO$ optimization problem is precisely a conic programming problem with second-order conic constraints. Indeed, we can reformulate (\ref{Def:LASSOmod}) as follows:
 \begin{equation}\label{Def:LASSOmod2}
 \displaystyle \min_{t,v,\beta^+,\beta^-} \ \displaystyle  \frac{t}{ n^{1/2}} +
 \frac{\lambda}{n} \sum_{i=1}^p \left(\beta_j^+ + \beta_j^-\right) \
 :  \begin{array}{rl}
&\displaystyle  v_i = y_i - x_i'\beta^+ + x_i'\beta^-, \ i=1,\ldots,n,  \\
&(v,t) \in Q^{n+1}, \ \beta^+ \in \Bbb{R}^p_+, \ \beta^- \in \Bbb{R}^p_+. \\ \end{array}
\end{equation}
Furthermore, we can show that this problem admits the following strongly dual problem: \begin{equation}\label{Def:DualL1}
 \max_{a \in \Bbb{R}^n}   \frac{1}{n} \sum_{i=1}^n  y_i a_i \ \
: \ \  \begin{array}{rl}
&  |\sum_{i=1}^n x_{ij}a_i/n|  \leq \lambda/n, \ j =1,\ldots,p,   \|a \|\leq  n^{1/2}.
\end{array}
\end{equation}
Recall that strong duality holds between a primal and its dual problem if their optimal values are the same, i.e., there is no duality gap.   This is typically an assumption needed for interior-point methods and first-order methods to work. From a statistical perspective, this dual problem maximizes the sample correlation of the score variable $a_i$ with the outcome variables $y_i$
subject to the constraint that the score $a_i$ is approximately
uncorrelated with the covariates $x_{ij}$. The optimal scores $\hat a_i$ equal the residuals $y_i -x_i'\hat \beta$, for all $i=1,\ldots,n$, up to a renormalization factor; they play a key role in deriving sparsity bounds on $\hat \beta$. We formalize the preceding discussion in the following theorem.
\begin{theorem}\label{theorem:computational} The $\LASSO$ problem (\ref{Def:LASSOmod}) is equivalent to the conic programming problem (\ref{Def:LASSOmod2}), which admits the strongly dual problem (\ref{Def:DualL1}). Moreover, if the solution $\hat \beta$ to the problem (\ref{Def:LASSOmod}) satisfies $Y \neq X\hat\beta$, the solution $\hat \beta^+$,  $\hat \beta^{-}$, $\hat v = (\hat v_1,\ldots,\hat v_n)$ to (\ref{Def:LASSOmod2}), and the solution $\hat a$ to  (\ref{Def:DualL1})  are related via  $\hat \beta = \hat \beta^{+} -\hat \beta^{-}$, $\hat v_i = y_i -x_i'\hat \beta$ $(i=1,\ldots,n)$, and $\hat a =  n^{1/2} \hat v/\|\hat v\| $.
\end{theorem}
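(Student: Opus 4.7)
My plan is to prove the three assertions in sequence: the equivalence of (\ref{Def:LASSOmod}) with (\ref{Def:LASSOmod2}), strong duality between (\ref{Def:LASSOmod2}) and (\ref{Def:DualL1}), and the explicit relations between the optimal primal and dual variables. For the equivalence, I would decompose any candidate $\beta \in \RR^p$ as $\beta = \beta^+ - \beta^-$ with $\beta^\pm \geq 0$ componentwise; at any optimum one may assume $\beta_j^+ \beta_j^- = 0$, since subtracting $\min(\beta_j^+, \beta_j^-)$ from both components preserves the difference but strictly decreases $\beta_j^+ + \beta_j^-$, and hence the objective. Consequently $\sum_j (\beta_j^+ + \beta_j^-) = \|\beta\|_1$ at the optimum. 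Setting $v_i = y_i - x_i'\beta$ and $t = \|v\|$ makes $(v,t) \in Q^{n+1}$ with $t/\sqrt{n} = \{\hat Q(\beta)\}^{1/2}$, so both problems have the same optimal value, and any minimizer of one yields a minimizer of the other via $\hat \beta = \hat \beta^+ - \hat \beta^-$ and $\hat v = y - X\hat \beta$.

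For strong duality, I would form the Lagrangian of (\ref{Def:LASSOmod2}) with multiplier $\tilde a \in \RR^n$ attached to the affine equality $y - v - X\beta^+ + X\beta^- = 0$, keeping $(v,t) \in Q^{n+1}$ and $\beta^\pm \geq 0$ in the domain. Minimizing out $\beta^+ \geq 0$ forces $\lambda/n - (X'\tilde a)_j \geq 0$ and minimizing out $\beta^- \geq 0$ forces $\lambda/n + (X'\tilde a)_j \geq 0$, yielding $|(X'\tilde a)_j| \leq \lambda/n$. For the minimization of $t/\sqrt{n} - \tilde a'v$ over $(v,t) \in Q^{n+1}$, self-duality of the ice-cream cone implies that the infimum is finite iff $\|\tilde a\| \leq 1/\sqrt{n}$, in which case it equals $0$; the extremal direction $v = t\tilde a/\|\tilde a\|$ makes this transparent. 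The resulting dual $\max_{\tilde a} y'\tilde a$ under these constraints becomes exactly (\ref{Def:DualL1}) after rescaling $a := n\tilde a$. Strong duality then follows from Slater's condition, verified at the strictly feasible point $\beta^\pm = \mathbf{1}$, $v = y$, $t = \|y\| + 1$, which makes $(v,t)$ lie in the interior of $Q^{n+1}$ while $\beta^\pm$ lie in the strict positive orthant.

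For the final correspondence, the equality constraint of (\ref{Def:LASSOmod2}) immediately gives $\hat v = y - X\hat \beta$, and $t/\sqrt{n}$ is pushed down to $\hat t = \|\hat v\|$. The hypothesis $Y \neq X\hat \beta$ forces $\hat v \neq 0$. Complementary slackness on the conic term (zero duality gap) yields $\hat t/\sqrt{n} = \tilde a'\hat v$, so $\hat a'\hat v = n\tilde a'\hat v = \sqrt{n}\|\hat v\|$, while $\|\hat a\| \leq \sqrt{n}$; Cauchy--Schwarz must therefore be tight, forcing $\hat a$ parallel to $\hat v$ with $\|\hat a\| = \sqrt{n}$, i.e., $\hat a = \sqrt{n}\hat v/\|\hat v\|$. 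The step I expect to be most delicate is this conic complementary slackness argument: extracting both the direction ($\hat a \parallel \hat v$) and the magnitude ($\|\hat a\| = \sqrt{n}$) from the single scalar identity $\hat a'\hat v = \sqrt{n}\|\hat v\|$ via Cauchy--Schwarz tightness crucially requires $\hat v \neq 0$, which is precisely where the hypothesis $Y \neq X\hat \beta$ enters; if $\hat v = 0$, the cone constraint is slack at $(v,t) = (0,0)$ and $\hat a$ is no longer pinned down by this argument alone.
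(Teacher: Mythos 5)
Your proposal is correct and follows essentially the same route as the paper: derive the dual of the conic reformulation (you via the Lagrangian, the paper via the conic duality theorem), invoke strong duality from strict feasibility (your explicit Slater point versus the paper's appeal to Renegar's Theorem 3.2.6), and then recover $\hat a = n^{1/2}\hat v/\|\hat v\|$ from the zero duality gap plus tightness of Cauchy--Schwarz under $\|\hat a\|\leq n^{1/2}$ and $\hat v \neq 0$. The only cosmetic difference is that you extract complementary slackness directly at the conic term, while the paper rearranges the strong-duality identity using $n^{-1}\sum_i x_{ij}\hat a_i\hat\beta_j = \lambda|\hat\beta_j|/n$; the two manipulations are equivalent.
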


The conic formulation and the strong duality demonstrated in Theorem 2 allow us to employ both the interior-point and
first-order methods for conic programs to compute the square-root lasso.  We have implemented
both of these methods, as well as a coordinatewise method, for the square-root lasso and made the code
available through the authors' webpages. The square-root lasso
runs at least as fast as the corresponding implementations of these methods for the lasso, for instance,
the Sdpt3 implementation of interior-point method \cite{SDPT3:2010},
and the Tfocs implementation of first-order methods by Becker, Cand\`{e}s and Grant described in \cite{BeckerCandesGrant}.
We report the exact running times in the Supplementary Material.

\section{Empirical performance of $\LASSO$ relative to lasso}\label{Sec:Empirical}

In this section we use Monte Carlo experiments to assess the finite sample performance of (i) the infeasible lasso with known $\sigma$ which is unknown outside the experiments, (ii) the post infeasible lasso, which applies ordinary least squares to the model selected by infeasible lasso, (iii) the $\LASSO$ with unknown $\sigma$,  and (iv) the post $\LASSO$, which applies ordinary least squares to the model selected by $\LASSO$.

We set the penalty level for the infeasible lasso and the $\LASSO$ according to the asymptotic options (\ref{Def:LASSOLambda}) and (\ref{our penalty}) respectively, with $1-\alpha=0.95$ and $c=1.1$.
We have also performed experiments where we set the penalty levels according to the exact option. The results are similar, so we do not report them separately.

We use the linear regression model stated in the introduction as a data-generating process, with  either standard normal or $t(4)$ errors:
(a) $\epsilon_i \sim  N(0,1)$,  $(b) \ \epsilon_i  \sim t(4)/2^{1/2}$,
so that $\Ep(\epsilon^2_i) = 1$ in either case. We set the true parameter value as $
 \beta_0 = (1,1,1,1,1,0,\ldots,0)',$ and vary $\sigma$ between $0.25$ and $3$.  The number of regressors  is $p=500$,  the sample size is $n =100$, and we used $1000$ simulations for each design. We generate  regressors as $x_i \sim N(0, \Sigma)$ with the Toeplitz correlation matrix $\Sigma_{jk}=(1/2)^{|j-k|}$. We use as benchmark the performance of the oracle estimator with known true support of $\beta_0$ which is unknown outside the experiment.

\begin{figure}[!h]
\centering
\includegraphics[width=\textwidth]
{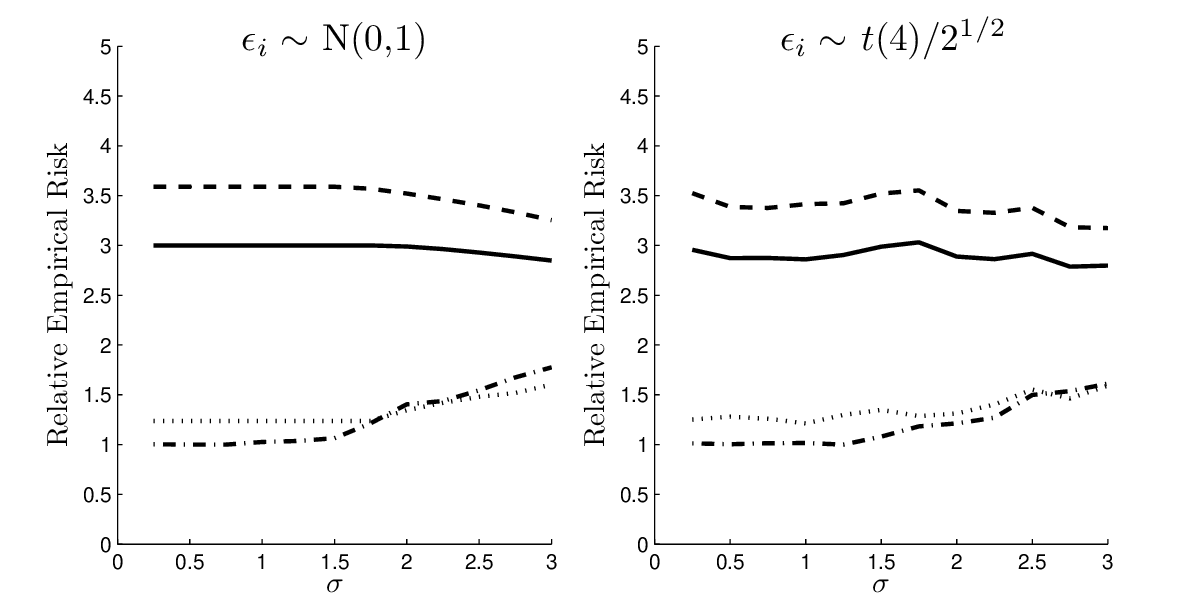}
\caption{Average relative empirical risk of infeasible lasso (solid), square-root lasso (dashes), post infeasible lasso (dots), and post square-root lasso (dot-dash),  with respect to the oracle estimator, that knows the true support, as a function of the standard deviation of the noise $\sigma$.
}\label{Fig:MCfirst01}
\end{figure}

\begin{figure}[!h]
\centering
\includegraphics[width=\textwidth]
{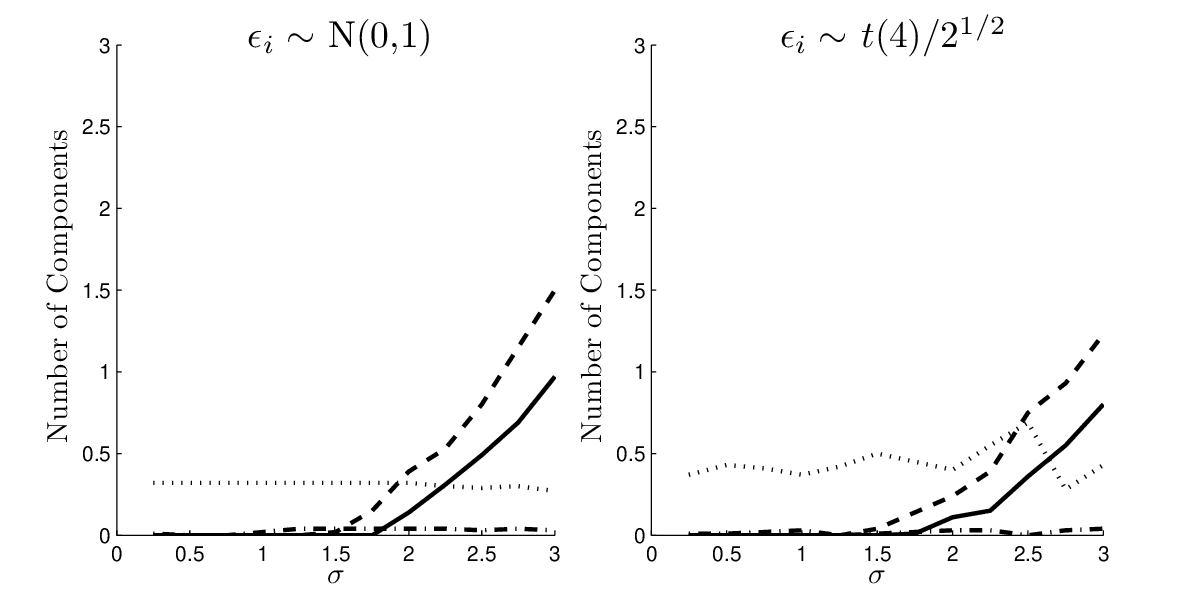}
\caption{
Average number of regressors missed from the true support for infeasible lasso (solid) and square-root lasso (dashes), and the average number of regressors selected outside the true support for infeasible lasso (dots) and square-root lasso (dot-dash), as a function of the noise level $\sigma$.
}\label{Fig:MCfirst03}
\end{figure}

We present the results of computational experiments for designs (a)
and (b) in Figs. \ref{Fig:MCfirst01} and \ref{Fig:MCfirst03}. For each model,
Figure \ref{Fig:MCfirst01} shows the relative average empirical risk with respect to the oracle estimator $\beta^*$,  $\Ep(\|\widetilde \beta-\beta_0\|_{2,n})  / \Ep(\|\beta^*-\beta_0\|_{2,n})$, 
 and Figure \ref{Fig:MCfirst03} shows the average number of regressors missed from the true model and the average number of regressors selected outside the true model, $\Ep(|\supp(\beta_0)\setminus\supp(\widetilde \beta)|)$  and $\Ep(|\supp(\widetilde \beta)\setminus\supp(\beta_0)|)$,  respectively.

Figure \ref{Fig:MCfirst01} shows the empirical risk of the estimators. We see that, for a wide range of the noise level  $\sigma$, the $\LASSO$ with unknown $\sigma$ performs comparably to the infeasible lasso with known $\sigma$.   These results agree with our theoretical results, which state that the upper bounds on empirical risk for the $\LASSO$ asymptotically approach the analogous bounds for infeasible lasso. The finite-sample differences in empirical risk for the infeasible lasso and the $\LASSO$ arise primarily due to the $\LASSO$ having a larger bias than the infeasible lasso.  This bias arises because the $\LASSO$ uses an effectively heavier penalty induced by $\widehat Q(\hat\beta)$ in place of $\sigma^2$; indeed, in these experiments, the average values of  $\widehat Q(\hat\beta)^{1/2}/\sigma$ varied between $1.18$ and $1.22$.

Figure \ref{Fig:MCfirst01} shows that the post $\LASSO$  substantially outperforms both the infeasible lasso and the $\LASSO$.  Moreover, for a wide range of $\sigma$, the post $\LASSO$  outperforms the post infeasible lasso.   The post square-root lasso is able to improve over the $\LASSO$ due to removal of the relatively large shrinkage bias of the $\LASSO$.  Moreover,  the post $\LASSO$ is able to outperform the post infeasible lasso  primarily due to its better sparsity properties, which can be observed from Figure \ref{Fig:MCfirst03}. These results on the post $\LASSO$ agree closely with our theoretical results reported in the arXiv working paper ``Pivotal Estimation of Nonparametric Functions via Square-root Lasso" by the authors,
which state that the  upper bounds on empirical risk for the post $\LASSO$  asymptotically are no larger than the analogous bounds for the $\LASSO$ or the infeasible lasso, and can be strictly better when the $\LASSO$ acts as a near-perfect model selection device. We see this happening in Figure \ref{Fig:MCfirst01}, where as the noise level $\sigma$ decreases, the post $\LASSO$ starts to perform as well as the oracle estimator. As we see from Figure \ref{Fig:MCfirst03}, this happens because as $\sigma$ decreases, the $\LASSO$ starts to select the true model nearly perfectly, and hence the post $\LASSO$ starts to become the oracle estimator with a high probability.

Next let us now comment on the difference between the normal and $t(4)$ noise cases, i.e., between the right and left panels in Figure \ref{Fig:MCfirst01} and \ref{Fig:MCfirst03}. We see that the results for the Gaussian case carry over to $t(4)$ case with nearly undetectable changes. In fact, the performance of the infeasible lasso and the $\LASSO$ under $t(4)$ errors nearly coincides with their performance under Gaussian errors, as predicted by our theoretical results in the main text, using moderate deviation theory, and in the Supplementary Material, using self-normalized moderate deviation theory.

In the Supplementary Material, we provide further Monte Carlo comparisons that include asymmetric error distributions, highly correlated designs, and feasible lasso estimators based on the use of conservative bounds on $\sigma$ and cross validation.  Let us briefly summarize the key conclusions from these experiments.  First, presence of asymmetry in the noise distribution and of a high correlation in the design does not change the results qualitatively. Second, naive use of conservative bounds on $\sigma$ does not result in good feasible lasso estimators. Third, the use of cross validation for choosing the penalty level does produce a feasible lasso estimator performing well in terms of empirical risk but poorly in terms of model selection.  Nevertheless, even in terms of empirical risk, the cross-validated lasso is  outperformed by the post square-root lasso.  In addition, cross-validated lasso is outperformed by the square-root-lasso with the penalty level scaled by 1/2. This is noteworthy, since the estimators based on the square-root lasso are much cheaper computationally.  Lastly, in the 2011 arXiv working paper ``Pivotal Estimation of Nonparametric Functions via Square-Root Lasso" we provide a further analysis of the post square-root lasso estimator and generalize the setting of the present paper to the fully nonparametric regression model.

\section*{Acknowledgement}
We would like to thank Isaiah Andrews, Denis Chetverikov, Joseph Doyle, Ilze Kalnina, and James MacKinnon for very helpful comments. The authors also thank the editor and two referees for suggestions that greatly improved the paper. This work was supported by the National Science Foundation.

\section*{Supplementary Material}

The online Supplementary Material contains a complementary analysis of the penalty choice based on moderate deviation theory for self-normalizing sums, discussion on computational aspects of $\LASSO$ as compared to lasso, and additional Monte Carlo experiments.  We also provide the omitted part of proof of Lemma 2, and list the inequalities used in the proofs.

\appendix

\section*{Appendix 1}

\subsection*{Proofs of Theorems 1 and 2}

\begin{proof}[Proof of Theorem
\ref{Thm:RateSquareRootLASSONonparametric}]  Step 1. We show that $\hat \delta = \hat \beta - \beta_0 \in \Delta_{\cc}$ under the prescribed penalty level.  By definition of
$\widehat \beta$
\begin{equation}\label{Def:OPT}
\bsqrt{\widehat Q(\hat \beta)} - \bsqrt{\widehat Q (\beta_0)} \leq  \frac{\lambda}{n} \| \beta_0\|_{1} - \frac{\lambda}{n}\|\hat \beta\|_{1} \leq \frac{\lambda}{n} (\| \hat \delta_T\|_{1} - \| \hat \delta_{T^c}\|_{1}),
\end{equation}
where the last inequality holds because
 \begin{eqnarray}\label{Rel:(b)2}
 \|\beta_0\|_{1} - \| \hat \beta\|_{1} &= &  \| \beta_{0T}\|_{1} - \|\hat \beta_T\|_{1} - \| \hat \beta_{T^c} \|_{1}
  \leqslant   \| \hat \delta_T\|_{1} - \| \hat \delta_{T^c}\|_{1}.
 \end{eqnarray}
Also, if $\lambda \geqslant cn\|\tilde S\|_{\infty}$ then
 \begin{eqnarray}\label{Rel:(aa)2}
 \bsqrt{\hat Q (\hat \beta)} - \bsqrt{\hat Q(\beta_0)}
  & \geqslant &
   \tilde S'  \hat \delta
  \geqslant
  - \|\tilde S\|_{\infty} \| \hat \delta\|_{1}
  \geqslant   - \frac{\lambda}{cn} (\| \hat \delta_T\|_{1} + \| \hat \delta_{T^c}\|_{1}),
 \end{eqnarray}
 where the first inequality hold by convexity of $\hat Q^{1/2}$.
Combining (\ref{Def:OPT}) with (\ref{Rel:(aa)2}) we obtain
\begin{equation}\label{keykey}
-\frac{\lambda}{cn} (\| \hat \delta_T\|_{1} + \| \hat \delta_{T^c}\|_{1}  )  \leqslant \frac{\lambda}{n} (
\|\hat \delta_T\|_{1} - \|\hat \delta_{T^c}\|_{1}),
\end{equation}
that is
 \begin{equation}\label{domination3}
\|\hat \delta_{T^c}\|_{1} \leqslant \frac{c+1}{c-1}   \| \hat \delta_{T}\|_{1} = \cc \| \hat \delta_{T}\|_{1}.
 \end{equation}

Step 2. We derive bounds on the estimation error.  We shall use the following relations: \begin{eqnarray}
& \hat Q (\hat \beta) - \hat Q(\beta_0)   & =     \|\widehat  \delta\|^2_{2,n} - 2 \En(\sigma \epsilon x'\widehat \delta), \label{Rel:trivial}\\
 & \widehat Q(\widehat \beta) - \widehat Q(\beta_0)  & = \left[\bsqrt{\widehat Q(\widehat \beta)} + \bsqrt{\widehat Q(\beta_0)}\right]\left[\bsqrt{\widehat Q(\widehat \beta)} - \bsqrt{\widehat Q(\beta_0)}\right],  \label{bn:Id} \\
\label{Rel:(a)2}
& 2 |\En(\sigma \epsilon x'\widehat \delta)| &   \leq   2  \bsqrt{\hat Q (\beta_0)}\|\tilde S\|_{\infty} \| \hat \delta\|_1, \label{rel-bound} \\
& \|\hat\delta_T\|_{1} &    \leq  \frac{ s^{1/2}\|\widehat\delta\|_{2,n}}{\kappa_{\cc}} ,  \hat \delta \in \Delta_{\cc},
\label{rel-step-1}
 \end{eqnarray}
 where (\ref{Rel:(a)2}) holds by Holder inequality and (\ref{rel-step-1})  holds by the definition of $\kappa_{\bar c}$.

 Using (\ref{Def:OPT}) and (\ref{Rel:trivial})--(\ref{rel-step-1}) we obtain
{\small \begin{equation}\label{Eq:Id}\|\widehat \delta\|_{2,n}^2  \leq   2\bsqrt{\widehat Q(\beta_0)} \|\tilde S\|_{\infty}\|\hat\delta\|_{1}  + \left[\bsqrt{\widehat Q(\widehat \beta)} + \bsqrt{\widehat Q(\beta_0)}\right] \frac{\lambda}{n} \Big( \frac{s^{1/2}\|\widehat\delta\|_{2,n}}{\kappa_{\cc}} - \|\hat\delta_{T^c}\|_{1}\Big).\end{equation}}
Also using (\ref{Def:OPT}) and (\ref{rel-step-1}) we obtain
 \begin{eqnarray}\label{Eq:boundQ}
\bsqrt{\widehat Q(\hat \beta)}
 \leq   \bsqrt{\widehat Q (\beta_0)}  + \frac{\lambda}{n} \left(  \frac{ s^{1/2}\|\widehat\delta\|_{2,n}}{\kappa_{\cc}}\).
 \end{eqnarray}
Combining inequalities (\ref{Eq:boundQ}) and (\ref{Eq:Id}), we obtain
$\|\hat\delta\|_{2,n}^2  \leq   2\bsqrt{\widehat Q(\beta_0)} \|\tilde S\|_{\infty}\|\hat\delta\|_{1}  + 2\bsqrt{\widehat Q(\beta_0)} \frac{\lambda s^{1/2}}{n\kappa_{\cc}}  \|\hat \delta\|_{2,n}   +   \(\frac{\lambda s^{1/2}}{n\kappa_{\cc}}\|\hat \delta\|_{2,n}\)^2- 2\bsqrt{\widehat Q(\beta_0)} \frac{\lambda}{n}\|\hat\delta_{T^c}\|_{1}.$
Since $\lambda \geqslant cn\|\tilde S\|_{\infty}$, we obtain
 \begin{eqnarray*}
\|\hat\delta\|_{2,n}^2 & \leq &  2\bsqrt{\widehat Q(\beta_0)} \|\tilde S\|_{\infty}\|\hat\delta_T\|_{1}  + 2\bsqrt{\widehat Q(\beta_0)} \frac{\lambda s^{1/2}}{n\kappa_{\cc}}  \|\hat \delta\|_{2,n}    +   \(\frac{\lambda s^{1/2}}{n\kappa_{\cc}}\|\hat \delta\|_{2,n}\)^2,
 \end{eqnarray*}
and then using (\ref{rel-step-1}) we obtain
 \begin{eqnarray*}
\left\{1 - \Big(\frac{\lambda s^{1/2}}{n\kappa_{\cc}}\Big)^2 \right\}\|\hat\delta\|_{2,n}^2 & \leq &  2 \left( \frac{1}{c} +1 \right) \bsqrt{\widehat Q(\beta_0)} \frac{\lambda s^{1/2}}{n\kappa_{\cc}}  \|\hat \delta\|_{2,n}.
 \end{eqnarray*}
 Provided that $(n\kappa_{\cc})^{-1}\lambda s^{1/2} \leq \rho < 1$ and solving the inequality above
 we obtain the bound stated in the theorem. \end{proof}


\begin{proof}[Proof of Theorem \ref{theorem:computational}] The equivalence
of $\LASSO$ problem (\ref{Def:LASSOmod}) and the conic programming problem (\ref{Def:LASSOmod2}) follows
immediately from the definitions.  To establish the duality,
  for $e=(1,\ldots,1)'$, we can write (\ref{Def:LASSOmod2}) in  matrix form as
  $$
 \min_{t,v,\beta^+,\beta^-}  \frac{t}{ n^{1/2}} + \frac{\lambda}{n}e' \beta^+
+ \frac{\lambda}{n} e' \beta^-  \ : \ \begin{array}{lll}
& v + X\beta^+ - X\beta^- = Y \\
& (v,t) \in Q^{n+1}, \ \beta^+ \in \Bbb{R}^p_+, \ \beta^- \in \Bbb{R}^p_+. \\
\end{array}
$$
By the conic duality theorem, this has dual
$$
 \max_{a,s^t,s^v,s^+,s^-}  Y'a  \ : \
\begin{array}{rl}
& s^t = 1/ n^{1/2},  a + s^v = 0, X'a + s^+ = \lambda e/n,  -X'a + s^- = \lambda  e/n\\
& (s^v,s^t) \in Q^{n+1}, \ s^+ \in \Bbb{R}^p_+, \ s^- \in \Bbb{R}^p_+.
\end{array}
$$
The constraints $X'a + s^+ = \lambda/n$ and $-X'a + s^- = \lambda/n$ leads to $\|X'a\|_\infty \leq \lambda/n$.
The conic constraint $(s^v,s^t) \in Q^{n+1}$ leads to $1/ n^{1/2}=s^t \geq \|s^v\| = \|a\|$. By scaling the variable $a$ by $n$ we obtain the stated dual problem.

Since the primal problem is strongly feasible, strong duality holds by Theorem 3.2.6 of \cite{RenegarBook}. Thus, by strong duality, we have
$ n^{-1}\sum_{i=1}^ n  y_i \hat a_i = n^{-1/2}\|Y - X\hat\beta\| + n^{-1}\lambda\sum_{j=1}^p |\hat\beta_j|.$
Since $n^{-1}\sum_{i=1}^ n  x_{ij}\hat a_i \hat\beta_j =
\lambda|\hat\beta_j|/n$ for every $j=1,\ldots,p$, we
have
$$ \frac{1}{n}\sum_{i=1}^ n  y_i \hat a_i = \frac{\|Y - X\hat\beta\|}{ n^{1/2}} + \sum_{j=1}^p  \frac{1}{n}\sum_{i=1}^ n  x_{ij}\hat a_i \hat\beta_j= \frac{\|Y - X\hat\beta\|}{ n^{1/2}} + \frac{1}{n}\sum_{i=1}^ n \hat a_i \sum_{j=1}^p x_{ij}\hat\beta_j.$$
Rearranging the terms we have $n^{-1}\sum_{i=1}^n\{ (y_i-x_i'\hat\beta) \hat
a_i\} = \|Y - X\hat\beta\|/ n^{1/2}$. If $\|Y-X\hat\beta\|>0$, since $\|\hat a\|\leq
 n^{1/2}$, the equality can only hold for $\hat a =
 n^{1/2}(Y-X\hat\beta)/\|Y-X\hat\beta\|=(Y-X\hat\beta)/\bsqrt{\hat
Q(\hat\beta)}$. \end{proof}

\section*{Appendix 2}

\subsection*{Proofs of Lemmas 1 and 2}

\begin{proof}[Proof of Lemma \ref{Lemma: normal}]
Statement (i) holds by definition. To show
statement (ii), we define $t_n = \Phi^{-1}(1- \alpha/2p)$ and $0<r_n=\ell\bsqrt{\log(1/\alpha)/n}<1$. It is
known that $\log(p/\alpha)<t_n^2<2\log(2p/\alpha)$ when
$p/\alpha>8$. Then since $Z_j = n^{1/2} \En (x_{j} \epsilon) \sim N(0,1)$ for each $j$, conditional
on $X$, we have
by the union bound and $ F_0 = \Phi$,
$
 \Pr( c \Lambda > c n^{1/2} t_n \mid X )  \leq   p \ \Pr\{|Z_j| > t_n (1-r_n) \mid X\} + \Pr \{ \En(\epsilon^2) < (1 - r_n)^2 \} \leq   2p\ \bar \Phi\{t_n (1- r_n) \}+\Pr\{ \En(\epsilon^2)
< (1 - r_n)^2 \}.$ Statement (ii) follows by observing that by Chernoff tail bound for $\chi^2(n)$, Lemma 1 in \cite{LaurentMassart2000},
$\Pr \{ \En(\epsilon^2) < (1 - r_n)^2\} \leq \exp(-nr_n^2/4)$, and
 \begin{eqnarray*}
 2p\ \bar \Phi\{t_n (1- r_n) \} && \leq
2p\frac{\phi\{t_n(1-r_n)\}}{t_n(1-r_n)}=2p\frac{\phi(t_n)}{t_n}
\frac{\exp(t_n^2r_n-\frac{1}{2}t_n^2r_n^2)}{1-r_n}\\
& & \leq 2p \bar \Phi(t_n)\frac{1+t_n^2}{t_n^2}
\frac{\exp(t_n^2r_n-\frac{1}{2}t_n^2r_n^2)}{1-r_n}\leq \alpha
\left(1+\frac{1}{t_n^2}\right)\frac{\exp(t_n^2r_n)}{1-r_n} \\
& & \leq
\alpha\left\{1+\frac{1}{\log(p/\alpha)}\right\}\frac{\exp\{2\log(2p/\alpha)r_n\}}{1-r_n},
\end{eqnarray*}
where we have used the inequality $\phi(t) t/(1 + t^2) \leq \bar \Phi(t) \leq \phi(t)/t$ for $t > 0$.

For statement (iii), it is sufficient to show that $\Pr(\Lambda_\Phi >
 v  n^{1/2} t_n \mid X )\leq\alpha$. It can be seen that
there exists a $v'$ such that $v'>\bsqrt{1+2/\log(2p/\alpha)}$ and
$1-v'/v>2\bsqrt{\log(2/\alpha)/n}$ so that
$
 \Pr(\Lambda_\Phi >  v n^{1/2} t_n \mid X  )  \leq   p \max_{1 \leq j \leq p} \Pr( |Z_j| > v't_n \mid X)+\Pr\{\En(\epsilon^2)<(v'/v)^2 \}
  = 2p\bar
\Phi(v't_n)+\Pr[\bsqrt{\En(\epsilon^2)}<v'/v].
$
Proceeding as before, by Chernoff tail bound for $\chi^2(n)$, $\Pr[\bsqrt{\En(\epsilon^2)}<v'/v]\leq  \exp\{-n(1-v'/v)^2/4\}\leq \alpha/2,$ and
\begin{eqnarray*}
2p\bar \Phi(v't_n)&\leq &
2p\frac{\phi(v't_n)}{v't_n}=2p\frac{\phi(t_n)}{t_n}\frac{\exp\{-\frac{1}{2}t_n^2(v'^2-1)\}}{v'}\\
&\leq& 2p\bar\Phi(t_n)
\left(1+\frac{1}{t_n^2}\right)\frac{\exp\{-\frac{1}{2}t_n^2(v'^2-1)\}}{v'}\\
&=&\alpha\left(1+\frac{1}{t_n^2}\right)\frac{\exp\{-\frac{1}{2}t_n^2(v'^2-1)\}}{v'}\\
&\leq&
\alpha\left\{1+\frac{1}{\log(p/\alpha)}\right\}\frac{\exp\{-\log(2p/\alpha)(v'^2-1)\}}{v'}\\
&\leq& 2\alpha \exp\{-\log(2p/\alpha)(v'^2-1)\}<\alpha/2.
\end{eqnarray*}
Putting the inequalities together, we conclude that $\Pr(\Lambda_\Phi > v  n^{1/2}
t_n \mid X )\leq\alpha$.

Finally, the asymptotic result follows directly from the finite sample bounds and noting that $p/\alpha\to \infty$ and that under the growth condition we can choose $\ell\to \infty$ so that $\ell\log(p/\alpha)\log^{1/2}(1/\alpha)=o( n^{1/2})$.
\end{proof}

\begin{proof}[Proof of Lemma \ref{Lemma: non-normal}]  Statements (i) and (ii) hold by definition.
To show (iii), consider first the case of $2 < q  \leq 8$, and define $t_n = \Phi^{-1}(1- \alpha/2p)$ and
$r_n = \alpha^{-\frac{2}{q}}  n^{-\{(1-2/q) \wedge1/2\}} \ell_n, $
for  some $\ell_n$ which grows to infinity but so slowly that the condition stated below is satisfied.
Then for any $F_0=F_{0n}$ and $X=X_n$ that obey Condition M:
\begin{eqnarray*}
 && \Pr ( c \Lambda > c  n^{1/2} t_n \mid X  )  \\
&& \leq_{(1)}   p \max_{1 \leq j \leq p} \Pr\{ | n^{1/2} \En (x_{j} \epsilon) | > t_n (1-r_n) \mid X\} + \Pr[ \bsqrt{\En(\epsilon^2)} < 1 - r_n ] \\
& & \leq_{(2)}   p \max_{1 \leq j \leq p} \Pr\{ | n^{1/2} \En (x_{j} \epsilon) | > t_n (1-r_n) \mid X \} + o(\alpha) \\
& &  =_{(3)}  2p \ \bar \Phi\{t_n (1- r_n)\}\{1+ o(1)\} + o(\alpha) \\
& &  =_{(4)}   2p \ \frac{\phi\{t_n (1- r_n)\}}{t_n (1- r_n)} \{1+o(1)\} + o(\alpha) \label{r3}\\
& & =   2p \  \frac{\phi(t_n)}{t_n}  \frac{\exp(t_n^2r_n-t_n^2r_n^2/2)}{1-r_n}\{1+o(1)\} + o(\alpha)  \nonumber \\
& & =_{(5)}  2p \ \frac{\phi(t_n)}{t_n}  \{1+o(1)\} + o(\alpha) \label{r4}  =_{(6)}  2p \ \bar \Phi(t_n)\{1+o(1)\} + o(\alpha) = \alpha \{1+o(1)\} \label{r5},
\end{eqnarray*}
where (1) holds by the union bound; (2) holds by the application of either Rosenthal's inequality (Rosenthal 1970) for the case of
$q >4$ and Vonbahr--Esseen's inequalities (von Bahr \& Esseen 1965) for the case of $2 < q \leq 4$,
\begin{equation}\label{Eq:PRalpha1}
\Pr [ \bsqrt{\En(\epsilon^2)} < 1 - r_n ] \leq \Pr \{ |\En(\epsilon^2) -1| > r_n \} \lesssim  \alpha \ell_n^{-q/2} = o(\alpha),
\end{equation}
 (4) and (6) by  $\phi(t)/t \sim \bar \Phi(t)$ as $t \to \infty$;
(5) by   $t_n^2r_n =  o(1)$, which holds if
$ \log(p/\alpha) \alpha^{-\frac{2}{q}}  n^{-\{(1-2/q) \wedge1/2\}} \ell_n$ $=$ $o(1).$
Under our condition $\log(p/\alpha) = O (\log n)$, this condition is satisfied for some slowly growing
$\ell_n$, if
\begin{equation}\label{bound alpha}
\alpha^{-1} = o\{ n^{ (q/2-1) \wedge q/4}/\log^{q/2} n\}.
\end{equation}
 To verify relation (3), by Condition M and Slastnikov's theorem on moderate deviations, see \cite{slastnikov} and \cite{rubin:sethuraman},
we have that uniformly
in $ 0 \leq |t| \leq  k \log^{1/2} n$ for some $k^2 < q -2$, uniformly in $1 \leq j \leq p$ and for any $F_0= F_{0n} \in \mathcal{F}$,
$
\Pr\{ n^{1/2} |\En (x_{j} \epsilon)| > t\mid X\}/\{2\bar \Phi(t)\} \to 1,
$
so the relation (3) holds for $t = t_n (1-r_n) \leq  \bsqrt{ 2 \log (2p/\alpha) } \leq \bsqrt{\eta (q-2) \log n}$ for
$\eta<1$ by Condition R.  We apply Slastnikov's theorem  to $n^{-1/2} |\sum_{i=1}^n z_{i,n}|$ for  $z_{i, n} = x_{ij} \epsilon_{i}$, where we allow the design $X$, the law $F_0$, and index $j$ to be  indexed by $n$.
Slastnikov's theorem then applies provided $\sup_{n, j \leq p} \En \{\Ep_{F_0} ( |z_{n}|^q)\}  =
 \sup_{n, j \leq p} \En ( |x_{j}|^q)  \Ep_{F_0} (|\epsilon|^q)< \infty,$ which is implied by our Condition M,
 and where we used the condition that the design is fixed, so that $\epsilon_i$ are independent of $x_{ij}$.
  Thus,  we obtained the  moderate deviation result uniformly in $1\leq j \leq p$ and for any sequence of
 distributions $F_0=F_{0n}$ and designs $X=X_n$ that obey our Condition M.

Next suppose that $q \geq 8$. Then the same argument applies, except
that now relation (2) could also be established by using Slastnikov's
theorem on moderate deviations. In this case redefine $r_n =   k \bsqrt{\log n/n}; $ then, for some constant $k^2< \bsqrt{(q/2)-2}$
we have \begin{equation}\label{Eq:PRalpha2}\Pr \{\En(\epsilon^2) < (1 - r_n)^2 \} \leq \Pr \{ |\En(\epsilon^2)- 1| > r_n \} \lesssim n^{-k^2},\end{equation}
so the relation (2) holds if
\begin{equation}\label{bound alpha 2} 1/\alpha = o(n^{k^2}).
  \end{equation} This applies whenever $q \geq 4$, and this results in weaker requirements on $\alpha$ if $q \geq 8$.
The relation (5) then follows if  $t^2_n r_n =  o(1)$, which is easily satisfied for the new $r_n$, and the result follows.

Combining  conditions in (\ref{bound alpha})
 and (\ref{bound alpha 2}) to give the weakest restrictions on the growth of $\alpha^{-1}$,
 we obtain the growth conditions stated in the lemma.

To show statement (iv) of the lemma, it suffices to show that for any $\nu'>1$, and $F \in \mathcal{F}$, $\Pr (  \Lambda_F >  \nu'  n^{1/2} t_n \mid X )  =  o(\alpha)$,
which follows analogously to the proof of statement (iii); we relegate the details to the Supplementary Material.   \end{proof}

\bibliographystyle{plain}
\bibliography{biblioSqLASSO}

\pagebreak

\vspace{1.0cm}

{\LARGE Supplementary Appendix for ``Square-root lasso: pivotal recovery of sparse signals via conic programming"}

\vspace{1.0cm}

\begin{quote}
\textbf{Abstract.} In this appendix we gather additional theoretical and computational results for ``Square-root lasso: pivotal recovery of sparse signals via conic programming." We include a complementary analysis of the penalty choice based on moderate deviation theory for self-normalizing sums. We provide a
 discussion on computational aspects of $\LASSO$ as compared to lasso. We carry out additional Monte Carlo experiments.  We also provide the omitted part of proof of Lemma 2, and list the inequalities used in the proofs.
\end{quote}

\vspace{1.0cm}

\section{Additional Theoretical Results}

In this section we derive additional results using moderate deviation theory for self-normalizing sums to bound the penalty level. These results are complementary to the results given in the main text since conditions required here are not implied nor imply the conditions there. These conditions require stronger moment assumptions but in exchange they result in weaker growth requirements on $p$ in relation to $n$.

Recall the definition of the choices of penalty levels
\begin{equation}\label{non-normal}
\begin{array}{llll}
&\text{exact:} & \lambda =  c \Lambda_{F_0}(1-\alpha\mid X), \\
& \text{asymptotic:} &
\lambda =  c n^{1/2} \Phi^{-1}(1-\alpha/2p),
\end{array}
\end{equation} where $\Lambda_{F_0}(1-\alpha\mid X) = (1-\alpha)$-quantile of $n\|\En(x\epsilon)\|_\infty/\{\En(\epsilon^2)\}^{1/2}$. We will make use of the following condition.

\textsc{Condition SN.} \textit{There is a $q> 4$ such that the noise obeys $\sup_{n\geq 1} \Ep_{F_0} (|\epsilon|^{q}) < \infty,$ and the design $X$ obeys $\sup_{n\geq 1} \max_{1 \leq i \leq n} \|x_{i}\|_\infty  < \infty$. Moreover, we also assume $ \log(p/\alpha) \alpha^{-2/q} n^{-1/2} \log^{1/2} (n\vee p/\alpha) = o(1)$ and $\En(x_{j}^2)=1$ ($j=1,\ldots,p$).}

\begin{lemma}\label{Lemma: non-normalMprimeRprime} Suppose that condition SN holds and $n \to \infty$. Then, (1) the asymptotic option in (\ref{non-normal}) implements $\lambda > c  \Lambda$ with probability at least $1-\alpha \{ 1 + o(1)\}$, and (2)
$$\begin{array}{rl}
 \displaystyle \Lambda_{F_0}\(1-\alpha \ \mid X\) &  \displaystyle \leq \{1 + o(1)\} n^{1/2}\Phi^{-1}(1-\alpha/2p).
 \end{array}$$
\end{lemma}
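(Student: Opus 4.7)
The plan is to mirror the structure of the proof of Lemma~\ref{Lemma: non-normal}(iii)--(iv), but replace Slastnikov's moderate deviation theorem with the self-normalized moderate deviation theorem of \cite{JingShaoWang2003}. The key algebraic identity I will exploit is
$$ \frac{n\,|\En(x_j\epsilon)|}{\bsqrt{\En(\epsilon^2)}} \;=\; n^{1/2}\,|U_j|\,R_j, \qquad U_j = \frac{\sum_i x_{ij}\epsilon_i}{\bsqrt{\sum_i x_{ij}^2 \epsilon_i^2}}, \qquad R_j=\frac{\bsqrt{\En(x_j^2\epsilon^2)}}{\bsqrt{\En(\epsilon^2)}},$$
so that $\Lambda = n^{1/2}\max_j |U_j|\,R_j$. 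The self-normalized ratio $U_j$ is exactly the object for which \cite{JingShaoWang2003} gives Gaussian moderate deviations, while $R_j$ is a nuisance ratio that concentrates at one because $\En(x_j^2)=1$ and $\Ep_{F_0}(\epsilon^2)=1$.

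First I would apply the Jing--Shao--Wang theorem to the i.i.d.\ summands $Z_{ij}=x_{ij}\epsilon_i$ for each fixed $j$. Condition SN gives $\Ep_{F_0}|\epsilon|^q<\infty$ with $q>4$ and $\sup_n\max_i\|x_i\|_\infty<\infty$, so $\Ep|Z_{ij}|^q<\infty$ uniformly in $j$; this yields $\Pr(|U_j|>t)/\{2\bar\Phi(t)\}\to 1$ uniformly in $1\leq j\leq p$ and in $0\leq t\leq o(n^{(q-2)/(2q)})$. The critical threshold is $t_n=\Phi^{-1}(1-\alpha/2p)\leq \bsqrt{2\log(2p/\alpha)}$, which lies in this range by the growth condition in SN (this is precisely where the logarithm--to--$n^{1/2}$ balance in SN is used).

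Second, I would handle the nuisance ratio $R_j$. Since $\Ep_{F_0}[\En(x_j^2\epsilon^2)]=\En(x_j^2)=1$ and $\Ep_{F_0}[\En(\epsilon^2)]=1$, I write $\En(x_j^2\epsilon^2)-1=\En\{(x_j^2-1)(\epsilon^2-1)\}+\En(x_j^2-1)$, where the second term vanishes by the normalization. Under SN the summands $(x_{ij}^2-1)(\epsilon_i^2-1)$ are independent with uniformly bounded $x_{ij}^2-1$ and with $q/2>2$ moments on $\epsilon_i^2-1$, so Rosenthal/Fuk--Nagaev combined with a union bound over $p$ yields $\max_j|R_j-1|\leq r_n$ with probability $1-o(\alpha)$, for a suitable $r_n\to 0$ satisfying $t_n^2 r_n=o(1)$; the exponent $2/q$ in the growth hypothesis of SN is what allows $\alpha^{-1}$ to be absorbed into the bound after the union bound.

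Combining the two pieces via the union bound as in Lemma~\ref{Lemma: non-normal}(iii),
\begin{align*}
 \Pr(c\Lambda > c n^{1/2}t_n)
 &\leq p\max_j\Pr\{|U_j|>t_n(1-r_n)\} + \Pr(\max_j R_j>1+r_n)\\
 &\leq 2p\,\bar\Phi\{t_n(1-r_n)\}\{1+o(1)\}+o(\alpha),
\end{align*}
and then using $\bar\Phi(t)\sim \phi(t)/t$, $\exp(t_n^2 r_n)=1+o(1)$, and $2p\,\bar\Phi(t_n)=\alpha$ exactly reproduces the chain of estimates from the proof of Lemma~\ref{Lemma: non-normal}(iii), delivering $\Pr(c\Lambda>c n^{1/2}t_n)\leq \alpha\{1+o(1)\}$, which is part~(1). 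For part~(2), I repeat the same argument with $t_n$ replaced by $\nu' t_n$ for any $\nu'>1$ to obtain $\Pr(\Lambda>\nu' n^{1/2}t_n)=o(\alpha)$; by definition of the conditional quantile this gives $\Lambda_{F_0}(1-\alpha\mid X)\leq \nu' n^{1/2}t_n$ for any $\nu'>1$ along the sequence, hence the claimed $\{1+o(1)\}$ bound.

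The main obstacle is verifying that the single growth condition $\log(p/\alpha)\alpha^{-2/q}n^{-1/2}\log^{1/2}(n\vee p/\alpha)=o(1)$ simultaneously (a) places $t_n$ inside the Jing--Shao--Wang admissible range, (b) admits a sequence $r_n\to 0$ with $t_n^2 r_n=o(1)$ strong enough to keep the Gaussian tail comparison $\exp(t_n^2 r_n)=1+o(1)$, and (c) dominates the $o(\alpha)$ concentration error for $\max_j R_j$ after the union bound; once $r_n$ is chosen correctly the remaining manipulations are the same bookkeeping with $\phi/t\sim\bar\Phi$ already carried out in the proof of Lemma~\ref{Lemma: non-normal}.
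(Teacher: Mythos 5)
Your skeleton matches the paper's own proof: the same factorization $\Lambda = n^{1/2}\max_{j}|U_j|R_j$ into a self-normalized statistic $U_j$ handled by the Jing--Shao--Wang moderate deviation theorem (the paper's Lemma on self-normalized sums, with threshold $t_n=\Phi^{-1}(1-\alpha/2p)=o(n^{1/6})$ under SN) and a nuisance ratio $R_j$ that must concentrate at one, followed by the same $\phi(t)/t\sim\bar\Phi(t)$ bookkeeping; for part (2) the paper inflates the threshold to $(1+u_n)(1+1/t_n)t_n$ rather than by a fixed $\nu'>1$, but that difference is immaterial. The genuine gap is in your step for the ratio: you propose to control $\max_{1\le j\le p}|\En\{x_j^2(\epsilon^2-1)\}|$ by coordinatewise Rosenthal/Fuk--Nagaev bounds plus a union bound over $p$. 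Since $\epsilon_i^2-1$ has only $q/2>2$ moments, each coordinate has a polynomial tail, so the union bound costs a full factor $p$: Rosenthal--Markov gives $\Pr\{|\En\{x_j^2(\epsilon^2-1)\}|>r_n\}\lesssim r_n^{-q/2}n^{-q/4}$, and requiring $p\,r_n^{-q/2}n^{-q/4}=o(\alpha)$ forces $r_n\gtrsim (p/\alpha)^{2/q}n^{-1/2}$ (Fuk--Nagaev only improves this to $(p/\alpha)^{2/q}n^{2/q-1}$). The requirement $t_n^2r_n=o(1)$ then needs $\log(p/\alpha)(p/\alpha)^{2/q}$ small relative to a power of $n$, i.e.\ $p$ at most polynomial in $n$. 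Condition SN imposes only $\log(p/\alpha)\alpha^{-2/q}n^{-1/2}\log^{1/2}(n\vee p/\alpha)=o(1)$, which permits $\log p$ as large as $o(n^{1/3})$, so under the stated hypotheses your choice of $r_n$ with both properties (probability $1-o(\alpha)$ and $t_n^2r_n=o(1)$) does not exist in general; relatedly, the $\alpha^{-2/q}$ in SN is not there to absorb a union bound over the $p$ covariates.

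The paper closes exactly this hole with a different tool: a symmetrization/truncation maximal inequality (Lemma 5 of the Supplement) stating that with probability at least $1-4\tau_1-4\tau_2$,
$$\max_{1\le j\le p}|\En\{x_{j}^2(\epsilon^2-1)\}| \le 4\left\{\frac{2\log(2p/\tau_1)}{n}\right\}^{1/2}\left\{\frac{\Ep(|\epsilon_i|^q)}{\tau_2}\right\}^{2/q}\max_{1\le i\le n}\|x_i\|_\infty^2,$$
in which $p$ enters only through $\log^{1/2}(2p/\tau_1)$ and the heavy tail of $\epsilon^2$ is paid only once, through a Markov bound on $\max_{i\le n}\epsilon_i^2$ over the $n$ observations (the $\tau_2^{-2/q}$ factor); taking $\tau_1=\tau_2=\alpha/w_n^{1/2}$ makes the right side at most $u_n=w_n\alpha^{-2/q}n^{-1/2}\log^{1/2}(n\vee p)$, and $t_n^2u_n=o(1)$ is precisely Condition SN. To repair your argument you need an inequality of this type (or an equivalent truncation argument), not coordinatewise moment bounds. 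Two minor points: your displayed identity $\En(x_j^2\epsilon^2)-1=\En\{(x_j^2-1)(\epsilon^2-1)\}+\En(x_j^2-1)$ omits the term $\En(\epsilon^2-1)$ (harmless, since you must control $\En(\epsilon^2)$ for the denominator anyway, via Rosenthal as in the paper, but as written it is incorrect); and the admissible range for the self-normalized moderate deviations used here is governed by $n^{1/6}$ times the $L_2$/$L_3$ moment ratio, which SN does guarantee exceeds $t_n$, so that part of your plan is fine.
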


This lemma in combination with Theorem 1 of the main text imply the following result:

\begin{corollary}
Consider the model described in  the main text. Suppose that Conditions RE and SN hold, and $ (s/n) \log (p/\alpha) \to 0$ as $n \to \infty$. Let $\lambda$ be specified according to the asymptotic or exact option in (\ref{non-normal}).  There is an $o(1)$ term such that
 with probability at least $1-\alpha\{1+o(1)\}$
$$
\kappa \|\hat \beta - \beta_0\|_2  \leq  \|\hat \beta - \beta_0\|_{2,n} \leq  C_n \sigma \left\{\frac{2s\log (2p/\alpha)}{n}\right\}^{1/2},  \ \ C_n = \frac{2( 1+c )}{\kappa \{1- o(1)\}}.
$$
\end{corollary}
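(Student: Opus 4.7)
The plan is to mimic the derivation of Corollary 3 in the main text, substituting Lemma 3 of this supplement for Lemma 2 as the source of penalty-level control. First, I would invoke Lemma 3(1) to conclude that under either the exact or the asymptotic option for $\lambda$, the regularization event $\lambda \geq c\Lambda$ holds with probability at least $1-\alpha\{1+o(1)\}$; simultaneously, Lemma 3(2), combined with the defining formula for the asymptotic option, yields in both cases the magnitude bound $\lambda \leq \{1+o(1)\}\,c\,n^{1/2}\Phi^{-1}(1-\alpha/2p) \leq \{1+o(1)\}\,c\,\bsqrt{2n\log(2p/\alpha)}$.

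Second, I would control the random factor $\bsqrt{\En(\epsilon^2)}$ that enters the finite-sample conclusion of Theorem 1. Condition SN supplies $\Ep_{F_0}(|\epsilon|^q) < \infty$ for some $q>4$, hence $\mathrm{Var}_{F_0}(\epsilon^2) < \infty$. Rosenthal's inequality applied to the zero-mean summands $\epsilon_i^2 - 1$ yields $\Pr\{|\En(\epsilon^2)-1| > \eta\} \lesssim \eta^{-q/2}n^{-q/4}$. Since Condition SN forces $\alpha \gg n^{-q/4}$ up to logarithmic factors, one may choose $\eta = \eta_n \to 0$ slowly enough that $\bsqrt{\En(\epsilon^2)} \leq \omega_n = 1 + o(1)$ on an event of probability $1 - o(\alpha)$.

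Third, I would verify the growth restriction $\lambda s^{1/2} \leq n\kappa_{\cc}\rho$ required by Theorem 1. Combining the magnitude bound on $\lambda$ with Condition RE ($\kappa_{\cc} \geq \kappa > 0$) and the hypothesis $(s/n)\log(p/\alpha) \to 0$ gives $\lambda s^{1/2}/(n\kappa_{\cc}) \leq \{1+o(1)\}(c/\kappa)\bsqrt{2s\log(2p/\alpha)/n} = o(1)$, so the restriction is satisfied with some $\rho = \rho_n \to 0$.

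Intersecting the two high-probability events from the first two steps (total probability at least $1-\alpha\{1+o(1)\}-o(\alpha) = 1-\alpha\{1+o(1)\}$) and plugging into the conclusion of Theorem 1 produces
$$\|\hat\beta-\beta_0\|_{2,n} \leq A_n\,\sigma\,\omega_n\,\lambda s^{1/2}/n \leq \frac{2(1+c)}{\kappa\{1-o(1)\}}\,\sigma\,\bbsqrt{\frac{2s\log(2p/\alpha)}{n}},$$
since the factor $c$ from the magnitude bound on $\lambda$ combines with $A_n = 2(1+1/c)/\{\kappa_{\cc}(1-\rho_n^2)\}$ to yield the claimed constant; the $\ell_2$-bound then follows from $\tilde\kappa_{\cc} \geq \kappa$ in Condition RE. The argument is essentially mechanical once Lemma 3 is in hand; the only mildly delicate bookkeeping is in calibrating $\eta_n$ so as to achieve $o(\alpha)$ error and $\omega_n = 1+o(1)$ simultaneously, which is where Condition SN's lower bound on $\alpha$ is used.
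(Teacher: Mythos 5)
Your proposal is correct and follows essentially the same route the paper intends: the supplement derives this corollary by combining Lemma 3 (regularization event and magnitude bound for the exact and asymptotic options under Condition SN) with Theorem 1, using Condition RE and $(s/n)\log(p/\alpha)\to 0$ to verify the growth restriction, and a Rosenthal-type concentration of $\En(\epsilon^2)$ around $1$ with error $o(\alpha)$ — exactly the ingredients and bookkeeping you supply. Your calibration of $\eta_n$ via $\alpha^{-2/q}n^{-1/2}=o(1)$ mirrors the $u_n$ used inside the paper's proof of Lemma 3, so no new ideas are needed and the argument goes through as you wrote it.
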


\section{Additional Computational Results}

\subsection{Overview of Additional Computational Results}
In the main text we formulated the $\LASSO$ as a convex conic programming problem. This fact allows us to use conic programming methods to compute the $\LASSO$ estimator. In this section we provide further details on these methods, specifically on (i) the first-order methods, (ii) the interior-point methods, and (iii) the componentwise search methods, as specifically adapted to solving conic programming problems.  We shall also compare the adaptation of these methods to $\LASSO$ with the respective adaptation of these methods to lasso.

\subsection{Computational Times} Our main message here is that the average running times for solving lasso and the $\LASSO$ are comparable in practical problems.
We document this in Table \ref{Table:Comp}, where we record the average computational times, in seconds, of the three computation methods mentioned above. The design for computational experiments is the same as in the main text.  In fact, we see that the $\LASSO$ is often slightly easier to compute than the lasso. The table also reinforces the typical behavior of the three principal computational methods. As the size of the optimization problem increases, the running time for an interior-point method grows faster than that for a first-order method. We also see, perhaps more surprisingly, that a simple componentwise method is particularly effective, and this might be due to a high sparsity of the solutions in our examples. An important remark here is that we did not attempt to compare rigorously across different computational methods to isolate the best ones, since these methods have different initialization and stopping criteria and the results could be affected by that. Rather our focus here is comparing the performance of each computational method as applied to lasso and the $\LASSO$. This is an easier comparison problem, since given a computational method, the initialization and stopping criteria are similar for two problems.

\begin{center}
\begin{table}[h]
\begin{center}
\begin{tabular}{rccc}
$n=100$, $p=500$ & Componentwise & First Order & Interior Point\\
\\
lasso &    0$\cdot$2173 &  10$\cdot$99 & 2$\cdot$545 \\
$\LASSO$ &  0$\cdot$3268 &    7$\cdot$345 & 1$\cdot$645\\
\\
$n=200$, $p=1000$ & Componentwise & First Order & Interior Point\\
\\
lasso &    0$\cdot$6115 &  19$\cdot$84 & 14$\cdot$20 \\
$\LASSO$ &  0$\cdot$6448 &    19$\cdot$96 & 8$\cdot$291\\
\\
$n=400$, $p=2000$ & Componentwise & First Order & Interior Point\\
\\
lasso &    2$\cdot$625 &  84$\cdot$12 & 108$\cdot$9 \\
$\LASSO$ &   2$\cdot$687 &   77$\cdot$65 & 62$\cdot$86 \\
\\
\end{tabular}\caption{We use the same design as in the main text, with $s=5$ and $\sigma=1$, we averaged
the computational times over 100 simulations.}\label{Table:Comp}
\end{center}
\end{table}
\end{center}

\subsection{Details on Computational Methods}

Below we discuss in more detail the applications of these methods for the lasso and the $\LASSO$. The similarities between the lasso and the $\LASSO$ formulations derived below provide a theoretical justification for the similar computational performance. \\

{\bf Interior-point methods.} Interior-point method (ipm) solvers typically focus on solving conic programming problems in standard form: \begin{equation}\label{ConicProg} \min_{w} c'w \ : \ Aw = b, w \in K,\end{equation}
 where $K$ is a cone. The main difficulty of the problem arises because the conic constraint is binding at the optimal solution. To overcome the difficulty, ipms regularize the objective function with a barrier function so that the optimal solution of the regularized problem naturally lies in the interior of the cone. By steadily scaling down the barrier function, an ipm creates a sequence of solutions that converges to the solution of the original problem (\ref{ConicProg}).

In order to formulate the optimization problem associated with the lasso estimator as a conic programming problem (\ref{ConicProg}), specifically, associated with the second-order cone $Q^{k+1} = \{ (v,t)\in \RR^{k+1} : t\geq \|v\|\}$, we let $\beta = \beta^+ - \beta^-$ for $\beta^+\geq 0$ and $\beta^-\geq 0$. For any vector  $v\in\RR^n$ and scalar $t\geq 0$, we have that $v'v\leq t$ is equivalent to $\| ( v, (t-1)/2) \|_2 \leq (t+1)/2.$ The latter can be formulated as a second-order cone constraint. Thus, the lasso problem can be cast as
$$
 \displaystyle \min_{t,\beta^+,\beta^-,a_1,a_2,v}  \displaystyle \frac{t}{n} + \frac{\lambda}{n}\sum_{j=1}^p (\beta_j^++\beta_j^-) :
\begin{array}{l}
v = Y-X\beta^++X\beta^-,  t = -1 + 2a_1,  t =  1 + 2a_2\\
 (v,a_2,a_1)\in Q^{n+2}, \ t \geq 0, \beta^+ \in \RR^p_+, \ \beta^- \in \RR^p_+.\\
\end{array}
$$
Recall from the main text that the $\LASSO$ optimization problem can be cast similarly, but without  auxiliary variables $a_1, a_2$:
$$
 \displaystyle \min_{t,\beta^+,\beta^-,v}  \displaystyle \frac{t}{n^{1/2}} + \frac{\lambda}{n}\sum_{j=1}^p (\beta_j^++\beta_j^-) :
\begin{array}{rl}
& v = Y-X\beta^++X\beta^-\\
& (v,t)\in Q^{n+1},  \beta^+ \in \RR^p_+, \ \beta^- \in \RR^p_+.\\
\end{array}
$$
These conic formulations allow us to make several different computational methods directly applicable to compute these estimators.

{\bf First-order methods.}
Modern first-order methods focus on structured convex problems of the form:
 $$\min_w f\{A(w)+b\} + h(w)\ \ \mbox{or} \ \ \min_w h(w) \ : \ A(w) + b \in K,$$
 where $f$ is a smooth function and $h$ is a structured function that is possibly non-differentiable or having extended values. However it allows for an efficient proximal function to be solved, see `Templates for Convex Cone Problems with Applications to Sparse Signal Recovery' arXiv working paper 1009.2065 by Becker, Cand\`{e}s and Grant.
 By combining projections and subgradient information, these methods construct a sequence of iterates with strong theoretical guarantees. Recently these methods have been specialized for conic problems, which includes the lasso and the $\LASSO$ problems. 

Lasso is cast as $$\min_w f\{A(w)+b\} + h(w)$$ where $f(\cdot)=\|\cdot\|^2/n$, $h(\cdot)=(\lambda/n)\|\cdot\|_1$,  $A=X$, and $b=-Y$.
The projection required to be solved on every iteration for a given current point $\beta^k$ is
$$ \beta(\beta^{k}) = \arg \min_{\beta} 2\En\{x(y-x'\beta^{k})\}'\beta + \frac{1}{2}\mu\|\beta-\beta^k\|^2 + \frac{\lambda}{n}\|\beta\|_1,$$
where $\mu$ is a smoothing parameter. It follows that the minimization in $\beta$ above is separable and can be solved by soft-thresholding as
$$ \beta_j(\beta^{k}) = {\rm sign}\left[\beta_j^{k} + \frac{2\En\{x_{j}(y-x'\beta^{k})\}}{\mu} \right]\max\left[ \left|\beta_j^{k} + \frac{2\En\{x_{j}(y-x'\beta^{k})\}}{\mu}\right| - \frac{\lambda}{n\mu}, \ 0 \right].$$

For the $\LASSO$ the ``conic form" is  $$\min_w h(w) \ : \ A(w) + b \in K.$$
Letting $Q^{n+1}= \{ (z,t) \in \RR^n\times \RR \ : t\geq \|z\|\}$ and $h(w)=f(\beta,t)=t/n^{1/2}+(\lambda/n)\|\beta\|_1$ we have that
$$ \min_{\beta,t} \frac{t}{n^{1/2}} + \frac{\lambda}{n}\|\beta\|_1 : \ A(\beta,t) +b\in Q^{n+1} $$
where $b = (-Y',0)'$ and $A(\beta,t)\mapsto (\beta'X',t)'$.

In the associated dual problem, the dual variable $z\in\RR^n$ is constrained to be $\|z\|\leq 1/n^{1/2}$ (the corresponding dual variable associated with $t$ is set to $1/n^{1/2}$ to obtain a finite dual value). Thus we obtain
$$ \max_{\|z\|\leq 1/n^{1/2}} \inf_\beta \frac{\lambda}{n}\|\beta\|_1 + \frac{1}{2}\mu \|\beta-\beta^k\|^2-z'(Y-X\beta).$$
Given iterates $\beta^k, z^k$, as in the case of lasso, the minimization in $\beta$ is separable and can be solved by soft-thresholding as
$$ \beta_j(\beta^{k},z^k) = {\rm sign}\left\{\beta_j^k + (X'z^k/\mu)_j \right\}\max\left\{ \left|\beta_j^k + (X'z^k/\mu)_j\right| - \lambda/(n\mu), \ 0 \right\}.$$
The dual projection accounts for the constraint $\|z\|\leq 1/n^{1/2}$ and solves
$$z(\beta^k,z^k)=\arg \min_{\|z\|\leq 1/n^{1/2}} \frac{\theta_k}{2t_k}\|z-z_k\|^2 + (Y-X\beta^k)'z$$
which yields
$$ z(\beta^k,z^k) = \frac{z_k+(t_k/\theta_k)(Y-X\beta^k)}{\|z_k+(t_k/\theta_k)(Y-X\beta^k)\|}\min\left\{\frac{1}{n^{1/2}}, \ \|z_k+(t_k/\theta_k)(Y-X\beta^k)\|\right\}.$$

It is useful to note that, in the Tfocs package, the following command line computes the square-root lasso estimator: \\
\noindent {\it opts = tfocs\_SCD;}\\
\noindent {\it [ beta, out ] = tfocs\_SCD( prox\_l1(lambda/n), \{ X, -Y \}, proj\_l2(1/sqrt(n)), 1e-6 );}\\
where {n} denotes the sample size, {\rm lambda} the penalty choice, {\rm X} denote the {\rm n} by {\rm p} design matrix, and {\rm Y} a vector with $n$ observations of the response variable. The square-root lasso estimator is stored in the vector {\rm beta}.

{\bf Componentwise Search.} A common approach to solve unconstrained multivariate optimization problems  is to do componentwise
minimization, looping over components until convergence is achieved. This is particulary attractive in cases where the minimization over a single component can be done very efficiently.

Consider the following lasso optimization problem:
$$ \min_{\beta \in \RR^p } \En\{(y-x'\beta)^2\} + \frac{\lambda}{n}\sum_{j=1}^p\gamma_j|\beta_j|.$$
Under standard normalization assumptions we would have $\gamma_j=1$ and $\En(x_{j}^2) = 1$ ($j=1,\ldots,p$). The main
ingredient of the componentwise search for lasso is the rule that sets optimally the value of $\beta_j$ given fixed the values of the remaining variables:

For a current point $\beta$, let $\beta_{-j}=(\beta_1,\beta_2,\ldots,\beta_{j-1},0,\beta_{j+1},\ldots,\beta_p)'$: \\

If $2\En\{ x_{j}(y-x'\beta_{-j})\}  > \lambda\gamma_j/n $,  the optimal choice for $\beta_j$ is
$$ \beta_j = \left[- 2\En\{ x_{j}(y-x'\beta_{-j})\} + \lambda\gamma_j/n \right]/\En( x_{j}^2). $$

If $2\En\{ x_{j}(y-x'\beta_{-j})\}  < - \lambda\gamma_j/n $, the optimal choice for $\beta_j$ is
$$ \beta_j =  \left[2\En\{ x_{j}(y-x'\beta_{-j})\} - \lambda\gamma_j/n \right]/\En( x_{j}^2).$$

If $2|\En\{ x_{j}(y-x'\beta_{-j})\}|  \leq \lambda\gamma_j/n $, then $\beta_j = 0$. \\

This simple method is particularly attractive when the optimal solution is sparse which is typically the case of interest under choices of penalty levels that dominate the noise like $\lambda \geq cn\|S\|_\infty$.

Despite the additional square-root, which creates a non-separable criterion function, it turns out that the componentwise minimization for the $\LASSO$ also has a closed form solution.
Consider the following optimization problem:
$$ \min_{\beta \in \RR^p } \[\En\{(y-x'\beta)^2\}\]^{1/2} + \frac{\lambda}{n}\sum_{j=1}^p\gamma_j|\beta_j|.$$
As before, under standard normalization assumptions we would have $\gamma_j=1$ and $\En(x_{j}^2) = 1$ for $j=1,\ldots,p$.

The main
ingredient of the componentwise search for $\LASSO$ is the rule that sets optimally the value of $\beta_j$ given fixed the values of the remaining variables:

 If $\En\{ x_{j}(y-x'\beta_{-j})\}  > (\lambda/n) \gamma_j\{\widehat Q(\beta_{-j})\}^{1/2}$, set
{\small $$ \beta_j = - \frac{\En\{ x_{j}(y-x'\beta_{-j})]}{E_n(x_{j}^2)} + \frac{\lambda \gamma_j}{\En(x_{j}^2)}  \frac{\[ \widehat Q(\beta_{-j}) -  \{\En( x_{j}y-x_{j}x'\beta_{-j})\}^2 \{\En(x_{j}^2)\}^{-1} \]^{1/2}}{\[ n^2 - \{\lambda^2\gamma_j^2 / \En(x_{j}^2)\}  \]^{1/2}}.      $$}

 If $\En\{ x_{j}(y-x'\beta_{-j})\}  < - (\lambda/n) \gamma_j\{\widehat Q(\beta_{-j})\}^{1/2}$, set
{\small $$ \beta_j = - \frac{\En\{ x_{j}(y-x'\beta_{-j})\}}{\En(x_{j}^2)} -\frac{\lambda \gamma_j}{\En(x_{j}^2)}  \frac{\[ \widehat Q(\beta_{-j}) - \{\En( x_{j}y_i-x_{j}x'\beta_{-j})\}^2 \{\En(x_{j}^2)\}^{-1} \]^{1/2}}{[ n^2 - \{\lambda^2\gamma_j^2 / \En(x_{j}^2)\}  ]^{1/2}}.$$}

 If $|\En\{ x_{j}(y-x'\beta_{-j})\}|  \leq (\lambda/n) \gamma_j\{\widehat Q(\beta_{-j})\}^{1/2}$, set $\beta_j=0$.

\section{Additional Monte Carlo Results}\label{Sec:Empirical}

\subsection{Overview of Additional Monte Carlo Results}

In this section we provide more extensive Monte Carlo experiments to assess the finite sample performance of the proposed $\LASSO$ estimator. First we compare the performances of lasso and $\LASSO$ for different distributions of the noise and different designs. Second we compare $\LASSO$ with several feasible versions of lasso that estimate the unknown parameter $\sigma$.

\subsection{Detailed performance comparison of lasso and $\LASSO$}

Regarding the parameters for lasso and $\LASSO$, we set the penalty level according to the asymptotic options defined in the main text:
$$ \text{lasso penalty:} \ \sigma c \ 2n^{1/2}\Phi^{-1}(1-\alpha/2p) \ \ \ \LASSO \ \text{penalty:} \ c \ n^{1/2}\Phi^{-1}(1-\alpha/2p) $$
 respectively, with $1-\alpha=0.95$ and $c=1.1$.
As noted in the main text, experiments with the penalty levels according to the exact option led to similar behavior.

We use the linear regression model stated in the introduction of the main text as a data-generating process, with  either standard normal, $t(4)$, or asymmetric exponential errors:
$
\text{(a) } \ \epsilon_i \sim  N(0,1), \  \text{ (b) } \ \epsilon_i  \sim t(4)/2^{1/2}, \   \  \text{ or } \ \ \text{ (c) } \ \epsilon_i \sim \exp(1)-1 $
so that $\Ep(\epsilon^2_i) = 1$ in either case. We set the true parameter value as $
 \beta_0 = (1,1,1,1,1,0,\ldots,0)'$,  and we vary the parameter $\sigma$ between $0.25$ and $3$.  The number of regressors  is $p=500$,  the sample size is $n =100$, and we used $100$ simulations for each design. We generate  regressors as $x_i \sim N(0, \Sigma)$. We consider two design options for $\Sigma$:  Toeplitz correlation matrix $\Sigma_{jk}=(1/2)^{|j-k|}$ and equicorrelated correlation matrix $\Sigma_{jk}=(1/2)$.

\begin{figure}[!h]
\centering
\includegraphics[width=4in, height=2.5in]{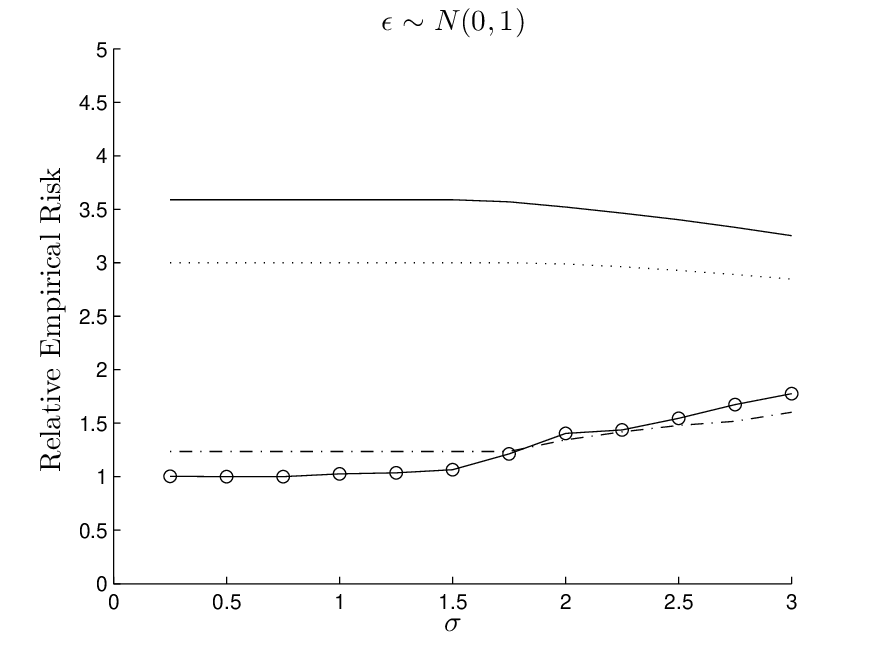}
\includegraphics[width=4in, height=2.5in]{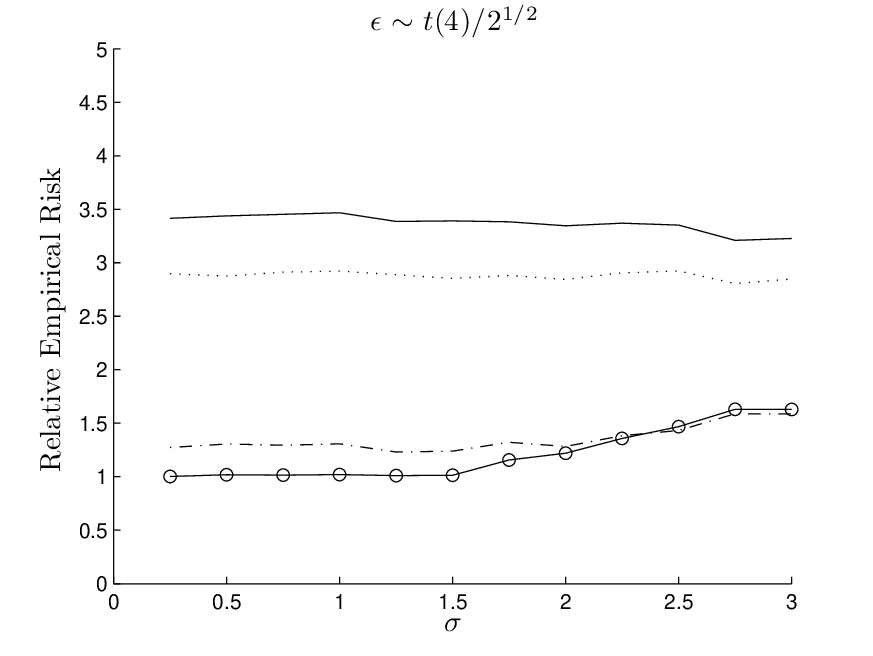}
\includegraphics[width=4in, height=2.5in]{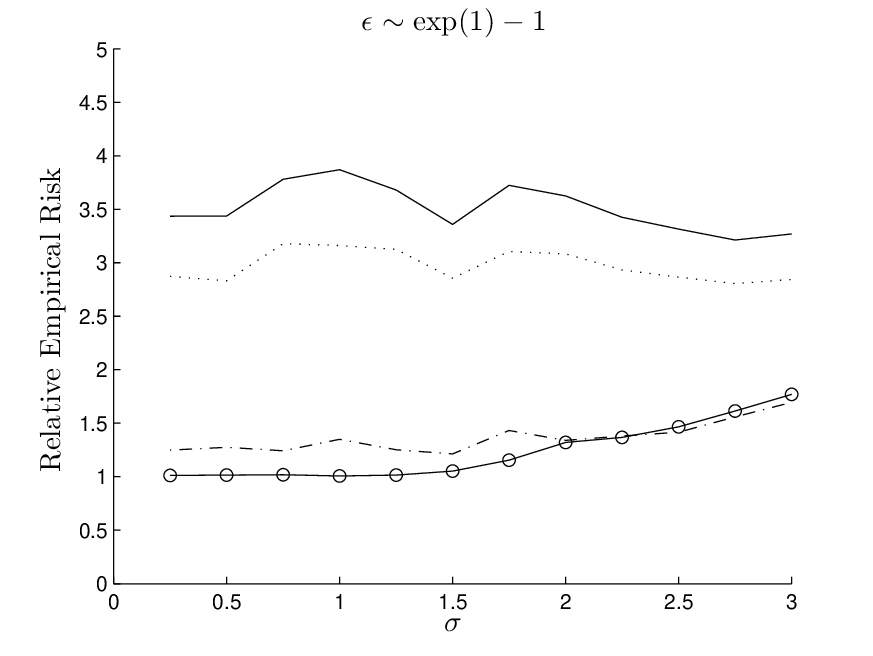}
\caption{Average relative empirical risk of infeasible lasso (dots), square-root lasso (solid), post infeasible lasso (dot-dash), and post square-root lasso (solid with circle),  with respect to the oracle estimator, that knows the true support, as a function of the standard deviation of the noise $\sigma$. In this experiment we used Toeplitz correlation matrix $\Sigma_{jk}=(1/2)^{|j-k|}$.}\label{Fig:MCcomp01}
\end{figure}

\begin{figure}[!h]
\centering
\includegraphics[width=3in, height=2in]{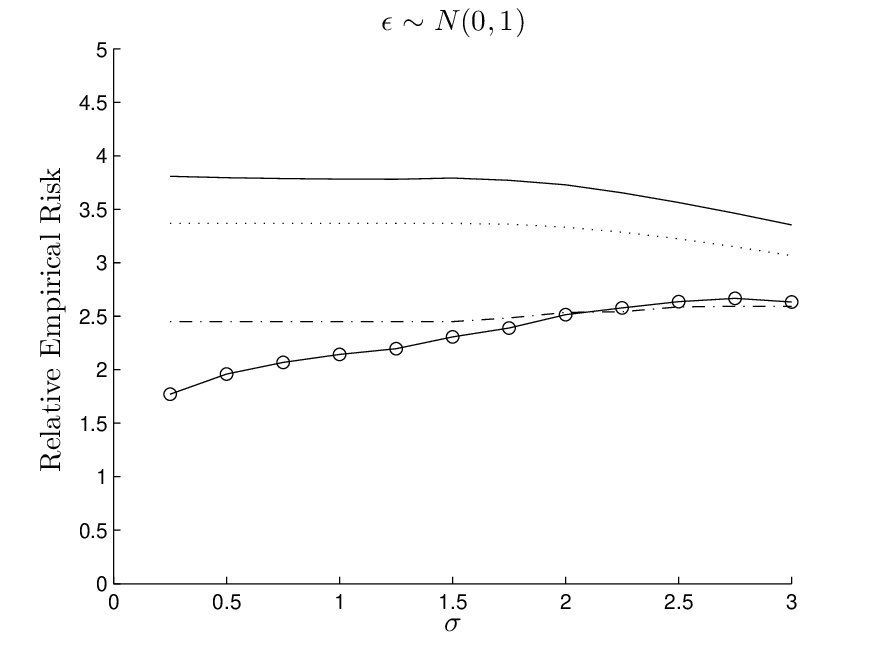}
\includegraphics[width=3in, height=2in]{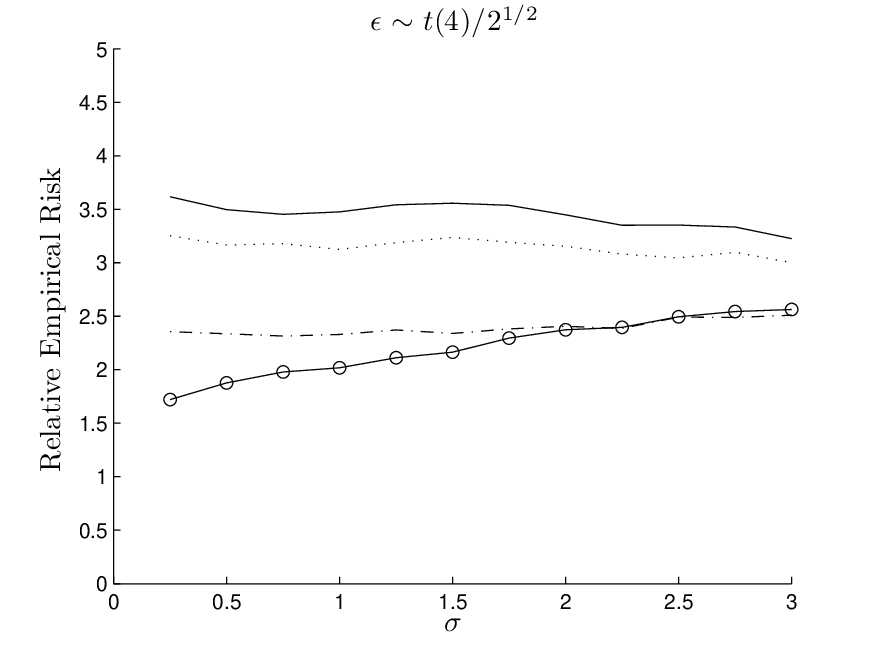}
\includegraphics[width=3in, height=2in]{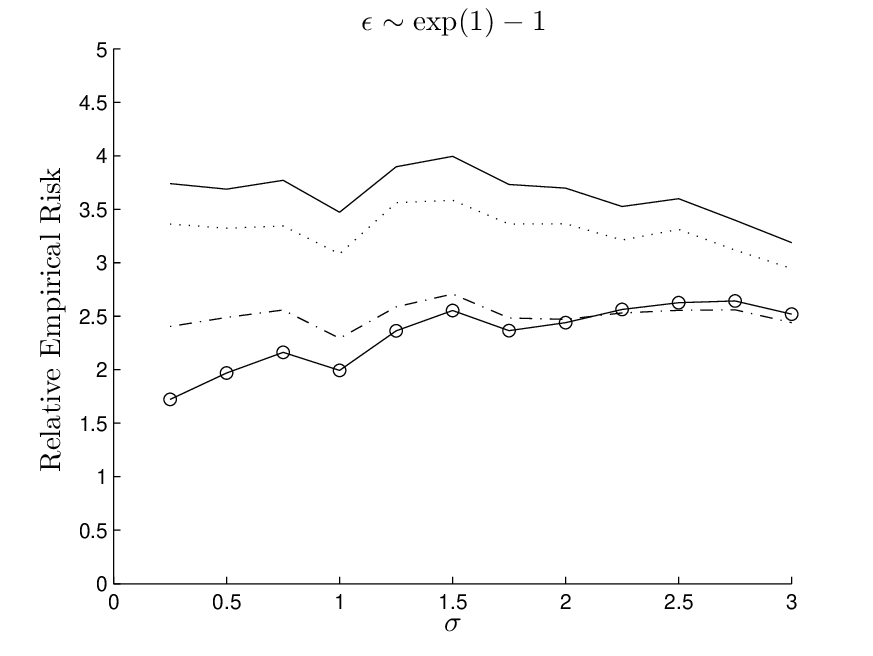}
\caption{Average relative empirical risk of infeasible lasso (dots), square-root lasso (solid), post infeasible lasso (dot-dash), and post square-root lasso (solid with circle),  with respect to the oracle estimator, that knows the true support, as a function of the standard deviation of the noise $\sigma$. In this experiment we used equicorrelated correlation matrix $\Sigma_{jk}=(1/2)$.}\label{Fig:MCcomp02}
\end{figure}

The results of computational experiments for designs a), b) and c) in Figures \ref{Fig:MCcomp01} and \ref{Fig:MCcomp01} illustrates the theoretical results indicated obtained in the paepr. That is, the performance of the non-Gaussian cases b) and c) is  very similar to the Gaussian case. Moreover, as expected, higher correlation between covariates translates into larger empirical risk.

The performance of $\LASSO$ and post $\LASSO$ are relatively close to the performance of lasso and post lasso that knows $\sigma$. These results are in close agreement with our theoretical results, which state that the  upper bounds on empirical
risk for $\LASSO$  asymptotically approach the analogous bounds for infeasible lasso.

\subsection{Comparison with feasible versions of lasso}

Next we focus on the Toeplitz design above to compare many traditional estimators related to lasso. More specifically we consider the following estimators: (1) oracle estimator, which is ols applied to the true minimal model (which is unknown outside the experiment),
(2) infeasible lasso with known $\sigma$ (which is unknown outside the experiment),
   (3)  post lasso, which applies ols to the model selected by infeasible lasso, (4)  $\LASSO$, (5) post $\LASSO$, which applies least squares
    to the model selected by $\LASSO$,
(6) 1-step feasible lasso, which is lasso with an estimate of $\sigma$ given by the conservative upper bound $\hat\sigma = [\En\{(y-\bar y)^2\}]^{1/2}$ where $\bar y = \En(y)$, (7) post 1-step lasso, which applies least squares to the model selected by 1-step lasso,
(8) 2-step lasso, which is lasso with an estimates of $\sigma$ given by the 1-step lasso estimator $\widetilde \beta$, namely $\hat \sigma = \{\widehat Q(\widetilde \beta)\}^{1/2}$, (9) post 2-step lasso, which applies least squares to the model selected by 2-step lasso,
(10) cv-lasso,  which is lasso with an estimate of $\lambda$ given by K-fold cross validation, (10) post cv-lasso, which applies OLS to the model selected by K-fold lasso, (11)  $\LASSO$ (1/2),    which uses the penalty of $\LASSO$ multiplied by $1/2$,
(12) post  $\LASSO$ (1/2), which applies least squares to the model selected by  $\LASSO$ (1/2).
We generate  regressors as $x_i \sim N(0, \Sigma)$ with the Toeplitz correlation matrix $\Sigma_{jk}=(1/2)^{|j-k|}$.

We focus our evaluation of the performance of an estimator $\widetilde \beta$ on the relative average empirical risk with respect to the oracle estimator $\beta^*$,  $\Ep(\|\widetilde \beta-\beta_0\|_{2,n})/\Ep(\| \beta^*-\beta_0\|_{2,n})$.

\begin{figure}[!h]
\centering Comparison of square-root lasso to cross-validation choice of $\lambda$ \\
\includegraphics[width=3in,height=2in]{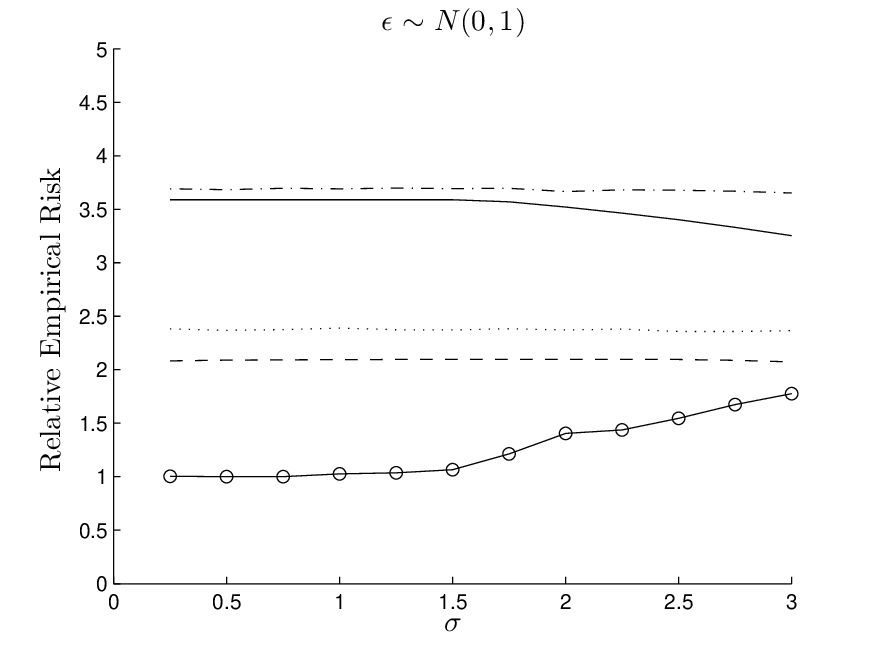}
\includegraphics[width=3in, height=2in]{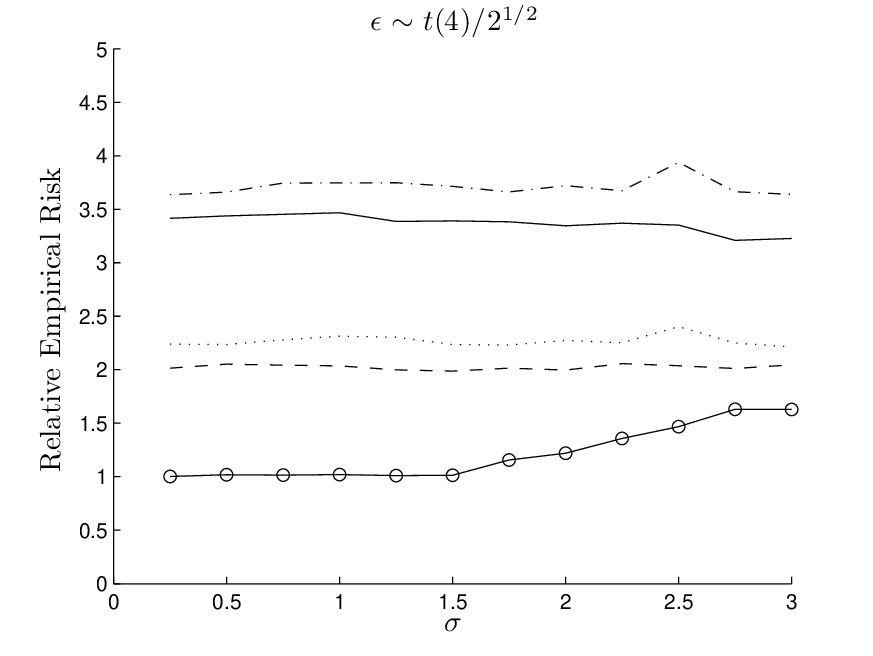}
\includegraphics[width=3in, height=2in]
{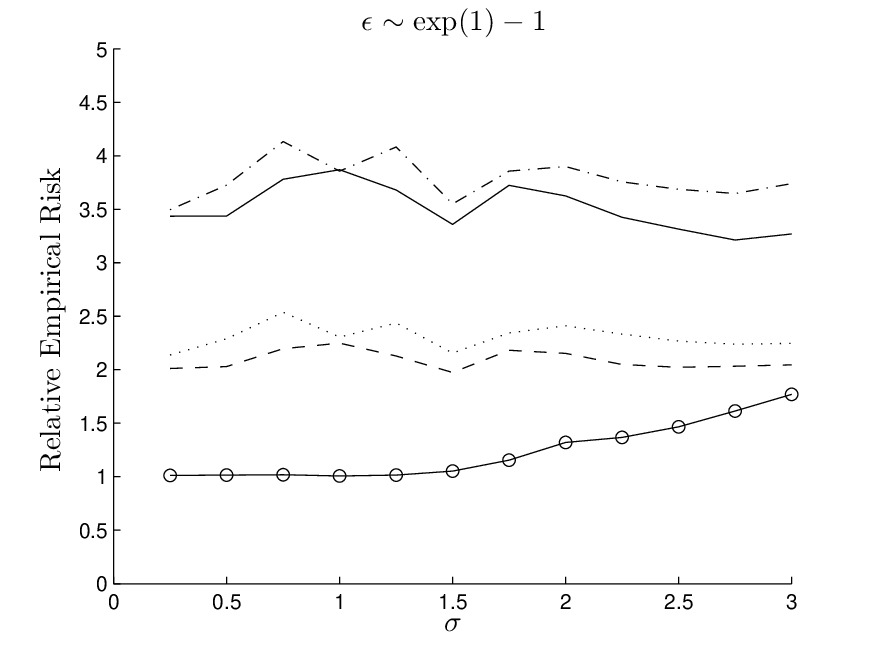}
\caption{Average relative empirical risk of square-root lasso (solid), post square-root lasso (solid with circle), cv-lasso (dots),  post cv-lasso (dot-dash), and square-root lasso (1/2) (dashes), with respect to the oracle estimator, that knows the true support, as a function of the standard deviation of the noise $\sigma$.
In this experiment we used Toeplitz correlation matrix $\Sigma_{jk}=(1/2)^{|j-k|}$.}\label{Fig:MCfirst01}
\end{figure}

 \begin{figure}[!h]
\centering  Comparison of square-root lasso to other feasible lasso methods \\
\includegraphics[width=3in,height=2in]{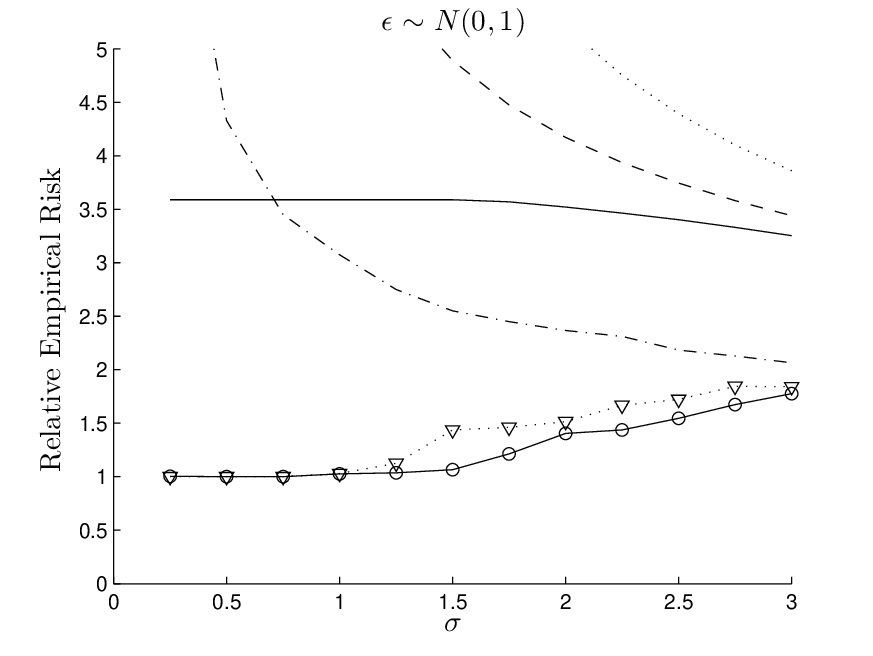}
\includegraphics[width=3in, height=2in]{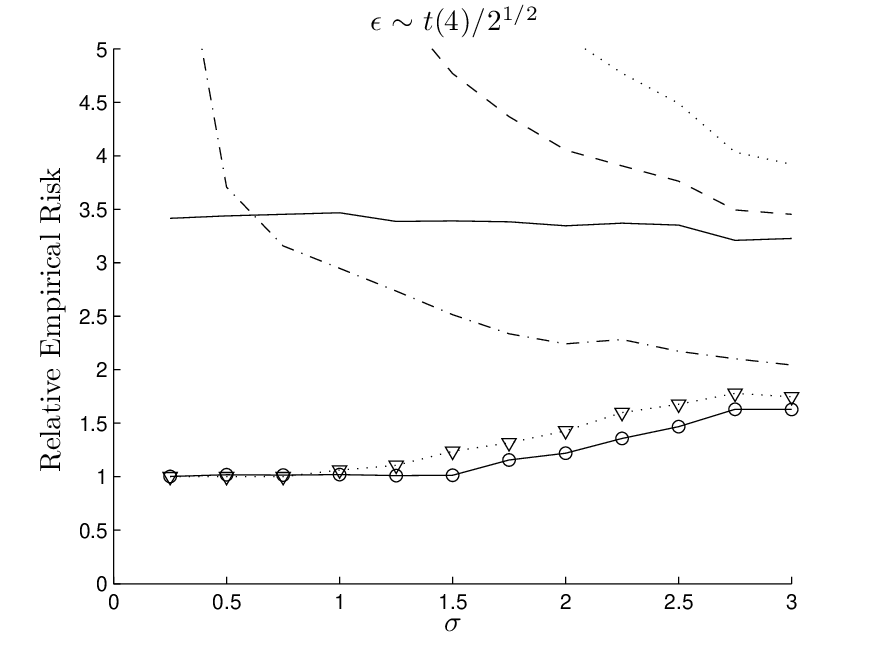}
\includegraphics[width=3in, height=2in]{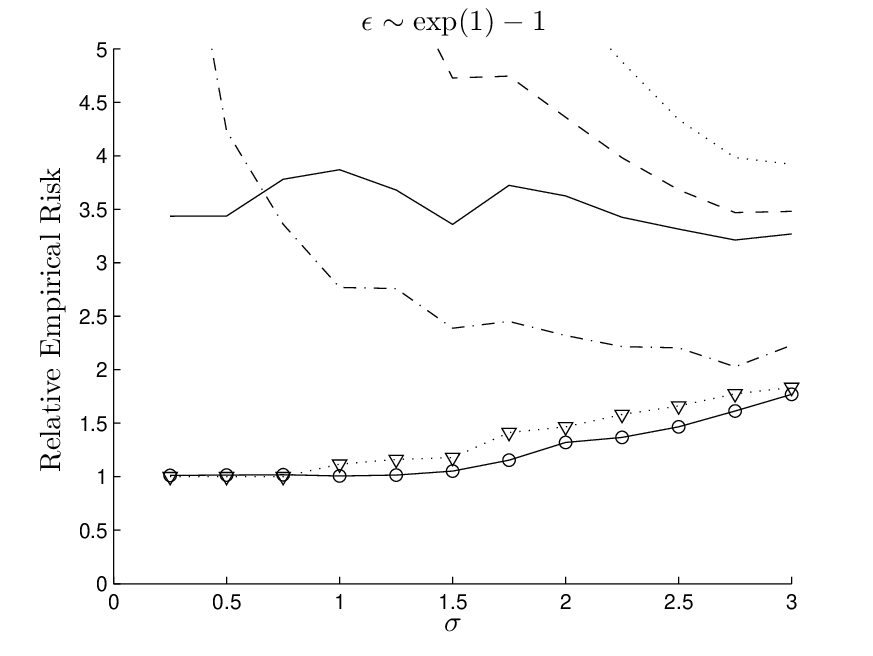}
\caption{
Average relative empirical risk of square-root lasso (solid), post square-root lasso (solid with circle), 1-step lasso (dots),  post 1-step lasso (dot-dash), 2-step lasso (dashes), post 2-step lasso (dots with triangle), with respect to the oracle estimator, that knows the true support, as a function of the standard deviation of the noise $\sigma$. In this experiment we used Toeplitz correlation matrix $\Sigma_{jk}=(1/2)^{|j-k|}$.}\label{Fig:MCfirst02}
\end{figure}




We present the results comparing square-root lasso to lasso where the penalty parameter $\lambda$ is chosen based on $K$-fold cross-validation procedure. We report the experiments for designs a), b) and c) in Figure \ref{Fig:MCfirst01}. The first observation is that as indicated by theoretical results of the paper, the performance of the non-Gaussian cases b) and c) is  very similar to the Gaussian case so we focus on the later.

We observe that cv-lasso does improve upon square-root lasso (and infeasible lasso as well) with respect to empirical risk. The cross-validation procedure selects a smaller penalty level, which reduces the bias. However, cv-lasso is uniformly dominated by a square-root lasso method with penalty scaled by 1/2. Note the computational burden of cross-validation is substantial since one needs to solve several different lasso instances.
Importantly, cv-lasso does not perform well for purposes of model selection. This can be seen from the fact that post cv-lasso performs substantially worse than cv-lasso. Figure \ref{Fig:MCfirst01} also illustrates that square-root lasso performs substantially better than cv-lasso for purposes of models selection since post square-root lasso thoroughly dominates all other feasible methods considered.

Figure \ref{Fig:MCfirst02} compares other feasible lasso methods that are not as computational intense as cross-validation. The estimator with the best performance for all noise levels considered was the post $\LASSO$ reflecting the good model selection properties of the square-root lasso. The simple 1-step lasso with conservative estimate of $\sigma$ does very poorly.  The 2-step lasso does better, but it is still dominated by $\LASSO$. The post 1-step lasso and the post 2-step lasso are also  dominated by the post $\LASSO$ on all noise levels tested.

\section{Proofs of Additional Theoretical Results}

\begin{proof}[Proof of Lemma \ref{Lemma: non-normalMprimeRprime}] Part 1.  Let $t_n=\Phi^{-1}(1-\alpha/2p)$ and for some $w_n\to \infty$ slowly enough let $u_n = w_n \alpha^{-2/q}n^{-1/2} \log^{1/2} (n\vee p)  < 1/2$ for $n$ large enough.
Thus,
$$ \begin{array}{rl}
\Pr \( \Lambda > n^{1/2} t_n \mid X\) & \leq \Pr \[ \max_{1\leq j\leq p} \frac{n^{1/2}|\En(x_{j}\epsilon)|}{\{\En(x_{j}^2\epsilon^2)\}^{1/2}} > (1-u_n)t_n \mid X\] + \\
&  + \Pr\[\max_{1\leq j\leq p}\left\{\frac{\En(x_{j}^2\epsilon^2)}{\En(\epsilon^2)}\right\}^{1/2} > 1+u_n \mid X\], \\
\end{array}$$
 since $(1+u_n)(1-u_n)<1$. To bound the first term above, by Condition SN, we have that for $n$ large enough $t_n + 1\leq n^{1/6}/ [\ell_n\max_{1\leq j\leq p}\{\En(|x_{j}|^3)\Ep(|\epsilon_i|^3)\}^{1/3}]$ where $\ell_n\to \infty$ slowly enough. Thus, by the union bound and Lemma \ref{Lemma: MDSN}
{\small $$ \begin{array}{rl}
&\displaystyle \Pr \[ \max_{1\leq j\leq p} \frac{n^{1/2}|\En(x_{j}\epsilon)|}{\left\{\En(x_{j}^2\epsilon^2)\right\}^{1/2}} > (1-u_n)t_n \mid X\] \\ & \ \ \ \ \ \ \ \ \ \ \ \ \displaystyle  \leq p \max_{1\leq j\leq p}  \Pr \[ \frac{n^{1/2}|\En(x_{j}\epsilon)|}{\{\En(x_{j}^2\epsilon^2)\}^{1/2}} > (1-u_n)t_n \mid X\] \\
& \ \ \ \ \ \ \ \ \ \ \ \  \leq  2p\bar \Phi\{(1-u_n)t_n \} \left( 1 + \frac{A}{\ell_n^3}\right) \\
 &\ \ \ \ \ \ \ \ \ \ \ \  \leq \alpha \( 1 + \frac{1}{t_n^2}\)\frac{\exp(t_n^2u_n)}{1-u_n} \left( 1 + \frac{A}{\ell_n^3}\right) \\
\end{array}$$} where $t_n^2u_n=o(1)$ under condition SN, and the last inequality follows from standard bounds on  $\bar \Phi = 1 - \Phi$, and calculations similar to those in the proof of Lemma 1 of the main text. Moreover,
{\small $$ \begin{array}{rl}\displaystyle \Pr\[ \max_{1\leq j\leq p} \left\{\frac{|\En(x_{j}^2\epsilon^2)|}{\En(\epsilon^2)}\right\}^{1/2} > 1+u_n \mid X\]  &\displaystyle \leq \Pr\left\{ \max_{1\leq j\leq p} |\En(x_{j}^2\epsilon^2)|> 1+u_n  \mid X\right\} + \\
& + \Pr\left\{\En(\epsilon^2) < 1-(u_n/2)\right\} \\
\end{array}
$$}
since $1/(1+u_n) \leq 1-u_n+u_n^2 \leq 1-(u_n/2)$ since $u_n\leq 1/2$. It follows that
$$\begin{array}{rl}
 \Pr\{ \En(\epsilon^2) < 1-(u_n/2) \} & \leq \Pr\{ |\En(\epsilon^2)-1| > u_n/2 \} \\
 & \lesssim \alpha w_n^{-q/2}\log^{-q/4} (n\vee p) = o(\alpha)\end{array}$$
by the choice of $u_n$ and the application of Rosenthal's inequality.

Moreover, for $n$ sufficiently large, letting $\tau_1 = \tau_2 = \alpha / w_n^{1/2}$, we have
$$\begin{array}{rl}
u_n & = w_n\alpha^{-2/q}n^{-1/2}\log^{1/2}(p\vee n) \\
& \geq 4\left\{\frac{2\log(2p/\tau_1)}{n}\right\}^{1/2} \(\frac{\Ep(|\epsilon_i|^q)}{\tau_2}\)^{2/q} \max_{1\leq i\leq n}\|x_{i}\|_\infty^2\end{array}$$
by condition SN since we have $q>4$, $\max_{1\leq i\leq n}\|x_{i}\|_\infty$ is uniformly bounded above, $\log(2p/\tau_1)\lesssim \log(p\vee n)$,  and $w_n\to \infty$.
Thus, applying Lemma \ref{Lemma:ProcessSecond}, noting the relation above, we have
$$ \begin{array}{rl}
\Pr\(\max_{1\leq j\leq p}\En(x_{j}^2\epsilon^2) > 1+u_n \mid X\) & \displaystyle \leq \Pr\(\max_{1\leq j\leq p}|\En\{x_{j}^2(\epsilon^2-1)\}| > u_n \mid X\)\\
& \leq \tau_1 + \tau_2 = o(\alpha).\\
\end{array}$$

Part 2. Let $t_n=\Phi^{-1}(1-\alpha/2p)$ and for some $w_n\to \infty$ slowly enough let $u_n = w_n \alpha^{-2/q}n^{-1/2} \log^{1/2} (n\vee p)  < 1/2$ for $n$ large enough. Thus,
{\small $$ \begin{array}{rl}
\displaystyle \Pr \left\{ \Lambda > (1+u_n)(1+1/t_n)n^{1/2} t_n \mid X\right\} &\displaystyle  \leq \Pr \[
\max_{1\leq j\leq p} \frac{n^{1/2}|\En(x_{j}\epsilon)|}{\{\En(x_{j}^2\epsilon^2)\}^{1/2}} > (1+1/t_n)t_n \mid X\] + \\
&\displaystyle   + \Pr\[\max_{1\leq j\leq p}\left\{\frac{\En(x_{j}^2\epsilon^2)}{\En(\epsilon^2)}\right\}^{1/2} > 1+u_n \mid X\]. \\
\end{array}$$}
By the same argument as in part 1 we have
$$\Pr\[\max_{1\leq j\leq p}\left\{\frac{\En(x_{j}^2\epsilon^2)}{\En(\epsilon^2)}\right\}^{1/2} > 1+u_n \mid X\] = o(\alpha).$$

To bound the first term above, by Condition SN, we use that for $n$ large enough $(1+1/t_n)t_n + 1\leq n^{1/6}/ \ell_n\max_{1\leq j\leq p}\{\En(|x_{ij}|^3)\Ep(|\epsilon_i|^3)\}^{1/3}$ where $\ell_n\to \infty$ slowly enough. Thus, by the union bound and Lemma \ref{Lemma: MDSN}
\begin{align*}
\displaystyle &\Pr \[ \max_{1\leq j\leq p} \frac{n^{1/2}|\En(x_{j}\epsilon)|}{\left\{\En(x_{j}^2\epsilon^2)\right\}^{1/2}} > (1+1/t_n)t_n \mid X\] \displaystyle \\
& \leq p \max_{1\leq j\leq p}  \Pr \[ \frac{n^{1/2}|\En(x_{j}\epsilon)|}{\{\En(x_{j}^2\epsilon^2)\}^{1/2}} > t_n+1 \mid X\] \\
&\leq  2p\bar \Phi(t_n + 1) \left( 1 + \frac{A}{\ell_n^3}\right) \\
 & \leq \alpha \exp(-t_n-1/2)\frac{t_n}{t_n+1}\left\{ 1 + \frac{1}{(t_n+1)^2}\right\}\left( 1 + \frac{A}{\ell_n^3}\right)\\
 &  = \alpha \{ 1+o(1)\} \exp(-t_n-1/2), \\
\end{align*} where  the last inequality follows from standard bounds on $\bar \Phi$ and calculations  and calculations similar to those in the proof of Lemma 1 of the main text.

Therefore, for $n$ sufficiently large,
$$ \begin{array}{rl}
\Pr \{ \Lambda > (1+u_n)(1+1/t_n)n^{1/2} t_n \mid X\} & < \alpha \\
\end{array}$$ so that $\Lambda_{F_0}(1-\alpha\mid X) \leq  (1+u_n)(1+1/t_n)n^{1/2} t_n = \{1+o(1)\}n^{1/2} t_n$.
\end{proof}

\section{Omitted Proofs from the Main Text}

\subsection{Omitted Part of Proof of Lemma 1.}

Claim in the proof of Lemma 1: For independent random variables $\epsilon_i \sim N(0,1)$ $(i=1,\ldots,n)$ and any $0<r_n<1$, we have
$$ \Pr\{ \En(\epsilon^2) < (1-r_n)^2 \} \leq {\rm exp}(-nr_n^2/4).$$
It follows from
$$
\begin{array}{rl}
\Pr\{ \En(\epsilon^2) < (1-r_n)^2 \} & = \Pr\{ \En(\epsilon^2) < 1-2r_n+r_n^2 \} \\
& \leq \Pr\{ \En(\epsilon^2) < 1-r_n \} \\
& = \Pr\{ \En(\epsilon^2-1) < -r_n \} \\
& = \Pr\{ \sum_{i=1}^n(\epsilon^2_i-1) < -nr_n \} \\
& = \Pr\{ \sum_{i=1}^na_i(\epsilon^2_i-1) < -2|a|_2 \sqrt{n}r_n/2 \} \\
\end{array}
$$ where we have $a_i=1$ $(i=1,\ldots,n)$, so that $|a|_2 = \sqrt{n}$. Applying the second inequality of Lemma 1 of \cite{LaurentMassart2000} 
for $\sqrt{x}=\sqrt{n}r_n/2$, we have
$$\Pr\{ \En(\epsilon^2) < (1-r_n)^2 \} \leq {\rm exp}(-nr_n^2/4).$$


\subsection{Omitted Part of Proof of Lemma 2.}

To show statement (iv) of Lemma 2, it suffices to show that for any $\nu'>1$, $\Pr ( c \Lambda > c \nu' n^{1/2} t_n \mid X  )  =  o(\alpha)$,
which follows analogously to the proof of statement (iii).

Indeed, for some constants $ 1 < \nu < \nu'$,
\begin{eqnarray*}
 && \Pr ( c \Lambda > c \nu' n^{1/2} t_n \mid X  ) \text {   } \nonumber \\
& & \leq_{(1)}   p \max_{1 \leq j \leq p} \Pr\{ | n^{1/2} \En (x_{j} \epsilon) | > t_n \nu \mid X \} + \Pr \[ \{\En(\epsilon^2)\}^{1/2} < \nu/\nu' \] \\
& & \leq   p \max_{1 \leq j \leq p} \Pr\{ | n^{1/2} \En (x_{j} \epsilon) | > t_n \nu \mid X \} + \Pr \{ \En(\epsilon^2) < (\nu/\nu')^2\} \\
& & \leq_{(2)}   p \max_{1 \leq j \leq p} \Pr\{ | n^{1/2} \En (x_{j} \epsilon) | > t_n \nu \mid  X \} + o(\alpha) \\
& &  =_{(3)}  2p \ \bar \Phi(t_n \nu  )\{1+ o(1)\} + o(\alpha)  =  o(\alpha)  \\
\end{eqnarray*}
where (1) holds by the union bound; (2) holds by the application
of the Rosenthal and Vonbahr-Esseen inequalities:
$$
\Pr [ \{\En(\epsilon^2)\}^{1/2} < \nu/\nu' ] \lesssim  n^{- \{\frac{q}{4}\wedge (\frac{q}{2}-1)\}} = o(\alpha)
$$
provided that
$$
\alpha^{-1} = o\{n^{ \frac{q}{4}\wedge (\frac{q}{2}-1)}\}.
$$
To verify relation (3), by Condition M and Slastnikov-Rubin-Sethuraman's
theorem on moderate deviations, we have that uniformly
in $ 0 \leq |t| \leq  k \log^{1/2} n$ for some $k^2 < q -2$, uniformly in $1 \leq j \leq p$ and for any $F =F_n \in \mathcal{F}$,
$\Pr\{ n^{1/2} |\En(x_{j} \epsilon)| > t\}/\{2\bar \Phi(t)\} \to 1$,
so the relation (3) holds to for $t = t_n \nu \leq  \{ 2 \log (2p/\alpha) \}^{1/2}  \nu \leq \nu\{\eta (q-2) \log n\}^{1/2}$ for
$\eta<1$ by assumption, provided $\nu$ is set sufficiently close to 1 so that $\nu^2\eta < 1$.

When $q> 4$, for large $n$ we can also bound $\Pr \{ |\En(\epsilon^2-1)|> (\nu/\nu')^2\}$ by $\Pr \{ |\En(\epsilon^2-1)|> r_n \}$
where $r_n = k \{\log n/n\}^{1/2}$, $k^2< q/2-2$,  and invoking the Slastnikov's theorem as previously, which gives
$
\Pr \{ |\En(\epsilon^2-1)|> (\nu/\nu')^2 \} \lesssim n^{-k^2} = o(\alpha) \text{ if } 1/\alpha = o(n^{k^2})= o(n^{\frac{q}{2}-2}).
$

Taking the best conditions on $1/\alpha$ gives the restriction:
\begin{equation}
\alpha^{-1} = o\left( \begin{array}{lll}  n^{q/2-1} & \text{ if } &  2 \leq q \leq 4 \\
 n^{q/4}  & \text{ if }  &  4 < q \leq 8 \\
 n^{q/2-2} & \text{ if } & 8 < q \\
\end{array} \right).
\end{equation}

\section{Tools Used}

\subsection {Rosenthal and Von Bahr-Esseen Inequalities}

\begin{lemma} Let $X_1,\ldots,X_n$ be independent zero-mean random variables, then for $r\geq 2$
$$
E\left(  \left\vert \sum_{i=1}^{n}X_{i}\right\vert ^{r}\right) \leq C(r)  \max\left[ \sum_{t=1}^n \Ep(|X_i|^r),   \left\{\sum_{t=1}^n \Ep(X_i^2)\right\}^{r/2} \right].
$$
\end{lemma}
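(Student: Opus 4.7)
The plan is to prove this Rosenthal-type moment inequality by induction on the exponent $r\geq 2$, combining the Burkholder-Davis-Gundy (BDG) inequality for martingales with a variance decomposition of the random quadratic variation and a Young's inequality balance that extracts the sharp two-term form. Write $A=\sum_{i=1}^n \Ep(X_i^2)$ and $B=\sum_{i=1}^n \Ep(|X_i|^r)$, so the target bound becomes $\Ep|S_n|^r \leq C(r)(A^{r/2}+B)$ with $S_n=\sum_{i=1}^n X_i$; this is equivalent to the stated maximum form up to a factor of two.

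The base case $r=2$ is immediate from orthogonality of independent zero-mean variables: $\Ep S_n^2=A$. For the inductive step with $r>2$, the first move is the discrete BDG inequality applied to the martingale $(S_k)$ whose predictable bracket is $[S]_n=\sum_{i=1}^n X_i^2$, giving $\Ep|S_n|^r \leq c_r\,\Ep[(\sum X_i^2)^{r/2}]$. Decomposing $\sum X_i^2=A+\sum(X_i^2-\Ep X_i^2)$ and applying Minkowski in $L^{r/2}$ (valid since $r/2\geq 1$) yields $\Ep[(\sum X_i^2)^{r/2}]\leq 2^{r/2-1}\{A^{r/2}+\Ep|\sum(X_i^2-\Ep X_i^2)|^{r/2}\}$, so everything reduces to controlling the centred sum of squares.

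The variables $Y_i=X_i^2-\Ep X_i^2$ are independent and zero-mean, so I recurse: for $r\geq 4$ apply Rosenthal at the lower exponent $r/2\geq 2$ (induction hypothesis); for $2<r<4$ use the von Bahr-Esseen inequality $\Ep|\sum Y_i|^s\leq 2\sum \Ep|Y_i|^s$ valid for $1\leq s\leq 2$. Both routes reduce the centred term to $\sum\Ep|Y_i|^{r/2}$ and, in the $r\geq 4$ case, $(\sum\Ep Y_i^2)^{r/4}$. The first is handled by $|Y_i|^{r/2}\leq 2^{r/2-1}(|X_i|^r+(\Ep X_i^2)^{r/2})$ combined with the elementary $(\Ep X_i^2)^{r/2}\leq \Ep|X_i|^r$, producing a $C(r)B$ contribution. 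For the second, $\Ep Y_i^2\leq \Ep X_i^4$, and the log-convexity of $L^p$-norms gives the interpolation $\Ep X_i^4\leq (\Ep X_i^2)^{(r-4)/(r-2)}(\Ep|X_i|^r)^{2/(r-2)}$; H\"older on the sum then yields $\sum \Ep X_i^4\leq A^{(r-4)/(r-2)}B^{2/(r-2)}$, and Young's inequality with conjugate exponents $p=2(r-2)/(r-4)$ and $q=2(r-2)/r$ converts this into $(\sum\Ep X_i^4)^{r/4}\leq A^{r/2}/p+B/q$. Assembling the pieces closes the induction.

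The main obstacle is the constant bookkeeping: each recursion step multiplies by the BDG constant $c_r$ and the lower-exponent Rosenthal constant $C(r/2)$, and one must verify that this recursion terminates with a finite $C(r)$ for every fixed $r\geq 2$. Two secondary subtleties deserve care: the boundary case $r=4$, where $p=\infty$ and the H\"older interpolation degenerates to the trivial $\Ep X_i^4\leq \Ep|X_i|^r$ giving $(\sum \Ep X_i^4)^{r/4}\leq B$ directly; and the range $2<r<4$, where the inner exponent $r/2$ falls below $2$ and one must invoke von Bahr-Esseen in place of the unavailable Rosenthal at that exponent. The sharpness of the two-term max form---rather than the weaker $n^{r/2}$-type bound that a direct Khintchine or naive Minkowski argument would produce---enters precisely through the Young's inequality balance in the final step.
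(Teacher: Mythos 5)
Your proposed proof is correct, but there is nothing in the paper to compare it against: the paper does not prove this lemma at all. It appears in the Supplementary Material under ``Tools Used'' and is imported verbatim with the one-line attribution ``This is due to Rosenthal (1970),'' so the authors treat it as an off-the-shelf result rather than something to be established. Your argument is a legitimate self-contained derivation along the standard modern route: Burkholder's inequality reduces $\Ep|S_n|^r$ to $\Ep[(\sum_i X_i^2)^{r/2}]$, centering the squares and recursing on the halved exponent (with von Bahr--Esseen catching the range $2<r<4$ where the inner exponent drops below $2$) closes the induction, and your interpolation $\Ep X_i^4\leq (\Ep X_i^2)^{(r-4)/(r-2)}(\Ep|X_i|^r)^{2/(r-2)}$ followed by H\"older on the sum and Young's inequality correctly converts the quadratic-variation term into the two-term $A^{r/2}+B$ form (your exponents $p=2(r-2)/(r-4)$ and $q=2(r-2)/r$ are indeed conjugate, and the $r=4$ degeneration is handled properly). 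The recursion terminates after finitely many halvings, so $C(r)$ is finite, and the factor-of-two equivalence between the sum form and the max form is harmless. Two cosmetic points: $\sum_i X_i^2$ is the quadratic variation $[S]_n$, not the \emph{predictable} bracket (that would be $\sum_i\Ep X_i^2$ for independent summands) --- the inequality you invoke is the right one, only the name is off; and the step bounding $\Ep[(\sum X_i^2)^{r/2}]$ is really the pointwise convexity bound $(a+b)^{r/2}\leq 2^{r/2-1}(a^{r/2}+b^{r/2})$ applied to $\sum X_i^2\leq A+|\sum Y_i|$ rather than Minkowski's inequality per se. Neither affects correctness. What your approach buys over the paper's citation is self-containedness; what it costs is poor constants, since $C(r)$ compounds multiplicatively through the recursion, whereas the paper only needs the qualitative finiteness of $C(r)$ for its tail-probability corollary.
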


This is due to \cite{Rosenthal1970}.

\begin{corollary} Let $r \geq 2$, and consider the case of identically distributed zero-mean variables $X_i$ with $\Ep(X_i^2)=1$ and $\Ep(|X_i|^r)$ bounded by $C$.
Then for any $\ell_n \to \infty$
$$
Pr \left(\frac{ |\sum_{i=1}^n X_i|}{n} >  \ell_n n^{-1/2} \right) \leq  \frac{2C(r)C}{\ell_n^{r}} \to 0.
$$
\end{corollary}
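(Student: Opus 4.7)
The plan is to reduce the corollary to the $r$th-moment Markov inequality combined with Rosenthal's bound. First I would rewrite the event as $\{|\sum_{i=1}^n X_i|^r > \ell_n^r n^{r/2}\}$ and apply Markov at order $r$, giving
\[
\Pr\!\left(\frac{|\sum_{i=1}^n X_i|}{n} > \ell_n n^{-1/2}\right) \leq \frac{\Ep(|\sum_{i=1}^n X_i|^r)}{\ell_n^r n^{r/2}}.
\]

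Next I would invoke Rosenthal's inequality (the lemma stated immediately above the corollary) to the i.i.d.\ sum. Since $\Ep(X_i^2)=1$ and $\Ep(|X_i|^r)\leq C$, the two terms inside the Rosenthal maximum are $nC$ and $n^{r/2}$, so
\[
\Ep(|\sum_{i=1}^n X_i|^r) \leq C(r)\,\max(nC,\, n^{r/2}) \leq C(r)\,(nC + n^{r/2}).
\]

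The only substantive quantitative step is to show that $nC + n^{r/2} \leq 2C n^{r/2}$, which is the source of the factor $2C$ in the stated bound. I would argue this in two pieces. By Jensen's inequality applied to the convex function $t\mapsto t^{r/2}$ ($r\geq 2$), $1=\Ep(X_i^2)\leq \Ep(|X_i|^r)^{2/r}\leq C^{2/r}$, hence $C\geq 1$; and since $r\geq 2$ and $n\geq 1$ we have $n\leq n^{r/2}$. Combining: $nC \leq C n^{r/2}$ and $n^{r/2}\leq C n^{r/2}$, so $nC+n^{r/2}\leq 2Cn^{r/2}$. Plugging this back yields
\[
\Pr\!\left(\frac{|\sum_{i=1}^n X_i|}{n} > \ell_n n^{-1/2}\right) \leq \frac{2\,C(r)\,C}{\ell_n^r},
\]
and the right-hand side tends to $0$ because $\ell_n\to\infty$ and $r\geq 2$.

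There is essentially no main obstacle here; the argument is a textbook tail bound. The only care needed is the elementary comparison $nC+n^{r/2}\leq 2Cn^{r/2}$, which relies on the mild but crucial observation $C\geq 1$ obtained from Jensen together with the normalization $\Ep(X_i^2)=1$. No additional structure on the distribution of the $X_i$ is used beyond i.i.d.\ and the two assumed moment conditions.
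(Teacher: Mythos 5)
Your proof is correct and follows essentially the same route as the paper's: Markov's inequality at order $r$ combined with the Rosenthal bound $\Ep\left(\left|\sum_{i=1}^n X_i\right|^r\right)\lesssim C\, n^{r/2}$. The only difference is that you explicitly justify the comparison $\max(nC, n^{r/2})\leq 2Cn^{r/2}$ via the Jensen observation $C\geq 1$, a detail the paper leaves implicit.
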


To verify the corollary, we use Rosenthal's inequality $E\left(  \left\vert \sum_{i=1}^{n}X_{i}\right\vert ^{r}\right) \leq C n^{r/2}$,
and the result follows by Markov inequality, $$ P\left(\frac{|\sum_{i=1}^n X_i|}{n} >c\right) \leq  \frac{ C(r) C n^{r/2} }{c^rn^r} \leq  \frac{C(r) C}{ c^r n^{r/2} }.$$


\begin{lemma} Let $X_1,\ldots,X_n$ be independent zero-mean random variables. Then for $1 \leq r \leq 2$
$$
\Ep \left( \left|  \sum_{i=1}^n X_i\right|^r\right) \leq  (2- n^{-1}) \cdot \sum_{k=1}^n \Ep(|X_k|^r).
$$
\end{lemma}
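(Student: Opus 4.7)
The plan is to prove this classical von Bahr--Esseen inequality by dispensing with the two boundary exponents by elementary means and treating the interior by induction on $n$. For $r=1$, the triangle inequality gives $\EE|\sum_k X_k|\leq\sum_k\EE|X_k|$, with constant $1\leq 2-n^{-1}$. For $r=2$, the mean-zero hypothesis combined with independence yields orthogonality, so $\EE|\sum_k X_k|^2=\sum_k\EE X_k^2$, again with constant $1\leq 2-n^{-1}$. Attention then focuses on $1<r<2$.

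For $1<r<2$ I would induct on $n$. The base $n=1$ is trivially true with equality. For the inductive step, the key ingredient is an elementary pointwise inequality of the form
$$|a+b|^r\leq|a|^r+r\operatorname{sign}(a)|a|^{r-1}b+K_r|b|^r, \qquad a,b\in\RR,$$
valid for $1\leq r\leq 2$. Applied with $a=S_{n-1}=\sum_{i<n}X_i$ and $b=X_n$, combined with its companion obtained by swapping $a$ and $b$, independence together with $\EE X_n=\EE S_{n-1}=0$ kills the linear cross terms. Averaging the two resulting bounds gives a recursion for $\EE|S_n|^r$ in terms of $\EE|S_{n-1}|^r$ and $\EE|X_n|^r$; substituting the inductive hypothesis $\EE|S_{n-1}|^r\leq(2-(n-1)^{-1})\sum_{i<n}\EE|X_i|^r$ and doing the bookkeeping yields the claimed constant $2-n^{-1}$.

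The main obstacle is pinning down the constant $K_r$ in the pointwise inequality so that the recursion closes exactly with the sharp factor $2-n^{-1}$, rather than a looser constant. One must also use the mean-zero assumption in an essential way, since the inequality is false for arbitrary means. As a fallback that still yields a constant of order $2$ (but not the sharper $2-n^{-1}$), I would use the Fourier representation $|x|^r=c_r\int_0^\infty(1-\cos xt)\,t^{-(1+r)}\,dt$, write $\EE|S_n|^r$ as an integral of $1-\mathrm{Re}\prod_j\phi_j(t)$ with $\phi_j$ the characteristic function of $X_j$, and apply $|1-\prod_j z_j|\leq\sum_j|1-z_j|$ on the closed unit disc together with mean-zero estimates on $\int_0^\infty|1-\phi_j(t)|\,t^{-(1+r)}\,dt$.
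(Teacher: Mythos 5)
The paper itself does not prove this lemma: it is quoted as a classical tool and attributed to von Bahr and Esseen (1965), so the benchmark is the classical argument, which runs through a one-step bound $E|X+Y|^r \leq E|X|^r + 2E|Y|^r$ for $Y$ mean-zero and independent of $X$ (proved there essentially by your fallback route, the integral representation of $|x|^r$ via characteristic functions), followed by iteration and an optimization over the order in which the summands are peeled off. Your treatment of the endpoints $r=1$ (triangle inequality) and $r=2$ (orthogonality) is correct, and your main idea for $1<r<2$ — a pointwise expansion whose linear term is killed by independence and $E X_n=0$ — is the right one.

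However, as written the argument has two genuine gaps. First, you leave the constant $K_r$ in $|a+b|^r\leq |a|^r+r\,\mathrm{sign}(a)|a|^{r-1}b+K_r|b|^r$ unresolved and call it the main obstacle; what is actually needed (and true, for all $1\leq r\leq 2$) is $K_r\leq 2$, which requires a proof — e.g.\ reducing by homogeneity to $b=\pm1$ and checking $\sup_{0<t<1}\{(1-t)^r+rt^{r-1}-t^r\}\leq 1+(r-1)^{r-1}\leq 2$ together with the easy regimes $a\geq 0$ and $t\geq 1$. Second, and more importantly, the induction does not close the way you describe: plugging $E|S_{n-1}|^r\leq\{2-(n-1)^{-1}\}\sum_{i<n}E|X_i|^r$ into $E|S_n|^r\leq E|S_{n-1}|^r+K_rE|X_n|^r$ would force $K_r\leq 2-n^{-1}$ on the newly peeled term, which fails for $r$ near $1$ (where $K_r$ approaches $2$); concretely, for $n=2$ and $E|X_2|^r\gg E|X_1|^r$ the fixed-order recursion only yields $E|X_1|^r+2E|X_2|^r$, which exceeds $\tfrac32\{E|X_1|^r+E|X_2|^r\}$. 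The missing idea is the ordering step: iterate the constant-$2$ one-step bound in an order of your choosing (equivalently, in the induction peel off the summand with the smallest $r$th moment), obtaining
$$
E\Big|\sum_{i=1}^n X_i\Big|^r \leq 2\sum_{k=1}^n E|X_k|^r-\max_{1\leq k\leq n}E|X_k|^r\leq \Big(2-\frac1n\Big)\sum_{k=1}^n E|X_k|^r,
$$
since the maximum is at least the average. Finally, your suggestion to average the two swapped pointwise bounds is counterproductive: it yields $E|S_n|^r\leq \tfrac{1+K_r}{2}\{E|S_{n-1}|^r+E|X_n|^r\}$, a multiplicative recursion whose constant compounds geometrically in $n$.
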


This result is due to  \cite{vonBahrEsseen1965}.



\begin{corollary}  Let $r \in [1,2]$, and consider the case of identically distributed zero-mean variables $X_i$ with $\Ep(|X_i|^r)$ bounded by $C$.
Then for any $\ell_n \to \infty$
$$
\Pr \left\{\frac{ \left|\sum_{i=1}^n X_i\right|}{n} >  \ell_n n^{-(1-1/r)} \right\}  \leq \frac{2C}{\ell_n^r} \to 0.
$$
\end{corollary}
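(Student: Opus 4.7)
The plan is to derive the stated bound by combining the Von Bahr--Esseen inequality from the preceding lemma with Markov's inequality, in direct parallel to the proof of the Rosenthal-based corollary that appears just above it.

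First I would apply Von Bahr--Esseen to the i.i.d. zero-mean variables $X_1,\ldots,X_n$. Since each summand satisfies $\Ep(|X_i|^r)\leq C$, the lemma yields
$$\Ep\!\left(\left|\sum_{i=1}^n X_i\right|^r\right) \leq (2-1/n)\,nC \leq 2nC.$$
Next I would apply Markov's inequality at threshold $c_n n$, where $c_n = \ell_n n^{-(1-1/r)}$. Because $x\mapsto |x|^r$ is monotone on $[0,\infty)$, the event $\{|\sum_i X_i|/n > c_n\}$ coincides with $\{|\sum_i X_i|^r > (c_n n)^r\}$, so Markov gives $\Pr(|\sum_i X_i|/n > c_n) \leq \Ep(|\sum_i X_i|^r)/(c_n n)^r$. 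The key bookkeeping step is to observe that $c_n n = \ell_n n^{1/r}$, hence $(c_n n)^r = \ell_n^r\, n$, and the factor of $n$ in the moment bound cancels cleanly, producing $2nC/(\ell_n^r n) = 2C/\ell_n^r$, which tends to zero as $\ell_n \to \infty$.

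There is essentially no obstacle: the argument is a two-line application of the moment inequality from the previous lemma together with Markov's inequality, and the structure mirrors the Rosenthal corollary verified immediately before. The only point requiring any care is the exponent arithmetic $(c_n n)^r = (\ell_n n^{1/r})^r = \ell_n^r n$, which is exactly what arranges the $n$-cancellation needed to recover the stated rate.
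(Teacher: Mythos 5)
Your proof is correct and is essentially the paper's own argument: apply the Von Bahr--Esseen bound $\Ep(|\sum_{i=1}^n X_i|^r)\leq (2-1/n)nC\leq 2nC$ and then Markov's inequality at the threshold $\ell_n n^{1/r}$, whose $r$-th power $\ell_n^r n$ cancels the factor $n$ to give $2C/\ell_n^r$. No gaps; the exponent bookkeeping matches the paper's displayed chain of inequalities.
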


The corollary follow by Markov and Vonbahr-Esseen's inequalities,
$$
\Pr\left(\frac{|\sum_{i=1}^n X_i|}{n} >c\right) \leq  \frac{ C \Ep \left(|  \sum_{i=1}^n X_i|^r\right) }{c^rn^r} \leq  \frac{ n \Ep(|X_i|^r)}{ c^r n^r} \leq  C\frac{\Ep(|X_i|^r)}{ c^r n^{r-1}}.
$$

\subsection{Moderate Deviations for Sums}
Let $X_{ni}, i=1,\ldots,n; n \geq 1$ be a double sequence of row-wise independent random variables with $\Ep( X_{ni}) =0$, $\Ep( X^2_{ni})< \infty$, $i=1,\ldots,k_n$; $n\geq 1$, and $B_n^2 = \sum_{i=1}^{k_n} \Ep(X_{ni}^2) \to \infty$ as $n \to \infty$. Let
$$
F_n(x) = \Pr \left( \sum_{i=1}^{k_n} X_{in}< x B_n \right).
$$

The following result is due to \cite{Slastnikov1979}.
\begin{lemma} \textit{ If for sufficiently large $n$ and some positive constant $c$,
$$
\sum_{i=1}^{k_n} \Ep\{|X_{ni}|^{2+c^2} \rho(|X_{ni}|) \log^{ -(1+c^2)/2}(3+|X_{ni}|)\} \leq g(B_n) B_n^2,
$$
where $\rho(t)$ is slowly varying function monotonically growing to infinity and $g(t) = o\{\rho(t)\}$ as $t \to \infty$, then
$$
1 - F_n(x) \sim 1- \Phi(x), \ \ \ \ F_n(-x) \sim \Phi(-x), \ \ \   n \to \infty,
$$
uniformly in  the region $0 \leq x \leq c \{\log B_n^2\}^{1/2}$.}
\end{lemma}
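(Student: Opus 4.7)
The plan is to follow the classical Cram\'er--Petrov route for moderate deviations, adapted to the triangular-array setting with only a Lindeberg-type weighted moment hypothesis. The strategy combines a truncation step, after which the summands acquire all exponential moments, with the exponential-tilt (conjugate measure) technique and a Berry-Esseen correction applied to the tilted sum. The right- and left-tail conclusions follow from the same analysis applied to $\pm X_{ni}$, so it suffices to treat $1 - F_n(x)$.

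I would first truncate each summand at a level $N_n$ vanishing slowly relative to $B_n$, setting $\tilde X_{ni} = X_{ni}\mathbf{1}\{|X_{ni}|\leq N_n\}$ and $Y_{ni} = \tilde X_{ni} - \Ep(\tilde X_{ni})$, and then show that replacing $\sum X_{ni}$ by $\sum Y_{ni}$ does not disturb the asymptotic ratio. Markov's inequality applied against the hypothesized weighted moment yields
$$\sum_{i=1}^{k_n}\Pr(|X_{ni}| > N_n) \leq \frac{\log^{(1+c^2)/2}(3+N_n)}{\rho(N_n)\, N_n^{2+c^2}} \sum_{i=1}^{k_n}\Ep\bigl\{|X_{ni}|^{2+c^2}\rho(|X_{ni}|)\log^{-(1+c^2)/2}(3+|X_{ni}|)\bigr\},$$
which by hypothesis is bounded by $g(B_n) B_n^2 \log^{(1+c^2)/2}(N_n)/\{\rho(N_n) N_n^{2+c^2}\}$. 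Slow variation of $\rho$, the hypothesis $g(t) = o\{\rho(t)\}$, and a judicious choice of $N_n$ are enough to make this $o(1 - \Phi(x))$ uniformly for $0 \leq x \leq c\{\log B_n^2\}^{1/2}$, since $1-\Phi(x) \gtrsim B_n^{-c^2}/\{\log B_n\}^{1/2}$ in that range. Parallel estimates show that the mean shift $|\sum_i \Ep(\tilde X_{ni})|$ and the variance perturbation $|\sum_i \mathrm{Var}(Y_{ni}) - B_n^2|$ are negligible on the scales $B_n$ and $B_n^2$.

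Next, for the truncated, recentered sum $T_n = \sum_i Y_{ni}$, I would invoke Cram\'er's conjugate-measure technique. Because $|Y_{ni}| \leq 2N_n$, the cumulant generating function $\Psi_n(h) = \sum_i \log \Ep\{\exp(h Y_{ni})\}$ is analytic in a neighborhood of $0$. Choose $h_n > 0$ solving $\Psi_n'(h_n) = xB_n$; Taylor expansion gives $h_n = x/B_n + O(x^2 N_n / B_n^3)$ in the relevant range. Writing
$$\Pr(T_n > xB_n) = \Ep_{Q_n}\bigl[\exp\{-h_n T_n + \Psi_n(h_n)\}\mathbf{1}\{T_n > xB_n\}\bigr],$$
where $Q_n$ is the tilted law, and changing variables $T_n = xB_n + B_n U$ under $Q_n$, I apply a non-uniform Berry--Esseen estimate to $U$, whose Lyapunov ratio is of order $N_n/B_n$, to obtain
$$\Pr(T_n > xB_n) = \{1 + o(1)\}\,\frac{\exp\{-\Lambda^*_n(xB_n)\}}{h_n B_n \sqrt{2\pi}},$$
with $\Lambda^*_n$ the Legendre transform of $\Psi_n$. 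A second Taylor expansion yields $\Lambda^*_n(xB_n) = x^2/2 + O(x^3 N_n/B_n)$, and then the standard Mills-ratio asymptotic $1 - \Phi(x) \sim \phi(x)/x$ delivers the ratio $\Pr(T_n > xB_n)/\{1 - \Phi(x)\} \to 1$ uniformly in the stated range.

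The main obstacle will be the simultaneous control of two competing errors. Truncation demands $N_n$ small, so that the tail probability killed off remains much smaller than $1-\Phi(x)$ at the upper end $x = c\{\log B_n^2\}^{1/2}$; the Cram\'er expansion demands $N_n$ large, since the third-moment remainder enters the exponent as $x^3 N_n/B_n$, which must be $o(1)$ throughout that range. The slowly varying weight $\rho$ and the logarithmic correction $\log^{-(1+c^2)/2}(3+\cdot)$ appearing in the hypothesis are precisely calibrated so that a single $N_n$ (of order somewhat smaller than $B_n$) drives both errors to zero uniformly; verifying this balance rigorously, using Potter-type bounds for $\rho$ together with the hypothesis $g = o(\rho)$, is the technical heart of the argument.
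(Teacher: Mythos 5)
First, a point of comparison: the paper offers no proof of this lemma at all --- it is imported verbatim from \cite{Slastnikov1979} (together with the Rubin--Sethuraman corollary) as an external tool in the ``Tools Used'' section, so your argument cannot be checked against an in-paper proof, only against the hypotheses of the statement itself.

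Measured that way, your Cram\'er--Petrov outline (truncate, tilt, Berry--Esseen on the tilted sum) is the natural first attack, but the truncation step contains a genuine gap, and it is exactly the step you wave through. After Markov's inequality your bound on the discarded probability is
$$\sum_{i=1}^{k_n}\Pr(|X_{ni}|>N_n)\ \leq\ \frac{\log^{(1+c^2)/2}(3+N_n)}{\rho(N_n)\,N_n^{2+c^2}}\;g(B_n)\,B_n^2 .$$
The tilting machinery forces $N_n$ to be of order $B_n$ or smaller (you need $h_nN_n$ bounded with $h_n\asymp x/B_n$ and $x$ up to $c\{\log B_n^2\}^{1/2}$), and with $N_n\asymp B_n$ the ratio of this bound to $1-\Phi(x)$ at the top of the range, where $1-\Phi(x)\asymp B_n^{-c^2}\log^{-1/2}(B_n)$, is of order $g(B_n)\log^{(2+c^2)/2}(B_n)/\rho(B_n)$. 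The hypotheses give only $g=o(\rho)$ with $\rho$ slowly varying and possibly tending to infinity arbitrarily slowly (e.g.\ $\rho(t)=\log\log t$), so this ratio can diverge; shrinking $N_n$ makes the bound strictly worse, and enlarging it destroys the expansion of the tilted cumulant generating function. Hence ``a judicious choice of $N_n$'' does not exist within your scheme: single-level truncation plus a union bound cannot absorb the large summands under the stated weighted $(2+c^2)$-moment condition, and Potter bounds for $\rho$ do not help because the obstruction is a full power of $\log B_n$, not a slowly varying factor. Closing the gap requires a genuinely different treatment of the large values --- for instance the cumulant/characteristic-function method in the style of Statulevi\v{c}ius that Slastnikov actually uses, or a Nagaev--Fuk type decomposition over which summand is large with several truncation levels --- which is presumably why the paper cites the result rather than proving it. A secondary, fixable issue: for $c^2<1$ the hypothesis supplies fewer than three moments, so the Lyapunov ratio of the tilted variables must itself be manufactured from the truncation level, introducing further powers of $N_n$ that have to be tracked through the same balance.
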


The following result is due to \cite{Slastnikov1979} and \cite{rubin:sethuraman}.

\begin{corollary}\textit{ If $q > c^2 + 2$ and
$
\sum_{i=1}^{k_n} \Ep(|X_{ni}|^{q}) \leq K B_n^2,
$
then
$$
1-F_n(x) \sim 1- \Phi(x), \ \ \  F_n(-x) \sim \Phi(-x), \ \ \   n \to \infty,
$$
uniformly in  the region $0 \leq x \leq c \{\log B_n^2\}^{1/2}$.}
\end{corollary}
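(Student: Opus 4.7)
The plan is to deduce the corollary from the preceding Slastnikov lemma by exhibiting a slowly varying function $\rho$ and a function $g$ for which the integrability hypothesis of that lemma is implied by the bare moment bound $\sum_{i=1}^{k_n}\Ep(|X_{ni}|^q)\leq K B_n^2$. The key idea is to choose $\rho$ so that its defining logarithmic weight exactly cancels the $\log^{-(1+c^2)/2}(3+t)$ factor appearing in Slastnikov's hypothesis.

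Concretely, I would set $\rho(t)=\{\log(3+t)\}^{(1+c^2)/2} L(t)$, where $L$ is a simple slowly varying, monotonically unbounded function, e.g., $L(t)=\log\log(e^e+t)$. Then $\rho$ is itself slowly varying, monotonically increasing, and tends to infinity, as required, and the integrand appearing in Slastnikov's hypothesis collapses to $|X_{ni}|^{2+c^2}\rho(|X_{ni}|)\log^{-(1+c^2)/2}(3+|X_{ni}|) = |X_{ni}|^{2+c^2} L(|X_{ni}|)$.

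Next I would control this pointwise by $L(1) X_{ni}^2 + C|X_{ni}|^q$ via a split on whether $|X_{ni}|\leq 1$ or $|X_{ni}|>1$. On the first region $L(|X_{ni}|)\leq L(1)$ and $|X_{ni}|^{2+c^2}\leq X_{ni}^2$ since $c^2\geq 0$ and $|X_{ni}|\leq 1$. On the second region, $q-2-c^2>0$ combined with slow variation of $L$ yields $L(t)/t^{q-2-c^2}\to 0$ as $t\to\infty$, hence the ratio is bounded on $[1,\infty)$ by some constant $C$, giving $|X_{ni}|^{2+c^2}L(|X_{ni}|)\leq C|X_{ni}|^q$. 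Summing over $i$ and invoking $\sum_i\Ep(X_{ni}^2)=B_n^2$ together with the hypothesized $L^q$ moment bound yields $\sum_{i=1}^{k_n}\Ep\{|X_{ni}|^{2+c^2}L(|X_{ni}|)\}\leq \{L(1)+CK\}B_n^2$.

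Taking $g\equiv L(1)+CK$ (a constant function of $B_n$) then verifies the hypothesis of the preceding lemma, since $\rho(B_n)\to\infty$ forces $g(B_n)=o\{\rho(B_n)\}$ trivially. The preceding Slastnikov lemma then delivers the claimed asymptotics $1-F_n(x)\sim 1-\Phi(x)$ and $F_n(-x)\sim\Phi(-x)$ uniformly on $0\leq x \leq c(\log B_n^2)^{1/2}$. The main delicate point is the engineered cancellation in the choice of $\rho$ that lets the residual slowly-varying factor $L(t)$ be absorbed against the sub-polynomial slack $t^{q-2-c^2}$ made available by the strict inequality $q>c^2+2$; once that choice is fixed, the pointwise bound and summation are entirely routine.
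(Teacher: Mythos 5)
Your derivation is correct. Note that the paper itself offers no proof of this corollary: it simply attributes the result to Slastnikov (1979) and Rubin--Sethuraman, adding only the remark that Rubin--Sethuraman proved the fixed-$t$ version and Slastnikov supplied the uniformity and the weaker moment condition. So there is no "paper proof" to match against; what you have done is supply the standard reduction of the polynomial-moment statement to the preceding lemma, and every step checks out. The choice $\rho(t)=\{\log(3+t)\}^{(1+c^2)/2}L(t)$ with $L(t)=\log\log(e^{e}+t)$ is legitimately slowly varying, monotone, and unbounded, and it cancels the logarithmic weight exactly as you say. The split at $|X_{ni}|=1$ is sound: on $[0,1]$ you use $|X_{ni}|^{2+c^2}\leq X_{ni}^2$ (valid since $c>0$) together with $L\leq L(1)$, and on $[1,\infty)$ the bound $t^{2+c^2}L(t)\leq C t^{q}$ follows because $L(t)/t^{q-2-c^2}$ is continuous and tends to zero under the strict inequality $q>c^2+2$, hence is bounded. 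Summing against $\sum_i \Ep(X_{ni}^2)=B_n^2$ and the hypothesis $\sum_i \Ep(|X_{ni}|^q)\leq K B_n^2$ gives a constant $g\equiv L(1)+CK$, which is trivially $o\{\rho(t)\}$ since $\rho\to\infty$; the lemma's hypothesis is only required for large $n$, and yours holds for all $n$. The one thing your write-up buys beyond the paper is a self-contained verification that the corollary really is a corollary of the displayed lemma rather than an independent citation; the argument is the natural one and I see no gap.
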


Remark. Rubin-Sethuraman derived the corollary for $x= t\{\log B_n^2\}^{1/2}$ for fixed $t$. Slastnikov's result adds uniformity
and relaxes the moment assumption.




%




\subsection{Moderate Deviations for Self-Normalizing Sums}

We shall be using the following two technical results. The first follows from Theorem 7.4 in \cite{PLS2009} which is based on \cite{JingShaoWang2003}.

\begin{lemma} \label{Lemma: MDSN} Let $X_{1,n},\ldots,X_{n,n}$ be the triangular array of independent non-identically distributed zero-mean random variables. Suppose $$M_n = \frac{ \{\frac{1}{n}\sum_{i=1}^n\Ep(X_{i,n}^2)\}^{1/2}}{ \{\frac{1}{n}\sum_{i=1}^n\Ep(|X_{i,n}|^3)\}^{1/3} }>0 \ \ \mbox{
and for some $\ell_n \to \infty$ we have } \ \
 n^{1/6} M_n/\ell_n \geq 1.
$$
Then there is a universal constant $A$ such that uniformly on
$
0 \leq  x \leq  n^{1/6} M_n/\ell_n-1$, the
quantities
$$
S_{n,n} = \sum_{i=1}^n X_{i,n},  \ \ V^2_{n,n} = \sum_{i=1}^n X^2_{i,n}$$
obey
$$
\left |\frac{\Pr(|S_{n,n}/V_{n,n}|  \geq x) }{ 2 \bar \Phi(x)} - 1 \right |  \leq
\frac{A}{ \ell_n^3} \to 0.
$$
\end{lemma}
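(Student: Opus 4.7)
The plan is to reduce the lemma to the self-normalized moderate deviation theorem that already appears in the literature, specifically Theorem 7.4 of Pena, Lai, and Shao (2009), which is the refined version of the main result of Jing, Shao, and Wang (2003). The quantity $M_n$ is essentially the reciprocal of the Liapounov third-moment ratio for the row $X_{1,n},\ldots,X_{n,n}$: after scaling by $n^{1/6}$, one checks that $n^{1/6}M_n$ is proportional to the Berry--Esseen-type quantity $\{n^{-1}\sum \Ep(X_{i,n}^2)\}^{1/2}\big/\{n^{-1}\sum\Ep(|X_{i,n}|^3)\}^{1/3}\cdot n^{1/6}$, which is precisely the natural width of the moderate-deviation window in the self-normalized framework.

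First I would align the notation: the self-normalized sum $S_{n,n}/V_{n,n}$ in our statement is exactly the object treated in the cited references, with independent summands in each row (the i.i.d.\ assumption here is stronger than what the reference needs, so there is no obstruction on that side). Second, I would invoke the uniform bound from the reference, which says that for some absolute constant $A$,
\[
\left|\frac{\Pr(|S_{n,n}/V_{n,n}|\geq x)}{2\bar\Phi(x)} - 1\right| \leq \frac{A(1+x)^3}{(n^{1/6}M_n)^3},
\]
valid uniformly for $0\leq x \leq n^{1/6}M_n/\ell_n - 1$ whenever $n^{1/6}M_n/\ell_n\geq 1$ with $\ell_n\to\infty$. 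This is exactly the regime imposed in the hypotheses of the lemma.

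Third, I would substitute the worst-case value $1+x \leq n^{1/6}M_n/\ell_n$ into the bound: this immediately collapses the right-hand side to $A/\ell_n^3$, which is the exact conclusion. Since $\ell_n\to\infty$, the bound tends to zero as required. No additional probabilistic work is needed beyond citing and applying the reference.

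The main (and essentially only) non-mechanical step is verifying that Theorem 7.4 of Pena, Lai, Shao (2009) produces the cubic error form $(1+x)^3/(n^{1/6}M_n)^3$ uniformly on our range; once this is granted, the lemma is a one-line substitution. Hence I do not anticipate a serious obstacle: the statement is best viewed as a quantitative corollary of the self-normalized moderate deviation principle, tailored into a form directly usable in the proof of Lemma~\ref{Lemma: non-normalMprimeRprime}, where it is applied with $X_{i,n}=x_{ij}\epsilon_i$ to control the maximum of the self-normalized scores uniformly over $j=1,\ldots,p$.
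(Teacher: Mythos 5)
Your proposal is correct and follows essentially the same route as the paper, which states this lemma as a direct consequence of Theorem 7.4 of Pe\~na, Lai, and Shao (2009), itself based on Jing, Shao, and Wang (2003). Your identification $d_{n,3}=n^{1/6}M_n$ for the moderate-deviation range and the substitution $1+x\leq n^{1/6}M_n/\ell_n$ yielding the $A/\ell_n^3$ bound is exactly the intended reduction.
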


The second follows from the proof of Lemma 10 given in the working paper ``Pivotal Estimation of Nonparametric Functions via Square-root Lasso", arXiv 1105.1475v2, by the authors, which is based on symmetrization arguments.

\begin{lemma}\label{Lemma:ProcessSecond}
Let $\epsilon_i$ ($i=1,\ldots,n$) be independent identically distributed random variables such that $\Ep(\epsilon_i^2)=1$ and $\sup_{n\geq 1}\Ep(|\epsilon_i|^q)<\infty$ for $q\geq 4$. Conditional on $x_1,\ldots,x_n \in \RR^p$, with probability $1-4\tau_1-4\tau_2$
$$\max_{1\leq j\leq p}|\En\{x_{j}^2(\epsilon^2-1)\}| \leq 4\left\{\frac{2\log(2p/\tau_1)}{n}\right\}^{1/2} \left\{\frac{\Ep(|\epsilon_i|^q)}{\tau_2}\right\}^{2/q} \max_{1\leq i\leq n}\|x_{i}\|_\infty^2.$$
\end{lemma}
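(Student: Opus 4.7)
The plan is to bound $\max_{1\leq j\leq p}|Z_j|$, where $Z_j := \En\{x_j^2(\epsilon^2-1)\}$, by a three-stage argument: a symmetrization step that reduces the problem to controlling a Rademacher average, a truncation of the errors at level $M := \{\Ep|\epsilon|^q/\tau_2\}^{1/q}$ (the smallest level at which a single $\epsilon_i$ exceeds $M$ with probability at most $\tau_2$, by Markov's inequality), and a conditional Hoeffding bound followed by a union bound over $j$ for the bounded part, plus a Markov-moment argument for the tail part.

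Concretely, since $\Ep[x_{ij}^2(\epsilon_i^2-1)\mid X] = 0$, the standard symmetrization inequality (e.g.\ Lemma 2.3.7 of van der Vaart and Wellner) yields, for any $t>0$,
$$\Pr\bigl(\max_{j}|Z_j|> t \mid X\bigr) \leq 2\Pr\bigl(\max_j |Z_j^{\mathrm{sym}}| > t/2\mid X\bigr),\qquad Z_j^{\mathrm{sym}} := \frac{1}{n}\sum_{i=1}^n \eta_i x_{ij}^2(\epsilon_i^2-1),$$
with $\eta_i$ i.i.d.\ Rademacher signs independent of $(\epsilon_i)$. I would then split $\epsilon_i^2 = \bar\epsilon_i^2 + \tilde\epsilon_i^2$ with $\bar\epsilon_i = \epsilon_i \cdot 1(|\epsilon_i|\leq M)$, and correspondingly $Z_j^{\mathrm{sym}} = Z_j^{(1)} + Z_j^{(2)}$. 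For the bounded piece, $|x_{ij}^2 \bar\epsilon_i^2|\leq L := M^2\max_i\|x_i\|_\infty^2$, so Hoeffding's inequality applied conditionally on $(\epsilon_i)$, followed by a union bound over $j$, yields $\max_j|Z_j^{(1)}|\leq L\sqrt{2\log(2p/\tau_1)/n}$ on an event of probability at least $1-\tau_1$.

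For the tail piece $Z_j^{(2)}$, I would use a Markov-type argument at the $q/2$-th moment, exploiting that $\Ep[\epsilon^2 \cdot 1(|\epsilon|>M)]\leq M^{2-q}\Ep|\epsilon|^q = \tau_2 M^2$; this controls both the centring contribution and the empirical average $\En(\epsilon^2 \cdot 1(|\epsilon|>M))$, each on an event of probability at least $1-\tau_2$. Collecting all pieces and absorbing the factor of $2$ from symmetrization together with the factor of $2$ from the two-piece split gives the stated constant $4$ in front of the bound and the failure probability $4\tau_1 + 4\tau_2$. The main obstacle is controlling the tail piece while paying only $O(\tau_2)$, rather than $O(n\tau_2)$, in probability: a naive uniform truncation forcing $\max_i|\epsilon_i|\leq M$ would cost $n\tau_2$ by union bound and destroy the statement. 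The key observation is that pointwise truncation of each $\epsilon_i$ is unnecessary; only an expectation-level control on $\En(\epsilon^2 \cdot 1(|\epsilon|>M))$ is needed, and Markov at moment $q/2$ delivers this control at cost $O(\tau_2)$, matching the form of the claimed bound.
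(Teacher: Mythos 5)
Your overall skeleton (symmetrization, then a conditional Hoeffding bound with a union bound over $j$) is the same route the paper points to — note the paper does not reproduce a proof of this lemma but defers to the symmetrization-based proof of Lemma 10 of the cited arXiv working paper — but your treatment of the tail piece has a genuine gap. Once you bound $|Z_j^{(2)}|\leq \max_{i}\|x_i\|_\infty^2\,\En\{\epsilon^2 1(|\epsilon|>M)\}$ you have discarded the Rademacher signs, and no Markov-type argument at any moment can make $\En\{\epsilon^2 1(|\epsilon|>M)\}$ smaller than the needed level $c\,M^2\{2\log(2p/\tau_1)/n\}^{1/2}$ with probability $1-O(\tau_2)$: this empirical average concentrates around its mean $\Ep\{\epsilon^2 1(|\epsilon|>M)\}$, and the assumptions only force that mean to be at most $\tau_2 M^2\wedge \Ep(\epsilon^2)$, which can be of constant order. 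For instance, take $\epsilon=\pm b$ with probability $\pi/2$ each and $\pm a$ otherwise, with $b^2\pi=a^2(1-\pi)=1/2$, $\pi=10^{-3}$, $\tau_2=0.1$: then $a<M<b$, so $\Ep\{\epsilon^21(|\epsilon|>M)\}=1/2$, and for large $n$ the empirical tail average is close to $1/2$ with probability tending to one, while $M^2\{2\log(2p/\tau_1)/n\}^{1/2}\to 0$ and the allowed failure probability is only $4\tau_1+4\tau_2<1$. So the claim that ``Markov at moment $q/2$ delivers this control at cost $O(\tau_2)$'' is not a matter of choosing the right moment — the quantity you need to be small genuinely is not small. (The lemma itself survives in this example precisely because, with the signs kept, the large $\epsilon_i$ enter only through the sub-Gaussian scale $\{\En(x_j^4\epsilon^4)/n\}^{1/2}$.) A secondary accounting issue: the symmetrization you need is the probability version (van der Vaart--Wellner Lemma 2.3.7), which reads $\beta_n(t)\Pr(\max_j|Z_j|>t)\leq 2\Pr(\max_j|Z_j^{\mathrm{sym}}|>t/4)$ and requires lower-bounding $\beta_n(t)\geq 1/2$ by Chebyshev; this, not your two factors of $2$, is what produces the constant $4$ and the $4\tau_1+4\tau_2$ budget.

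The repair, which is also the natural reading of the paper's intended argument, is not to truncate inside the symmetrized sum at all. After symmetrization, apply Hoeffding conditionally on the whole vector $(\epsilon_i)$, so that the union bound over $j$ costs $\tau_1$ once the threshold is $\max_i\|x_i\|_\infty^2\,\{\En[(\epsilon^2-1)^2]\}^{1/2}\{2\log(2p/\tau_1)/n\}^{1/2}$; then control the random scale in one stroke via the deterministic inequality $\En(\epsilon^4)\leq\{\En(|\epsilon|^q)\}^{4/q}$ (Jensen, valid for $q\geq 4$) together with the single Markov bound $\Pr\{\En(|\epsilon|^q)>\Ep(|\epsilon|^q)/\tau_2\}\leq\tau_2$. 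This is exactly where the quantile-type factor $\{\Ep(|\epsilon|^q)/\tau_2\}^{2/q}$ comes from at probability cost $\tau_2$, with no $n^{2/q}$ inflation — the pitfall you correctly identified but did not actually avoid. Finally, note that the displayed statement is missing an exponent $1/2$ on $2\log(2p/\tau_1)/n$ (compare how the lemma is invoked in the proof of Lemma 3); the square-root version is what your Hoeffding step produces and is the one you should be proving.
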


\bibliographystyle{plain}
\bibliography{biblioSqLASSO}

\end{document}